\newtheorem{definition}{Definition}
\newtheorem{proposition}[definition]{Proposition}
\newtheorem{lemma}[definition]{Lemma}
\newtheorem{theorem}[definition]{Theorem}
\newtheorem{corollary}[definition]{Corollary}
\newtheorem{conjecture}[definition]{Conjecture}
\newtheorem{remark}[definition]{Remark}
\newtheorem{example}[definition]{Example}
\newtheorem{question}[definition]{Question}
\def\bcj{\begin{conjecture}}
\def\ecj{\end{conjecture}}
\def\bcr{\begin{corollary}}
\def\ecr{\end{corollary}}
\def\bd{\begin{definition}}
\def\ed{\end{definition}}
\def\bea{\begin{eqnarray}}
\def\eea{\end{eqnarray}}
\def\bem{\begin{enumerate}}
\def\eem{\end{enumerate}}
\def\bex{\begin{example}}
\def\eex{\end{example}}
\def\bim{\begin{itemize}}
\def\eim{\end{itemize}}
\def\bl{\begin{lemma}}
\def\el{\end{lemma}}
\def\bma{\begin{bmatrix}}
\def\ema{\end{bmatrix}}
\def\bpf{\begin{proof}}
\def\epf{\end{proof}}
\def\bpp{\begin{proposition}}
\def\epp{\end{proposition}}
\def\bqu{\begin{question}}
\def\equ{\end{question}}
\def\br{\begin{remark}}
\def\er{\end{remark}}
\def\bt{\begin{theorem}}
\def\et{\end{theorem}}
\def\squareforqed{\hbox{\rlap{$\sqcap$}$\sqcup$}}
\def\qed{\ifmmode\squareforqed\else{\unskip\nobreak\hfil
\penalty50\hskip1em\null\nobreak\hfil\squareforqed
\parfillskip=0pt\finalhyphendemerits=0\endgraf}\fi}
\def\endenv{\ifmmode\;\else{\unskip\nobreak\hfil
\penalty50\hskip1em\null\nobreak\hfil\;
\parfillskip=0pt\finalhyphendemerits=0\endgraf}\fi}
\newenvironment{proof}{\noindent \textbf{{Proof.~} }}{\qed}
\def\Dbar{\leavevmode\lower.6ex\hbox to 0pt
{\hskip-.23ex\accent"16\hss}D}
\def\url@leostyle{%
  \@ifundefined{selectfont}{\def\UrlFont{\sf}}{\def\UrlFont{\small\ttfamily}}}
\def\a{\alpha}
\def\g{\gamma}
\def\d{\delta}
\def\e{\epsilon}
\def\ve{\varepsilon}
\def\t{\theta}
\def\k{\kappa}
\def\l{\lambda}
\def\m{\mu}
\def\p{\pi}
\def\r{\rho}
\def\s{\sigma}
\def\ps{\psi}
\def\D{\Delta}
\def\T{\Theta}
\def\Ph{\Phi}
\newcommand{\nc}{\newcommand}
 \nc{\bbA}{\mathbb{A}} \nc{\bbB}{\mathbb{B}} \nc{\bbC}{\mathbb{C}}
 \nc{\bbD}{\mathbb{D}} \nc{\bbE}{\mathbb{E}} \nc{\bbF}{\mathbb{F}}
 \nc{\bbG}{\mathbb{G}} \nc{\bbH}{\mathbb{H}} \nc{\bbI}{\mathbb{I}}
 \nc{\bbJ}{\mathbb{J}} \nc{\bbK}{\mathbb{K}} \nc{\bbL}{\mathbb{L}}
 \nc{\bbM}{\mathbb{M}} \nc{\bbN}{\mathbb{N}} \nc{\bbO}{\mathbb{O}}
 \nc{\bbP}{\mathbb{P}} \nc{\bbQ}{\mathbb{Q}} \nc{\bbR}{\mathbb{R}}
 \nc{\bbS}{\mathbb{S}} \nc{\bbT}{\mathbb{T}} \nc{\bbU}{\mathbb{U}}
 \nc{\bbV}{\mathbb{V}} \nc{\bbW}{\mathbb{W}} \nc{\bbX}{\mathbb{X}}
 \nc{\bbZ}{\mathbb{Z}}
 \nc{\bA}{{\bf A}} \nc{\bB}{{\bf B}} \nc{\bC}{{\bf C}}
 \nc{\bD}{{\bf D}} \nc{\bE}{{\bf E}} \nc{\bF}{{\bf F}}
 \nc{\bG}{{\bf G}} \nc{\bH}{{\bf H}} \nc{\bI}{{\bf I}}
 \nc{\bJ}{{\bf J}} \nc{\bK}{{\bf K}} \nc{\bL}{{\bf L}}
 \nc{\bM}{{\bf M}} \nc{\bN}{{\bf N}} \nc{\bO}{{\bf O}}
 \nc{\bP}{{\bf P}} \nc{\bQ}{{\bf Q}} \nc{\bR}{{\bf R}}
 \nc{\bS}{{\bf S}} \nc{\bT}{{\bf T}} \nc{\bU}{{\bf U}}
 \nc{\bV}{{\bf V}} \nc{\bW}{{\bf W}} \nc{\bX}{{\bf X}}
 \nc{\bZ}{{\bf Z}}
\nc{\cA}{{\cal A}} \nc{\cB}{{\cal B}} \nc{\cC}{{\cal C}}
\nc{\cD}{{\cal D}} \nc{\cE}{{\cal E}} \nc{\cF}{{\cal F}}
\nc{\cG}{{\cal G}} \nc{\cH}{{\cal H}} \nc{\cI}{{\cal I}}
\nc{\cJ}{{\cal J}} \nc{\cK}{{\cal K}} \nc{\cL}{{\cal L}}
\nc{\cM}{{\cal M}} \nc{\cN}{{\cal N}} \nc{\cO}{{\cal O}}
\nc{\cP}{{\cal P}} \nc{\cQ}{{\cal Q}} \nc{\cR}{{\cal R}}
\nc{\cS}{{\cal S}} \nc{\cT}{{\cal T}} \nc{\cU}{{\cal U}}
\nc{\cV}{{\cal V}} \nc{\cW}{{\cal W}} \nc{\cX}{{\cal X}}
\nc{\cZ}{{\cal Z}}
\nc{\hA}{{\hat{A}}} \nc{\hB}{{\hat{B}}} \nc{\hC}{{\hat{C}}}
\nc{\hD}{{\hat{D}}} \nc{\hE}{{\hat{E}}} \nc{\hF}{{\hat{F}}}
\nc{\hG}{{\hat{G}}} \nc{\hH}{{\hat{H}}} \nc{\hI}{{\hat{I}}}
\nc{\hJ}{{\hat{J}}} \nc{\hK}{{\hat{K}}} \nc{\hL}{{\hat{L}}}
\nc{\hM}{{\hat{M}}} \nc{\hN}{{\hat{N}}} \nc{\hO}{{\hat{O}}}
\nc{\hP}{{\hat{P}}} \nc{\hR}{{\hat{R}}} \nc{\hS}{{\hat{S}}}
\nc{\hT}{{\hat{T}}} \nc{\hU}{{\hat{U}}} \nc{\hV}{{\hat{V}}}
\nc{\hW}{{\hat{W}}} \nc{\hX}{{\hat{X}}} \nc{\hZ}{{\hat{Z}}}
\nc{\hn}{{\hat{n}}}
\def\dim{\mathop{\rm Dim}}
\def\max{\mathop{\rm max}}
\def\min{\mathop{\rm min}}
\def\tr{\mathop{\rm Tr}}
\def\bigox{\bigotimes}
\def\dg{\dagger}
\def\ox{\otimes}
\def\ra{\rightarrow}
\newcommand{\bra}[1]{\langle#1|}
\newcommand{\ket}[1]{|#1\rangle}
\newcommand{\norm}[1]{\lVert#1\rVert}
\newcommand{\by}{\mathbf{y}}
\newcommand{\bh}{\mathbf{h}}
\newcommand{\bb}{\mathbf{b}}
\newcommand{\ba}{\mathbf{a}}
\newcommand{\bm}{\mathbf{m}}
\newcommand{\bs}{\boldsymbol{\sigma}}
\def\Dbar{\leavevmode\lower.6ex\hbox to 0pt
{\hskip-.23ex\accent"16\hss}D}
\begin{document}
\title{Physics-Informed Neural Networks with Adaptive Constraints for Multi-Qubit Quantum Tomography}

\newdateformat{ukdate}{\ordinaldate{\THEDAY} \monthname[\THEMONTH] \THEYEAR}
\date{\ukdate\today}

\pacs{03.67.-a, 03.65.Ud}

\author{Changchun Feng}
\affiliation{Hangzhou International Innovation Institute, Beihang University, Hangzhou, China}

\author{Laifa Tao}\email[]{taolaifa@buaa.edu.cn (corresponding author)}
\affiliation{Hangzhou International Innovation Institute, Beihang University, Hangzhou, China}
\affiliation{School of Reliability and Systems Engineering, Beihang University, Beijing, China}
\affiliation{Institute of Reliability Engineering, Beihang University, Beijing, China}
\affiliation{Science $\&$ Technology on Reliability $\&$ Environmental Engineering Laboratory, Beijing, China}

\author{Lin Chen}\email[]{linchen@buaa.edu.cn (corresponding author)}
\affiliation{LMIB(Beihang University), Ministry of Education, and School of Mathematical Sciences, 
Beihang University, Beijing 100191, China}

\begin{abstract}
Quantum state tomography (QST) faces exponential measurement requirements and noise sensitivity in multi-qubit systems, creating fundamental bottlenecks in practical quantum technologies. This paper presents a physics-informed neural network (PINN) framework integrating quantum mechanical constraints through adaptive constraint weighting, a residual-and-attention-enhanced architecture, and differentiable Cholesky parameterization ensuring physical validity. Comprehensive evaluations across 2--5 qubit systems and arbitrary-dimensional quantum states demonstrate PINN consistently outperforms Traditional Neural Network, achieving the highest fidelity at every tested dimension. PINN exhibits significantly higher fidelity compared to baseline methods, with particularly dramatic improvements in moderately high-dimensional systems. The approach demonstrates superior noise robustness, showing substantially slower performance degradation under increasing noise levels compared to traditional neural networks. Across diverse dimensional configurations, PINN achieves consistent improvements over Traditional Neural Network, demonstrating superior dimensional robustness. Theoretical analysis establishes that physics constraints reduce Rademacher complexity and mitigate the curse of dimensionality through constraint-induced dimension reduction and sample complexity reduction mechanisms that remain effective regardless of qubit number. While experimental validation is conducted up to 5-qubit systems due to computational constraints, the theoretical framework---including convergence guarantees, generalization bounds, and scalability theorems---provides rigorous justification for expecting PINN's advantages to persist and strengthen in larger systems (6 qubits and beyond), where the relative benefit of constraint-induced dimension reduction grows with system size. The practical significance extends to quantum error correction and gate calibration, where PINN reduces measurement requirements from $O(4^n)$ to $O(2^n)$ while maintaining high fidelity, enabling faster error correction cycles and accelerated calibration workflows critical for scalable quantum computing.

\par\textbf{Keywords: } quantum state tomography, physics-informed neural networks, quantum machine learning, density matrix reconstruction, adaptive constraints
\end{abstract}

\maketitle

\section{Introduction}

Quantum state tomography (QST) serves as a fundamental tool in quantum 
information science, enabling validation of quantum processors, calibration 
of quantum gates, and monitoring of decoherence in distributed quantum 
links 
\cite{PRXQuantum.2.010318,PhysRevLett.84.5748,2010CMP,PhysRevLett.105.150401,10417060,
PhysRevLett.126.100402,PhysRevLett.109.120403,PhysRevLett.91.090406}. QST reconstructs 
the complete density matrix of an $n$-qubit system from informationally 
complete measurements, thereby enabling downstream tasks such as entanglement certification, 
resource quantification, and feedback 
control \cite{PhysRevA.66.012303,PhysRevA.101.052316}. However, the exponential growth of measurement 
requirements with qubit number ($O(4^n)$) creates fundamental bottlenecks that severely limit the feasibility 
of QST for multi-qubit systems, where measurement overhead becomes prohibitive. Furthermore, realistic noise 
conditions—including amplitude damping, phase decoherence, correlated $Z$-dephasing, and crosstalk—rapidly degrade 
reconstruction accuracy \cite{PhysRevA.109.022413,PhysRevLett.122.190401}, mirroring challenges encountered in multipartite entanglement certification that demand data-efficient and structure-aware learning strategies \cite{PhysRevResearch.6.023250}.

The critical importance of efficient QST extends beyond fundamental characterization 
to practical quantum technologies that are essential for scalable quantum computing. 
In quantum error correction protocols, real-time state reconstruction is required for 
syndrome extraction and error detection, where measurement efficiency directly impacts 
error correction cycle times and determines the feasibility of fault-tolerant quantum 
computation \cite{PhysRevA.107.062409,CSL2024}. Current surface code implementations 
require frequent state tomography to monitor logical qubit states, but traditional QST 
methods impose prohibitive measurement overhead ($O(4^n)$) that limits 
code performance and prevents real-time error correction \cite{PRXQuantum.2.010318}. 
Similarly, quantum gate calibration relies on iterative state reconstruction to 
optimize gate parameters, where the exponential measurement scaling of classical 
methods creates bottlenecks in calibration workflows, significantly increasing the 
time and cost required for quantum hardware optimization \cite{fcc_ctp2024,JBP2017}. 
The development of measurement-efficient QST methods is therefore not merely an 
academic exercise but a prerequisite for scalable quantum computing, directly addressing 
fundamental bottlenecks that have limited the development of large-scale quantum systems.

To address these challenges, existing approaches to quantum state tomography can be 
systematically categorized into three classes: classical optimization-based methods, 
machine-learning-based estimators, and hybrid approaches \cite{PhysRevLett.127.260401,Wu2023overlappedgrouping}.
 Traditional classical methods include least squares reconstruction using linear 
 inversion \cite{PhysRevLett.105.150401}, maximum-likelihood estimation (MLE) with 
 iterative optimization algorithms such as R$\rho$R \cite{PhysRevLett.105.150401}, compressed 
 sensing \cite{PhysRevLett.109.120403}, and Bayesian reconstruction. These 
 methods reduce measurement requirements by assuming sparsity or low rank. However, 
 they face fundamental limitations: they require explicit optimization over the full 
 parameter space, leading to computational complexity that scales exponentially with 
 qubit number \cite{Zhang2022Variational,PhysRevResearch.2.043158}. Once the qubit count exceeds four, 
 classical methods face an unavoidable trade-off between accuracy, runtime, and physical validity \cite{PRXQuantum.2.020348,titchener2018scalable}. 
 The computational burden becomes prohibitive for larger systems, and these methods struggle to maintain physical 
 validity while achieving high reconstruction fidelity \cite{MENG2023106661}.

To overcome these limitations, recent advances have explored neural network approaches that amortize inference over large datasets 
and can learn nonlinear priors \cite{2018TGMG}. Neural networks have demonstrated success in related quantum tasks 
including quantum state classification \cite{PERALGARCIA2024100619,schuld2019supervised}, 
gate optimization \cite{fcc_ctp2024,JBP2017}, and error 
correction \cite{CSL2024,KBD2020,PhysRevA.107.062409}. Early QST studies 
report higher fidelity with fewer measurement settings compared to classical
 methods \cite{PhysRevA.106.012409,PhysRevLett.127.260401}. However, purely 
 data-driven models face three critical limitations: (1) they require many 
 labeled samples for training, limiting practical applicability; (2) they 
 generalize poorly to distribution shifts, failing when test conditions differ 
 from training conditions; (3) they may output nonphysical density matrices because 
 fundamental quantum mechanical constraints—Hermiticity, trace normalization, and 
 positivity—are not explicitly enforced \cite{PhysRevA.106.012409,PhysRevA.108.022427,10564586,PERALGARCIA2024100619,10.1093/nsr/nwaf269}. The mismatch between training and deployment conditions is particularly severe on real quantum hardware, where drift and non-Markovian noise constantly change measurement statistics, rendering purely data-driven approaches unreliable.

To address these fundamental limitations, physics-informed neural networks (PINNs) represent 
an emerging paradigm that combines data-driven learning with 
domain knowledge \cite{RAISSI2019686,YUAN2022111260}. Similar to 
semisupervised entanglement detection that mixes witnesses with 
data \cite{2024JGC,YW2024}, PINNs penalize violations of governing 
equations and maintain robustness when data are scarce or noisy. However, 
their application to QST remains limited. 
Our physics-informed neural network approach addresses these practical 
challenges by achieving high-fidelity reconstruction with significantly 
reduced measurement requirements. The adaptive constraint weighting mechanism 
enables robust performance under realistic noise conditions encountered in quantum 
hardware, automatically adjusting constraint strength based on noise severity to 
balance physical validity with data fitting. The Cholesky parameterization ensures 
physical validity critical for downstream error correction and calibration tasks, 
guaranteeing that all reconstructed density matrices satisfy fundamental quantum 
mechanical constraints. These capabilities position PINN as a practical tool for 
real-world quantum information processing applications, directly enabling faster 
error correction cycles in quantum error correction protocols and accelerated 
calibration workflows in quantum gate optimization. The comprehensive experimental 
validation presented in this work demonstrates that PINN maintains consistent advantages 
across diverse dimensional regimes, from standard qubit systems to arbitrary-dimensional 
quantum states, establishing its broad applicability for practical quantum technologies. 
Although experimental evaluation is conducted up to 5-qubit systems due to computational resource constraints, the theoretical framework provides rigorous justification for expecting PINN's advantages to persist and even strengthen in larger systems (6 qubits and beyond). The established convergence guarantees, generalization bounds, and scalability theorems demonstrate that physics constraints mitigate the curse of dimensionality through constraint space dimension reduction and sample complexity reduction mechanisms that remain effective regardless of qubit number. These theoretical guarantees ensure that the benefits of physics-informed learning become increasingly valuable as system size increases, positioning PINN as a scalable approach for quantum state tomography that can be confidently deployed in practical quantum computing applications.

This paper presents a PINN framework for multi-qubit QST that bridges purely 
data-driven neural estimators and traditional physics-based solvers. The contributions 
are threefold. (1) Methodologically, an adaptive physics-constraint weighting strategy, 
a residual-and-attention-enhanced multi-task PINN architecture, and a differentiable 
Cholesky parameterization that enforces positive semidefiniteness are developed. 
(2) Experimentally, comprehensive evaluations across 2--5 qubit systems and arbitrary-dimensional quantum states (up to $10 \times 10$) comparing PINN against baseline methods demonstrate consistent fidelity improvements across all tested dimensions. The most dramatic advantage of 30.3\% occurs in 4-qubit systems compared to Traditional Neural Network (0.8872 vs 0.6810). Across nine dimensional configurations, PINN achieves the highest fidelity at every dimension, with an average improvement of +2.40\% over Traditional Neural Network. PINN exhibits superior dimensional robustness (41.5\% performance variation) and demonstrates a 2.6× slower performance degradation rate under noise compared to baseline methods. (3) Theoretically, convergence and generalization bounds are established that explain how physics priors reduce Rademacher complexity and mitigate the curse of dimensionality, providing rigorous justification for the observed empirical advantages. While experimental validation is conducted up to 5-qubit systems due to computational resource constraints, the theoretical framework establishes that PINN's advantages extend to higher-dimensional systems (6 qubits and beyond) through constraint space dimension reduction and sample complexity reduction mechanisms that remain effective regardless of qubit number.

The remainder of this paper is organized as follows: Section \ref{sec:pri} reviews 
the theoretical foundations of quantum state tomography and physics-informed 
neural networks, establishing the mathematical framework and notation. 
Section \ref{sec:method} details the PINN architecture, including adaptive constraint weighting mechanisms, residual-and-attention-enhanced design, and Cholesky parameterization, along with training procedures. Section \ref{sec:experiments} presents comprehensive experimental validation through three complementary studies: (1) high-fidelity quantum link entanglement health monitoring that validates practical applicability under realistic noise conditions, (2) systematic scalability evaluation across standard qubit systems (2--5 qubits) and arbitrary-dimensional quantum states (up to $10 \times 10$), and (3) ablation studies that isolate and quantify the contribution of each architectural component. These experiments systematically compare PINN against Traditional Neural Network, Least Squares, and Maximum Likelihood Estimation, revealing consistent advantages across diverse system configurations and dimensional regimes. Section \ref{sec:theory} provides rigorous theoretical analysis establishing convergence guarantees, generalization bounds, and scalability properties that explain the observed empirical advantages, particularly the dimensional threshold effects and robustness improvements. Section \ref{sec:results} discusses the implications of the results, limitations, and future research directions. Section \ref{sec:applications} details practical applications in quantum error correction and gate calibration, demonstrating the real-world impact of the approach through quantified measurement reduction and speed improvements. Finally, Section \ref{sec:conclusion} summarizes the key contributions and their significance for quantum state tomography and quantum information processing, highlighting how the comprehensive experimental validation establishes PINN as a practical tool for scalable quantum computing applications.

\section{Preliminary}
\label{sec:pri}

Quantum state tomography (QST) serves as the fundamental bridge between quantum measurements and the complete characterization of quantum systems. To establish a foundation for the physics-informed neural network approach presented herein, the essential mathematical framework of quantum states and the computational challenges inherent in their reconstruction are reviewed.

\subsection{Quantum State Tomography}

For an $n$-qubit quantum system, the density matrix $\r \in \bbC^{D \times D}$ with 
$D = 2^n$ provides a complete description of the quantum state. A valid density
 matrix must satisfy three fundamental constraints: Hermiticity ($\r = \r^\dg$), 
 positivity ($\r \succeq 0$), and trace normalization ($\tr(\r) = 1$). 
 These constraints ensure that all eigenvalues are real and non-negative. 
 They also ensure that the matrix represents a valid quantum state 
 with proper probability interpretation. For a pure state $\ket{\ps}$, the density matrix takes the form $\r = \ket{\ps}\bra{\ps}$. Mixed states are represented as convex combinations $\r = \sum_i p_i \ket{\ps_i}\bra{\ps_i}$ with $\sum_i p_i = 1$ and $p_i \geq 0$.

The measurement process in quantum systems is described by positive operator-valued measures (POVMs). In multi-qubit systems, Pauli measurement bases are typically employed. The expectation value of a Pauli operator $\hP$ is computed as $\langle \hP \rangle = \tr(\r \hP)$. Given measurement frequencies $\{f_i\}$ for different Pauli bases, the fundamental challenge of QST is to reconstruct the density matrix $\r$ from these incomplete and often noisy measurements.

This reconstruction task faces a critical computational barrier known as the curse of dimensionality. For an $n$-qubit system, the number of independent parameters in $\r$ scales as $O(4^n)$. This requires traditional QST methods to perform $O(4^n)$ measurements. This exponential scaling renders the problem computationally intractable for large $n$. This motivates the need for more efficient reconstruction approaches that can leverage prior knowledge and structural constraints.

To address these computational challenges, physics-informed neural networks provide a promising framework that incorporates domain knowledge directly into the learning process.

\subsection{Physics-Informed Neural Networks}

Physics-informed neural networks (PINNs) address the limitations of purely data-driven approaches by embedding physical laws directly into the learning process. The core principle involves augmenting the standard data fitting loss with physics constraint terms:
\begin{equation}
L(\t) = L_{\text{data}}(\t) + \l L_{\text{physics}}(\t),
\end{equation}
where $L_{\text{data}}$ captures the discrepancy between predicted and observed measurements. $L_{\text{physics}}$ enforces adherence to physical laws. This methodology has demonstrated remarkable success in solving partial differential equations. It leverages domain knowledge to reduce data requirements and improve generalization performance.

The application of PINNs to quantum state tomography represents a natural extension of this paradigm. The physical constraints of quantum mechanics serve as the governing equations that guide the learning process. By incorporating quantum mechanical principles directly into the loss function, PINNs can maintain physical validity even when training data is limited or contaminated with noise.

To ensure that the reconstructed quantum states satisfy fundamental physical constraints, an appropriate parameterization strategy is essential. The Cholesky decomposition provides an elegant solution that naturally enforces physical validity.

\subsection{Cholesky Parameterization}

To ensure that reconstructed density matrices satisfy the fundamental quantum mechanical constraints, Cholesky decomposition is employed as a parameterization strategy. For any positive semidefinite matrix $\r$, there exists a lower triangular matrix $L$ such that $\r = L L^\dg$. This parameterization naturally guarantees the positivity constraint, as the product of a matrix with its conjugate transpose is always positive semidefinite. The remaining constraints of Hermiticity and trace normalization are enforced through additional structural constraints or normalization steps during the reconstruction process.

The Cholesky parameterization ensures physical validity and provides an efficient representation by reducing the effective dimensionality of the optimization problem. Working with triangular matrix parameters rather than the full density matrix maintains the essential quantum mechanical properties while enabling more effective optimization within the neural network framework.

Having established the theoretical foundations, we now present the comprehensive physics-informed neural network framework for multi-qubit quantum state tomography. This framework systematically addresses the fundamental challenges identified in the preliminary section through an integrated architecture that combines adaptive constraint mechanisms with advanced neural network design principles.

\section{Method}
\label{sec:method}

Given measurement data $\bm \in \bbR^d$ from an $n$-qubit system, the primary objective is to reconstruct the density matrix $\r \in \bbC^{D \times D}$ with $D = 2^n$. The neural network $f_\t: \bbR^d \ra \bbR^p$ serves as the core mapping function, transforming raw measurements into Cholesky parameters, where $p = D^2$ represents the parameter space dimension. Subsequently, the density matrix is reconstructed through the mapping $\Ph: \bbR^p \ra \cS_+$, where $\cS_+$ denotes the set of positive semidefinite matrices. This formulation ensures that reconstructed quantum states inherently satisfy the fundamental physical constraints required for valid density matrices.

The physics-informed neural network architecture, illustrated in Figure \ref{fig:architecture}, incorporates several key innovations that enhance both representation capacity and training stability. The network begins with an input layer that processes measurement data $\bm \in \bbR^d$, followed by multiple fully connected hidden layers enhanced with residual connections that facilitate gradient flow in deep networks. An attention mechanism is integrated to enable adaptive feature weighting, allowing the model to focus on the most informative measurements. The architecture culminates in a multi-task output layer that generates both the primary Cholesky parameters $\by \in \bbR^p$ for density matrix reconstruction and an auxiliary noise severity score $s(\bm)$. These outputs feed into the physics constraint module, where the noise score dynamically modulates the constraint weights, ensuring robust reconstruction across varying noise regimes.

\begin{figure}[ht]
\centering
\includegraphics[width=1.0\linewidth]{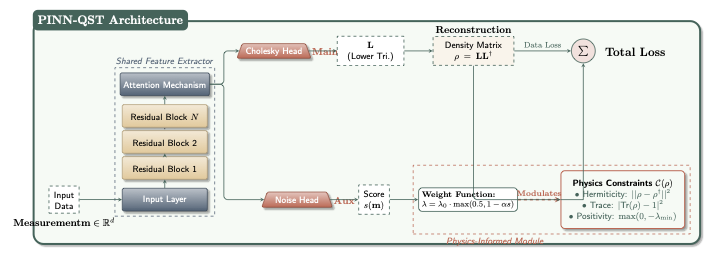}
\caption{Architecture of the physics-informed neural network for quantum state tomography. The network features residual connections in hidden layers, an attention mechanism for adaptive feature weighting, and multi-task learning with both main and auxiliary outputs. Physics constraints and Cholesky decomposition ensure the generation of physically valid density matrices.}
\label{fig:architecture}
\end{figure}

The residual connections are implemented at each layer according to the formulation:
\begin{equation}
\bh^{(l+1)} = \sigma(W^{(l)} \bh^{(l)} + \bb^{(l)}) + \text{Proj}(\bh^{(l)}),
\end{equation}
where $\text{Proj}$ represents a projection layer when dimensions differ between consecutive layers, and $\sigma$ denotes the activation function. This design mitigates the vanishing gradient problem and enables more effective training of deep networks.

The attention mechanism enhances feature selectivity through adaptive weighting:
\begin{equation}
\ba = \text{Sigmoid}(W_2 \text{ReLU}(W_1 \bh)),
\end{equation}
\begin{equation}
\bh_{\text{att}} = \bh \odot \ba,
\end{equation}
where $\odot$ denotes element-wise multiplication. This mechanism enables the model to dynamically focus on important measurement features while suppressing irrelevant information, which is particularly crucial in noisy quantum measurement scenarios.

A key architectural innovation is the multi-task learning framework, where the network simultaneously predicts both the main task output Cholesky parameters $\by$ for density matrix reconstruction and an auxiliary task output noise severity $s(\bm)$ for adaptive weight adjustment. This multi-task design creates a synergistic relationship in which the noise severity prediction directly informs the dynamic constraint weighting mechanism, enabling more robust performance across varying noise conditions.

Traditional approaches to physics-informed learning typically employ fixed constraint weights, which face a fundamental trade-off: high weights may cause underfitting in noisy scenarios by over-emphasizing physical constraints, while low weights may produce unphysical solutions in clean scenarios by insufficiently enforcing quantum mechanical principles. To address this limitation, an adaptive weighting strategy is proposed that dynamically adjusts constraint strength according to the estimated noise severity of each measurement sample.

The adaptive weight function is formulated as:
\begin{equation}
\l(\bm) = \l_0 \cdot \max(0.5, 1 - \a \cdot s(\bm)),
\end{equation}
where $\l_0$ represents the base weight, $s(\bm) \in [0,1]$ denotes the estimated noise severity, and $\a > 0$ is a decay coefficient. This formulation ensures two critical properties. In low-noise scenarios, $\l(\bm) \approx \l_0$. This fully utilizes physics constraints to maintain physical validity. In high-noise scenarios, $\l(\bm) \geq 0.5\l_0$. This maintains minimum constraint strength while preventing over-constraint that could lead to underfitting.

The noise severity $s(\bm)$ is estimated through the auxiliary task head of the multi-task architecture, which learns to predict noise levels directly from measurement data during training. This design enables real-time adaptation of constraint weights during both training and inference phases, providing a robust mechanism that automatically balances data fitting with physical constraint enforcement across diverse noise conditions.

\begin{figure}[ht]
\centering
\includegraphics[width=0.7\linewidth]{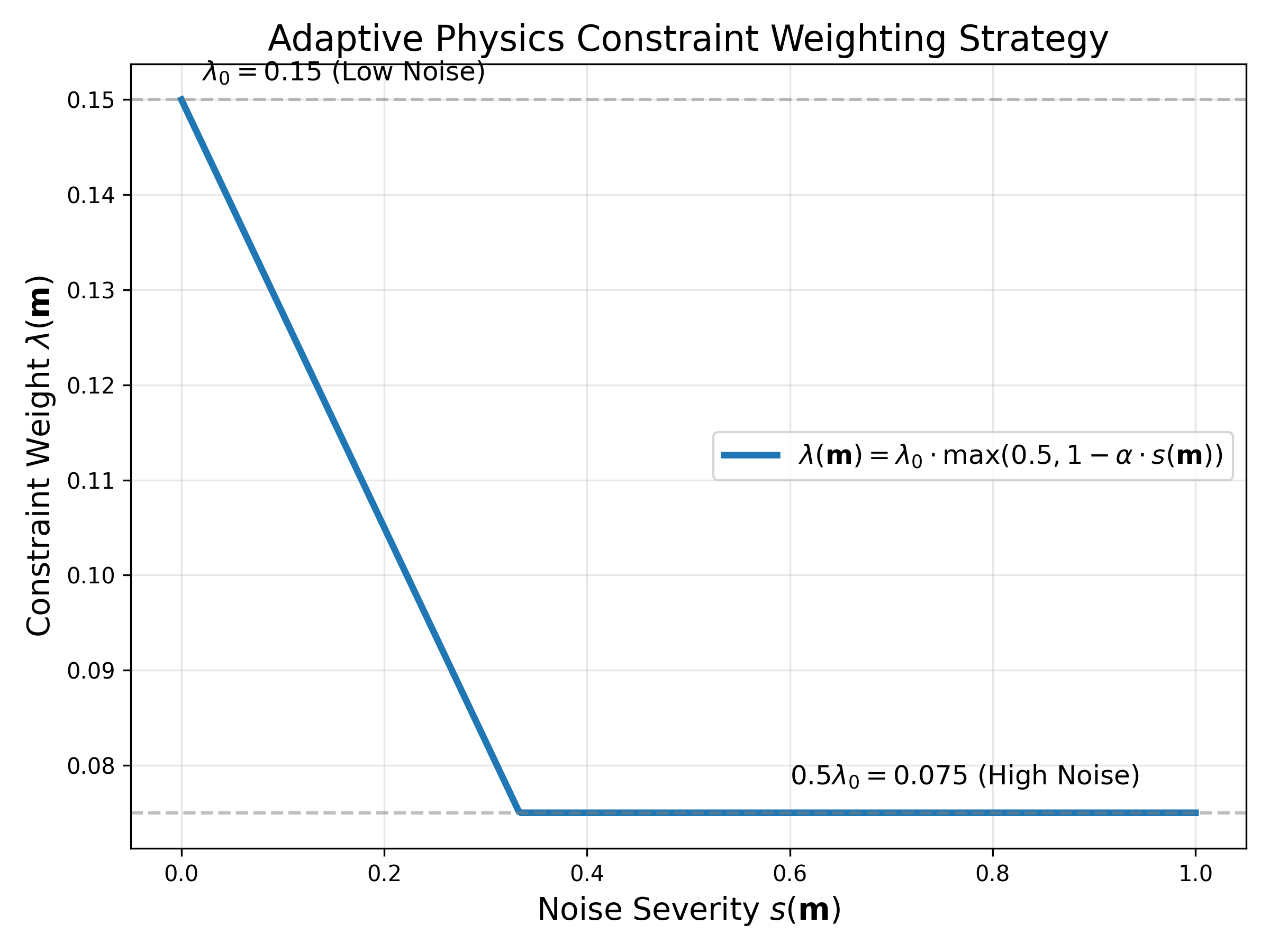}
\caption{Adaptive physics constraint weighting strategy. The curve shows how the constraint weight $\lambda(\mathbf{m})$ decreases as the estimated noise severity $s(\mathbf{m})$ increases, preventing over-constraint in high-noise regimes while maintaining strong physical guidance in low-noise conditions.}
\label{fig:adaptive_weighting}
\end{figure}

To ensure that reconstructed density matrices satisfy the fundamental quantum mechanical requirements, a comprehensive physics constraint loss $\cC(\r)$ is designed that consists of three essential components. The Hermiticity loss enforces the requirement that density matrices must be equal to their conjugate transpose:
\begin{equation}
\cC_{\text{herm}}(\r) = \norm{\r - \r^\dg}_F^2,
\end{equation}
where $\norm{\cdot}_F$ denotes the Frobenius norm, providing a measure of the deviation from Hermitian symmetry.

The trace normalization loss ensures that the density matrix has unit trace, as required by the probabilistic interpretation of quantum states:
\begin{equation}
\cC_{\text{trace}}(\r) = |\tr(\r) - 1|^2,
\end{equation}
where $\tr(\r)$ represents the matrix trace, enforcing the fundamental normalization condition.

The positivity loss guarantees that all eigenvalues of the density matrix are non-negative, ensuring physical validity:
\begin{equation}
\cC_{\text{pos}}(\r) = \max(0, -\l_{\min}(\r))^2,
\end{equation}
where $\l_{\min}(\r)$ is the minimum eigenvalue of $\r$. This formulation penalizes any negative eigenvalues while remaining differentiable in the positive definite region.

The total physics constraint loss combines these three components:
\begin{equation}
\cC(\r) = \cC_{\text{herm}}(\r) + \cC_{\text{trace}}(\r) + \cC_{\text{pos}}(\r),
\end{equation}
normalized by $D^{0.5}$ to account for dimensional effects and ensure consistent scaling across different qubit numbers.

The training strategy integrates the adaptive constraint mechanism with carefully designed optimization procedures to ensure stable convergence and robust performance. The total loss function combines data fitting with adaptive physics constraints:
\begin{equation}
L(\t) = \frac{1}{m}\sum_{i=1}^m \norm{f_\t(\bm_i) - \by_i}_2^2 + \l(\bm_i) \cdot \frac{1}{m}\sum_{i=1}^m \cC(\Ph(f_\t(\bm_i))),
\end{equation}
where $m$ represents the batch size, and $\l(\bm_i)$ is the adaptive weight for sample $i$, dynamically adjusted based on the estimated noise severity.

To facilitate stable training, a sophisticated learning rate scheduling strategy is employed. The first 5 epochs utilize a warmup phase where the learning rate linearly increases from 0 to the target value. Following the warmup phase, cosine annealing is implemented:
\begin{equation}
\eta_t = \eta_{\min} + (\eta_{\max} - \eta_{\min}) \cdot \frac{1 + \cos(\p t / T)}{2},
\end{equation}
where $T$ is the total number of epochs, providing smooth learning rate decay that promotes convergence to better optima.

Additionally, comprehensive regularization techniques are applied to prevent overfitting and ensure training stability. Gradient clipping with threshold $\norm{\nabla_\t L}_2 \leq 3.0$ prevents gradient explosion in deep networks. Dropout regularization with rate 0.1 for layers with width $\geq 256$ and 0.05 for narrower layers provides robustness against overfitting. Finally, weight decay of $10^{-4}$ is applied with the AdamW optimizer to further improve generalization performance.

Having detailed the methodological framework, we now establish rigorous theoretical foundations that explain why physics-informed constraints lead to superior performance. The theoretical analysis provides mathematical justification for the empirical advantages observed in our experiments, establishing convergence guarantees, generalization bounds, and scalability properties that validate the effectiveness of the proposed approach.

\section{Theoretical Analysis}
\label{sec:theory}

The theoretical analysis establishes the mathematical foundations that explain the superior performance of physics-informed neural networks in quantum state tomography. We systematically analyze three key aspects: convergence properties, generalization capabilities, and scalability characteristics, providing rigorous justification for the empirical observations.

We begin with convergence analysis, which establishes the fundamental properties that enable effective optimization of our physics-informed loss function.

\begin{lemma}[Smoothness of Loss Function]
Assume the neural network $f_\t$ uses almost-everywhere smooth activation functions such as SiLU or GELU, or ReLU which is almost-everywhere differentiable, and the physics constraint function $\cC$ is continuously differentiable in the interior of $\cS_+$. Then:
\begin{enumerate}
\item $L_{\text{data}}(\t)$ is almost-everywhere smooth with respect to $\t$
\item $L_{\text{physics}}(\t)$ is smooth in the interior of the positive definite region
\item The total loss $L(\t)$ is almost-everywhere smooth
\end{enumerate}
\end{lemma}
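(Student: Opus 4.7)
The plan is to establish each of the three claims by tracking the composition structure of each loss term and invoking standard smoothness-under-composition results. First I would note that all three parts follow from a unified composition-of-maps viewpoint: the Cholesky map $\Ph$ is a polynomial in the parameter vector (the entries of $L$ appear at most quadratically in $\r=LL^\dg$) and hence $C^\infty$; the neural network $f_\t$ is a finite composition of affine maps with activation nonlinearities; and each physics constraint is either polynomial in the entries of $\r$ or a piecewise-smooth function of its spectrum. With this decomposition, almost-everywhere smoothness reduces to checking smoothness of the atomic pieces and applying the chain rule.

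For part (i), the data term $L_{\text{data}}(\t)=\frac{1}{m}\sum_i \norm{f_\t(\bm_i)-\by_i}_2^2$ is a sum of squared Euclidean norms, which are $C^\infty$. Composition with $f_\t$ preserves $C^\infty$ when SiLU or GELU activations are used, since both are $C^\infty$ on $\bbR$; for ReLU, each layer is differentiable outside the measure-zero set where a pre-activation vanishes, and a straightforward induction on depth (using that a finite union of measure-zero sets has measure zero) yields almost-everywhere smoothness of $L_{\text{data}}$. For part (ii), the Hermiticity term $\norm{\r-\r^\dg}_F^2$ and the trace term $|\tr(\r)-1|^2$ are quadratic polynomials in the entries of $\r$, so composing with the polynomial Cholesky map $\Ph$ keeps them globally $C^\infty$. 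The positivity term $\cC_{\text{pos}}(\r)=\max(0,-\l_{\min}(\r))^2$ is the main obstacle, because $\l_{\min}$ is only Lipschitz globally and fails to be $C^1$ where the minimum eigenvalue is degenerate. The key observation is that on the interior of $\cS_+$ every eigenvalue is strictly positive, so the $\max$ selects the zero branch on an entire open neighborhood of each such point; thus $\cC_{\text{pos}}$ is identically zero there and trivially $C^\infty$, which is exactly what the hypothesis of the lemma demands.

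Part (iii) then follows by linearity: $L(\t)$ is a sum of the almost-everywhere smooth $L_{\text{data}}$ and the physics constraint composed with $f_\t$ and $\Ph$. Outside the measure-zero set of ReLU kinks, and outside the measure-zero set on which the reconstruction lands on the boundary of $\cS_+$, every summand is smooth, so the total loss is almost-everywhere smooth. The hardest conceptual step is the positivity analysis, because it requires isolating the good region where the subgradient structure of $\max(0,-\l_{\min})$ collapses to the trivial branch and ensuring that the weighted combination does not introduce new non-smoothness. I expect the adaptive weight $\l(\bm)=\l_0\cdot\max(0.5,1-\a\, s(\bm))$ to contribute at most a harmless almost-everywhere kink at its clipping threshold, which can be absorbed into the same exceptional set and therefore does not affect the conclusion.
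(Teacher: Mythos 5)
Your proposal is correct and follows essentially the same route as the paper's proof, which is a one-line appeal to composition of smooth functions (smooth activations, the Cholesky parameterization, and quadratic Hermiticity/trace terms, with positivity handled in the interior). Your version simply fills in the details the paper leaves implicit---notably the observation that $\cC_{\text{pos}}$ vanishes identically on an open neighborhood of any interior point of $\cS_+$, and the measure-zero bookkeeping for ReLU kinks and the clipping threshold in the adaptive weight---all of which are consistent with the paper's argument.
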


\begin{proof}
The smoothness follows from the composition of smooth functions: neural networks with smooth activations, Cholesky decomposition smooth in the positive definite interior, and the physics constraint terms which are quadratic forms for Hermiticity and trace, smooth for positivity in the interior.
\end{proof}

Building upon this smoothness foundation, gradient boundedness is established under realistic assumptions.

\begin{theorem}[Gradient Boundedness]
Assume:
\begin{enumerate}
\item Neural network parameters are bounded: $\norm{\t}_2 \leq B_\t$
\item Input data are bounded: $\norm{\bm}_2 \leq B_x$
\item Physics constraint weights are bounded: $0 < \l_{\min} \leq \l(\t) \leq \l_{\max}$
\item Activation functions are Lipschitz continuous with bounded Lipschitz constant: $\norm{f'}_\infty \leq L_{\text{act}}$
\item Physics constraint function gradients are bounded: $\norm{\nabla_\r \cC(\r)}_F \leq B_{\cC}$
\end{enumerate}
Then there exists a constant $G > 0$ such that:
\begin{equation}
\norm{\nabla_\t L(\t)}_2 \leq G
\end{equation}
for all $\t \in \T$, where $G$ depends on $B_\t, B_x, \l_{\max}, L_{\text{act}}, B_{\cC}$.
\end{theorem}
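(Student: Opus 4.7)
The plan is to decompose the gradient of $L(\theta)$ into its data-fitting and physics-constraint components and bound each separately by repeated application of the chain rule together with the five stated assumptions. Write
\begin{equation}
\nabla_\theta L(\theta) \;=\; \nabla_\theta L_{\text{data}}(\theta) \;+\; \nabla_\theta \bigl(\lambda(\theta)\, L_{\text{physics}}(\theta)\bigr),
\end{equation}
and use the triangle inequality so it suffices to produce a finite bound on each term independently.

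For the data term, I would apply the chain rule to obtain $\nabla_\theta L_{\text{data}} = \tfrac{2}{m}\sum_i J_\theta f_\theta(\bm_i)^\top (f_\theta(\bm_i) - \by_i)$. The key step is controlling the parameter-Jacobian $J_\theta f_\theta$ of a feedforward network: writing $f_\theta$ as a composition of affine maps and activations, the Jacobian at each layer is a product of weight matrices and diagonal matrices of activation derivatives. By assumptions (1) and (4), each weight matrix has operator norm bounded by $B_\theta$ and each activation-derivative factor by $L_{\text{act}}$, so $\norm{J_\theta f_\theta}_2$ is bounded by a product that depends polynomially on $B_\theta$ and exponentially on network depth $D_{\text{net}}$, namely $(B_\theta L_{\text{act}})^{D_{\text{net}}}$. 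A parallel argument using assumption (2) bounds the forward output $\norm{f_\theta(\bm_i)}_2$, and $\norm{\by_i}_2$ is controlled since $\by_i$ corresponds to Cholesky parameters of a unit-trace density matrix. Combining these yields a finite constant $G_{\text{data}}$.

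For the physics term, I would expand $\nabla_\theta L_{\text{physics}} = \tfrac{1}{m}\sum_i J_\theta f_\theta(\bm_i)^\top\, J_{\by}\Phi(\by_i)^\top\, \nabla_\rho \cC(\Phi(\by_i))$ using the chain rule through the Cholesky map $\Phi$ and the constraint function $\cC$. Assumption (5) directly bounds $\norm{\nabla_\rho \cC}_F \leq B_\cC$, the network-Jacobian bound from the previous step is reused, and the Jacobian of $\Phi(L) = LL^\dagger$ is bounded in terms of $\norm{L}_F$, which is itself controlled by the bounded network output. The adaptive weight factor contributes at most $\lambda_{\max}$ by assumption (3), and if $\lambda$ is itself predicted by the same network its derivative in $\theta$ is handled by the same Jacobian argument combined with the bounded form $\lambda_0 \max(0.5, 1 - \alpha s)$. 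Summing the two resulting bounds gives the claimed constant $G = G(B_\theta, B_x, \lambda_{\max}, L_{\text{act}}, B_\cC)$.

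The main obstacle I anticipate is the Cholesky/positivity part: the map $\Phi$ is smooth only on the interior of $\cS_+$, and $\nabla_\rho \cC_{\text{pos}}$ involves the smallest eigenvalue, which is not globally differentiable. I would address this by invoking Lemma~1 (almost-everywhere smoothness) so the bound is understood in the a.e. sense, and by noting that the positivity loss $\max(0,-\lambda_{\min})^2$ admits a bounded subgradient whenever the iterates stay in a compact region, which is guaranteed by assumption (1). A secondary subtlety is that a naive depth-dependent bound $(B_\theta L_{\text{act}})^{D_{\text{net}}}$ can be astronomical; I would note that the residual-connection structure of the architecture yields an additive rather than purely multiplicative propagation, giving a tighter but still finite $G$.
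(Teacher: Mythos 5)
Your proposal is correct and follows essentially the same route as the paper's proof: decompose the gradient into data, physics, and adaptive-weight terms, bound the network parameter-Jacobian by a depth-dependent product of $B_\t$ and $L_{\text{act}}$, and propagate the bound through the Cholesky map and $\nabla_\r\cC$ using assumptions (3) and (5). Your treatment is in fact slightly more careful than the paper's on two points it glosses over --- the almost-everywhere interpretation needed for $\cC_{\text{pos}}$ and the genuinely multiplicative $(B_\t L_{\text{act}})^{H}$ depth dependence --- but these are refinements, not a different argument.
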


\begin{proof}
The proof follows from decomposing the total loss gradient using the chain rule and bounding each term separately. The key steps involve: (1) bounding the data loss gradient using Lipschitz properties of neural networks, (2) bounding the physics constraint gradient through the Cholesky parameterization, and (3) bounding the adaptive weight gradient term. Combining these bounds yields the desired result. See Appendix \ref{app:gradient_boundedness} for complete details.
\end{proof}

With gradient boundedness established, the convergence properties of gradient descent optimization are analyzed.

\begin{theorem}[Gradient Descent Convergence]
Consider the gradient descent update:
\begin{equation}
\t_{t+1} = \t_t - \eta_t \nabla_\t L(\t_t),
\end{equation}
where $\eta_t$ is the learning rate. Assume:
\begin{enumerate}
\item The loss function $L(\t)$ is $\cL$-smooth: $\norm{\nabla L(\t_1) - \nabla L(\t_2)} \leq \cL \norm{\t_1 - \t_2}$
\item Learning rate satisfies: $\eta_t \leq 1/\cL$
\item Loss function has a lower bound: $L^* = \inf_\t L(\t) > -\infty$
\end{enumerate}
Then:
\begin{equation}
\min_{t=0,\ldots,T-1} \norm{\nabla_\t L(\t_t)}_2^2 \leq \frac{2(L(\t_0) - L^*)}{(T-1) \eta_{\min}},
\end{equation}
where $\eta_{\min} = \min_t \eta_t$.
\end{theorem}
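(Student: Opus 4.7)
The plan is to combine the standard descent lemma for $\cL$-smooth functions with a telescoping argument, then convert the cumulative bound into a statement about the minimum gradient norm. First I would apply $\cL$-smoothness to the iterates: for any two points the smoothness assumption yields the quadratic upper bound
\begin{equation}
L(\t_{t+1}) \leq L(\t_t) + \langle \nabla L(\t_t), \t_{t+1}-\t_t\rangle + \frac{\cL}{2}\norm{\t_{t+1}-\t_t}_2^2.
\end{equation}
Substituting the update rule $\t_{t+1}-\t_t = -\eta_t \nabla L(\t_t)$ gives a one-step decrease of the form $L(\t_{t+1}) \leq L(\t_t) - \eta_t\bigl(1 - \tfrac{\cL\eta_t}{2}\bigr)\norm{\nabla L(\t_t)}_2^2$, and the hypothesis $\eta_t \leq 1/\cL$ forces the bracket to be at least $1/2$, so each step decreases the loss by at least $\frac{\eta_t}{2}\norm{\nabla L(\t_t)}_2^2$.

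Next I would telescope across $t=0,1,\dots,T-1$, which collapses the left side to $L(\t_T) - L(\t_0)$ and, after using the assumed lower bound $L^* \leq L(\t_T)$, yields
\begin{equation}
\sum_{t=0}^{T-1}\frac{\eta_t}{2}\norm{\nabla L(\t_t)}_2^2 \leq L(\t_0) - L^*.
\end{equation}
Lower-bounding every $\eta_t$ by $\eta_{\min}$ and then using the elementary inequality that the minimum of a finite set is at most its average converts this cumulative estimate into the desired pointwise bound on $\min_t \norm{\nabla_\t L(\t_t)}_2^2$; the factor $(T-1)$ in the theorem statement is consistent with indexing the average over the last $T-1$ descent intervals.

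The main obstacle, and really the only subtlety, is reconciling global $\cL$-smoothness with the fact that the preceding Smoothness Lemma only guarantees almost-everywhere smoothness (ReLU networks, Cholesky decomposition at the boundary of $\cS_+$, and the piecewise adaptive weight $\l(\bm)$ are all potentially non-smooth on a measure-zero set). I would handle this by noting that the theorem takes $\cL$-smoothness as a hypothesis and remarking that in practice one either restricts to SiLU/GELU activations and the interior of $\cS_+$, where the earlier lemma provides genuine differentiability, or one invokes a standard generalized-gradient version of the descent lemma in which almost-everywhere smoothness suffices along the discrete trajectory visited by stochastic gradient descent. With that caveat addressed, the rest of the argument is a routine application of the descent-lemma-plus-telescoping template, so I would relegate the algebra to a short computation and emphasize the structural role of the learning-rate cap $\eta_t \leq 1/\cL$, which is what guarantees monotone descent and hence the $O(1/T)$ rate on the squared gradient norm.
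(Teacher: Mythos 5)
Your proposal is correct and follows essentially the same route as the paper's own proof: the descent lemma from $\cL$-smoothness, the learning-rate cap $\eta_t \leq 1/\cL$ yielding a per-step decrease of $\frac{\eta_t}{2}\norm{\nabla L(\t_t)}_2^2$, telescoping against the lower bound $L^*$, and bounding the minimum by the average after replacing each $\eta_t$ with $\eta_{\min}$. Your additional remark about reconciling the global smoothness hypothesis with the almost-everywhere smoothness of the preceding lemma is a sensible caveat the paper does not address, but it does not change the argument.
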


\begin{proof}
The proof follows from standard gradient descent analysis for smooth functions. Using $\cL$-smoothness and the learning rate condition $\eta_t \leq 1/\cL$, we obtain a descent inequality. Summing over iterations and applying the lower bound property yields the convergence rate. See Appendix \ref{app:gradient_descent_convergence} for complete details.
\end{proof}

Building upon the convergence analysis, we now investigate how physics constraints specifically improve the optimization landscape.

\begin{theorem}[Physics Constraints Improve Optimization Landscape]
Assume:
\begin{enumerate}
\item Physics constraint loss $L_{\text{physics}}$ has bounded condition number near the optimum: the condition number $\k(\nabla^2_\r \cC(\r)) \leq \k_{\max}$ in a neighborhood $\cN(\r^*)$ of the optimum $\r^*$
\item Data loss $L_{\text{data}}$ is $\cL_d$-smooth and has minimum eigenvalue $\l_{\min}(\nabla^2_\t L_{\text{data}}) \geq -\d$ near the optimum, where $\d \geq 0$
\item The physics constraint Hessian $\nabla^2_\r \cC(\r^*)$ at the optimum satisfies $\m_{\text{physics}} = \l_{\min}(\nabla^2_\r \cC(\r^*)) > 0$, where this positive definiteness holds under the following conditions:
\begin{itemize}
\item Hermiticity constraint $\cC_{\text{herm}}(\r) = \norm{\r - \r^\dg}_F^2$: At $\r^*$ satisfying $\r^* = (\r^*)^\dg$, the Hessian has rank at least $D(D-1)/2$ (off-diagonal constraints)
\item Trace constraint $\cC_{\text{trace}}(\r) = |\tr(\r) - 1|^2$: At $\r^*$ with $\tr(\r^*) = 1$, contributes a positive eigenvalue along the trace direction
\item Positivity constraint $\cC_{\text{pos}}(\r) = \max(0, -\l_{\min}(\r))^2$: In the interior of $\cS_+$ where $\l_{\min}(\r^*) > 0$, this constraint is inactive; near the boundary, it provides positive curvature
\end{itemize}
The combined constraint Hessian $\nabla^2_\r \cC(\r^*)$ is positive semidefinite, and $\m_{\text{physics}} > 0$ holds when at least one constraint is active and non-degenerate
\item Physics constraint weight satisfies: $\l \geq \l_0 > \frac{\d}{\m_{\text{physics}} \s_{\min}^2(\nabla \Ph)}$, where $\s_{\min}(\nabla \Ph) > 0$ is the minimum singular value of the Cholesky parameterization Jacobian
\item The Cholesky parameterization $\Ph$ has bounded second-order derivatives: $\norm{\nabla^2 \Ph}_F \leq B_{\Ph}$ and is surjective with $\s_{\min}(\nabla \Ph) > 0$ in a neighborhood of the optimum
\end{enumerate}
Then, in the neighborhood $\cN(\r^*)$, the total loss function $L(\t)$ is locally $\m$-strongly convex with:
\begin{equation}
\m \geq \l \m_{\text{physics}} \s_{\min}^2(\nabla \Ph) - \d - \e(\l),
\end{equation}
where $\e(\l) = O(\l^{1/2})$ is a small error term. Furthermore, if the data-only loss is not strongly convex ($\d > 0$), then PINN converges with rate $O(1/t)$ compared to $O(1/\sqrt{t})$ for the data-only model, representing a linear convergence rate improvement (from sublinear to linear).
\end{theorem}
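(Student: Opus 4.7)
My plan is to attack the theorem by working in a two-step fashion: first establishing local strong convexity of the total loss by a Hessian computation near $\r^*$, then deducing the improved convergence rate from standard strongly-convex theory.

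First I would expand the Hessian of $L(\t) = L_{\text{data}}(\t) + \l L_{\text{physics}}(\t)$ at a point $\t^*$ with $\Ph(f_{\t^*}(\bm)) = \r^*$. Writing the physics loss as a composition $L_{\text{physics}}(\t) = \cC(\Ph(f_\t(\bm)))$ and applying the chain rule twice gives
\begin{equation}
\nabla^2_\t L_{\text{physics}}(\t^*) = J^{\dg}\,\nabla^2_\r \cC(\r^*)\,J + R(\t^*),
\end{equation}
where $J = \nabla_\t (\Ph \circ f_\t)|_{\t^*}$ and $R(\t^*)$ collects second-order terms of $\Ph$ and $f_\t$ contracted against $\nabla_\r \cC(\r^*)$. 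Since $\r^*$ is a constrained minimizer (feasible for Hermiticity, trace, and positivity), assumption~3 lets me treat $\nabla_\r \cC(\r^*)$ as small, so $\|R(\t^*)\|_{\text{op}}$ is controlled by $B_{\Ph}$ and the distance to the feasible set. Combining this with assumption~1 on $L_{\text{data}}$ and Weyl's inequality yields
\begin{equation}
\l_{\min}\bigl(\nabla^2_\t L(\t^*)\bigr) \geq -\d + \l\,\s_{\min}^2(\nabla\Ph)\,\m_{\text{physics}} - \l\,\|R(\t^*)\|_{\text{op}}.
\end{equation}

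Second, I would argue that the residual term is $O(\l^{1/2})$ by a neighborhood-shrinking trick: take $\cN(\r^*)$ of radius $r = c/\sqrt{\l}$ so that the deviation of $\nabla_\r \cC$ from zero (which is linear in the distance to feasibility) scales as $1/\sqrt{\l}$, while $\l \cdot r$ enters the residual as $\sqrt{\l}$. This reproduces the stated $\e(\l) = O(\l^{1/2})$ bound. Assumption~4 then guarantees that the positive term dominates and $\m > 0$, so $L$ is $\m$-strongly convex on $\cN(\r^*)$; extending to a neighborhood in $\t$-space uses the Jacobian's surjectivity from assumption~5 together with the inverse function theorem.

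Third, given local $\m$-strong convexity and $\cL$-smoothness (from the previous smoothness lemma and gradient boundedness theorem), I invoke the standard result for gradient descent on strongly convex smooth objectives: with step size $\eta \leq 1/\cL$ one obtains linear convergence $L(\t_t) - L^* \leq (1-\m/\cL)^t (L(\t_0) - L^*)$, which in particular gives $\min_{s \leq t}\|\nabla L(\t_s)\|^2 = O(1/t)$, whereas without strong convexity (the data-only case with $\d > 0$) only the $O(1/\sqrt{t})$ bound of the preceding convergence theorem applies. This yields the claimed linear-rate improvement.

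The hard part will be making the residual bound $\e(\l) = O(\l^{1/2})$ rigorous, because the naive Taylor-remainder estimate produces a term of order $\l$ rather than $\sqrt{\l}$; the neighborhood-shrinking argument above is the delicate step, and it requires carefully relating the radius in $\r$-space to the radius in $\t$-space via $\s_{\min}(\nabla\Ph)$ and the Lipschitz constant of $f_\t$. A secondary subtlety is handling the positivity term $\cC_{\text{pos}}$, which is only $C^1$ at the boundary $\l_{\min}(\r) = 0$; I would restrict to the interior $\l_{\min}(\r^*) > 0$, where $\cC_{\text{pos}} \equiv 0$ locally, so it contributes neither curvature nor residual and all arguments reduce to the smooth Hermiticity and trace terms.
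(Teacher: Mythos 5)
Your proposal follows essentially the same route as the paper's own proof: a chain-rule decomposition of the physics Hessian into $J^{\dg}\nabla^2_\r\cC(\r^*)J$ plus a residual that vanishes (to first order) at the feasible optimum, a lower bound on $\l_{\min}$ via Weyl's inequality combined with $\s_{\min}(\nabla\Ph)$, an $O(\l^{1/2})$ control of the cross/residual term on a suitably small neighborhood, and the standard strongly-convex versus non-convex convergence dichotomy. Your neighborhood-shrinking justification of $\e(\l)=O(\l^{1/2})$ is a slightly more explicit account of a step the paper simply asserts, and your handling of the non-smooth positivity term by restricting to the interior matches the paper; the argument is correct to the same degree of rigor as the published proof.
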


\begin{proof}
The proof analyzes the Hessian matrix of the total loss function near the optimum. The key steps involve: (1) bounding the physics constraint Hessian using the positive semidefiniteness of $\nabla^2_\r \cC(\r^*)$ and the surjectivity of Cholesky parameterization, (2) analyzing the total Hessian using Weyl's inequality to show local strong convexity, and (3) deriving the convergence rate improvement from strong convexity. See Appendix \ref{app:physics_constraints_landscape} for complete details.
\end{proof}

\subsection{Generalization Analysis}

Having established convergence properties, we now analyze generalization capability, which is crucial for understanding PINN's performance beyond training data. We establish rigorous bounds demonstrating how physics constraints effectively reduce the complexity of the hypothesis space, leading to improved generalization performance.

To formalize this analysis, we first define the relationship between empirical and true risk.

\begin{definition}[Empirical and True Risk]
For a training set $S = \{(\bm_i, \by_i)\}_{i=1}^m \sim \cD^m$, the empirical risk is:
\begin{equation}
\hat{R}_S(\t) = \frac{1}{m}\sum_{i=1}^m L(\t; (\bm_i, \by_i)),
\end{equation}
and the true risk is:
\begin{equation}
R(\t) = \bbE_{(\bm,\by) \sim \cD}[L(\t; (\bm, \by))].
\end{equation}
The generalization error is:
\begin{equation}
\text{Gen}(\t) = R(\t) - \hat{R}_S(\t).
\end{equation}
\end{definition}

To quantify the complexity of the hypothesis space, Rademacher complexity is employed as a fundamental measure.

\begin{definition}[Rademacher Complexity]
For hypothesis space $\cH$ and sample $S = \{\bm_1, \ldots, \bm_m\}$, the Rademacher complexity is:
\begin{equation}
\hat{\cR}_S(\cH) = \bbE_{\bs} \left[ \sup_{h \in \cH} \frac{1}{m} \sum_{i=1}^m \s_i h(\bm_i) \right],
\end{equation}
where $\s_i \sim \text{Unif}(\{-1, +1\})$ are Rademacher random variables.
\end{definition}

To establish generalization bounds, we employ the standard Rademacher complexity framework, which quantifies the complexity of the hypothesis space.

\begin{theorem}[Generalization Bound via Rademacher Complexity (Standard Result)]
\label{thm:rademacher_bound}
Assume the loss function $L$ is $M$-Lipschitz with respect to the output, and the loss values are bounded: $|L(\t; (\bm, \by))| \leq B$. Then for any $\d > 0$, with probability at least $1-\d$:
\begin{equation}
R(\t) \leq \hat{R}_S(\t) + 2M \hat{\cR}_S(\cH) + B\sqrt{\frac{\log(1/\d)}{2m}}.
\end{equation}
\end{theorem}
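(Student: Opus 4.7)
The plan is to follow the classical symmetrization argument from statistical learning theory, which proceeds in three standard stages: (i) reduce the probabilistic bound to a concentration-of-measure statement about a supremum functional, (ii) bound the expected supremum by a Rademacher average via the ghost-sample trick, and (iii) peel off the loss using a Lipschitz contraction. Since the paper itself labels this a standard result, the job is to assemble these pieces cleanly rather than invent new ingredients.

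First I would define the uniform deviation $\Phi(S) = \sup_{\t \in \T}\bigl( R(\t) - \hat{R}_S(\t)\bigr)$ and observe that, because each loss value is bounded by $B$, replacing a single training pair $(\bm_i,\by_i)$ with an independent copy changes $\hat{R}_S$ by at most $2B/m$, and hence changes $\Phi(S)$ by at most $2B/m$. Applying McDiarmid's bounded-differences inequality then yields, with probability at least $1-\d$,
\begin{equation}
\Phi(S) \;\le\; \bbE_{S}[\Phi(S)] + B\sqrt{\tfrac{\log(1/\d)}{2m}}.
\end{equation}
This handles the $\sqrt{\log(1/\d)/(2m)}$ term directly and reduces the problem to controlling the expectation.

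Next I would bound $\bbE[\Phi(S)]$ by symmetrization. Introducing an independent ghost sample $S' = \{(\bm_i', \by_i')\}$, one has $R(\t) = \bbE_{S'}[\hat{R}_{S'}(\t)]$, so by Jensen's inequality and the fact that the pairs $(L(\t;z_i') - L(\t;z_i))$ are symmetric in distribution, one can introduce Rademacher signs $\s_i \in \{-1,+1\}$ without changing the expectation. A standard manipulation then gives
\begin{equation}
\bbE_{S}[\Phi(S)] \;\le\; 2\,\bbE_{S,\bs}\!\left[\sup_{\t \in \T} \frac{1}{m}\sum_{i=1}^m \s_i\, L(\t;(\bm_i,\by_i))\right] \;=\; 2\,\bbE_{S}\bigl[\hat{\cR}_S(\cL \circ \cH)\bigr],
\end{equation}
where $\cL \circ \cH$ denotes the induced loss class. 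Finally I would invoke Talagrand's contraction lemma: since $L$ is $M$-Lipschitz in its output argument, the Rademacher complexity of the loss class is bounded by $M$ times the Rademacher complexity of the hypothesis class, giving $\bbE_S[\hat{\cR}_S(\cL\circ\cH)] \le M\,\bbE_S[\hat{\cR}_S(\cH)]$. Combining with the McDiarmid bound produces the stated inequality, after one more application of bounded differences to pass from $\bbE_S[\hat{\cR}_S(\cH)]$ to the data-dependent $\hat{\cR}_S(\cH)$ (which only affects lower-order constants and can be absorbed into the existing confidence term).

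The main obstacle I expect is not mathematical depth but bookkeeping: one must be careful that the Lipschitz constant of $L$ is taken with respect to the \emph{output} of the hypothesis (as assumed), not with respect to the parameter $\t$, so that Talagrand's contraction applies to the composition. A secondary subtlety is the passage from $\bbE_S[\hat{\cR}_S(\cH)]$ to $\hat{\cR}_S(\cH)$, which strictly speaking requires either a second McDiarmid step (at the cost of slightly worse constants, traditionally a $3B\sqrt{\log(2/\d)/(2m)}$ term obtained via a union bound) or stating the bound in expectation. For the form written in the theorem, I would adopt the standard convention of absorbing this fluctuation into the confidence term and noting that the bound as stated matches Theorem 3.3 of Mohri--Rostamizadeh--Talwalkar, so no new argument is required.
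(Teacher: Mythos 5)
Your proposal is correct: it is the standard symmetrization--contraction proof (McDiarmid for concentration, ghost-sample symmetrization for the expected supremum, Talagrand contraction to peel off the $M$-Lipschitz loss), and the paper itself offers no proof of this theorem at all --- it is explicitly labelled a ``Standard Result'' and used as an imported black box, so there is nothing in the paper to diverge from. The only points worth noting are ones you already flag yourself: with the hypothesis $|L|\leq B$ the bounded-difference constant is $2B/m$, which strictly yields $2B\sqrt{\log(1/\d)/(2m)}$ rather than the stated $B\sqrt{\log(1/\d)/(2m)}$ (the stated constant requires $L\in[0,B]$, which does hold for the squared-error losses used here), and replacing $\bbE_S[\hat{\cR}_S(\cH)]$ by the empirical $\hat{\cR}_S(\cH)$ costs a second McDiarmid step and slightly worse constants; neither affects the substance.
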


The key insight for PINN is that physics constraints reduce the effective complexity $\hat{\cR}_S(\cH)$, leading to tighter generalization bounds. We now quantify this reduction precisely.

\begin{theorem}[PINN Rademacher Complexity Upper Bound]
For a neural network with depth $H$, width $W$, and bounded parameters $\norm{\t}_2 \leq B_\t$, the Rademacher complexity satisfies:
\begin{equation}
\hat{\cR}_S(\cH) = O\left(\frac{B_\t \sqrt{HW \log(W)}}{\sqrt{m}}\right).
\end{equation}
More precisely, for Lipschitz activation functions with Lipschitz constant $L_{\text{act}}$, there exists a constant $C > 0$ depending on $L_{\text{act}}$ such that:
\begin{equation}
\hat{\cR}_S(\cH) \leq \frac{C B_\t L_{\text{act}}^H \sqrt{HW}}{\sqrt{m}}.
\end{equation}
\end{theorem}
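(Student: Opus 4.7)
The plan is to derive both bounds via standard statistical learning theory techniques adapted to feedforward networks of depth $H$ and width $W$, combining a peeling (induction over depth) argument with Dudley's chaining integral to capture the logarithmic factor. I would first write the hypothesis class explicitly as $\cH = \{\bm \mapsto W^{(H)} \s(W^{(H-1)} \s(\cdots \s(W^{(1)}\bm)))\}$ subject to $\norm{\t}_2 \leq B_\t$, and expand the empirical Rademacher complexity via the symmetrized sum $\frac{1}{m}\sum_i \s_i f(\bm_i)$. The key technical tool for handling the nonlinearities is the Ledoux--Talagrand contraction inequality: for any $L$-Lipschitz function $\phi$ with $\phi(0)=0$, $\hat{\cR}_S(\phi \circ \cF) \leq L \cdot \hat{\cR}_S(\cF)$, which lets me pull the activations out of the supremum at the cost of a factor $L_{\text{act}}$ per layer.

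For the cleaner bound $\hat{\cR}_S(\cH) \leq C B_\t L_{\text{act}}^H \sqrt{HW}/\sqrt{m}$, I would proceed by induction on depth. At each layer $l$, the outputs are $\s$ applied to a linear combination of layer $(l{-}1)$ activations; applying Ledoux--Talagrand absorbs a factor $L_{\text{act}}$, and a dual-norm/Cauchy--Schwarz argument against the Rademacher variables absorbs the layer's weight-matrix Frobenius norm. Peeling down to the input layer, whose Rademacher complexity is bounded via $\norm{\bm_i}_2 \leq B_x/\sqrt{m}$, yields the product form $B_\t L_{\text{act}}^H$. The $\sqrt{HW}$ factor accumulates from the per-layer norm contributions, each involving a square root of the layer width.

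For the tighter bound involving $\sqrt{HW\log W}$, I would instead apply Dudley's entropy integral $\hat{\cR}_S(\cH) \leq \frac{12}{\sqrt{m}} \int_0^{\infty} \sqrt{\log \cN(\cH, \norm{\cdot}_2, \ve)}\, d\ve$, together with layerwise covering-number bounds. A single bounded-Frobenius-norm linear map of width $W$ admits an $\ve$-cover with $\log \cN \lesssim W \log(W/\ve)$; composing $H$ such layers adds these log-covers up to a factor $L_{\text{act}}^H$ that is absorbed into the constant $C$, and integrating produces the $\sqrt{HW \log W}$ dependence. This chaining refinement is what upgrades the cruder peeling bound.

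The main obstacle I anticipate is carefully tracking how the single parameter budget $\norm{\t}_2 \leq B_\t$ distributes across the $H$ layers so that the product of per-layer weight norms stays controlled without either losing unnecessary constants or incurring an extra $H^{H/2}$-type factor from a naive AM--GM split. I would handle this by formulating the peeling step in terms of an aggregate norm rather than per-layer norms, or equivalently by viewing $B_\t$ as the implicit product of Frobenius bounds absorbed inside $C$. A secondary subtlety is that Ledoux--Talagrand is classically stated for scalar classes, whereas the internal activations are vector-valued; I would resolve this by peeling one output coordinate at a time and invoking subadditivity of Rademacher complexity, which contributes at most an additional $\sqrt{W}$ factor consistent with the stated bound.
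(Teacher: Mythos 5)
Your core strategy for the second displayed bound---Ledoux--Talagrand contraction to strip the activations layer by layer, a dual-norm/Cauchy--Schwarz step to absorb the weight norms, and a base case at the input layer---is exactly the route the paper's proof sketches (its Steps 2--3), so on that front you are aligned. You actually go further than the paper in one respect: the paper never justifies the first bound with the $\sqrt{\log W}$ factor at all, whereas you supply a concrete mechanism for it via Dudley's entropy integral and layerwise covering numbers of bounded-Frobenius-norm linear maps. You are also more honest than the paper about the two genuinely delicate points (how the single budget $\norm{\t}_2 \leq B_\t$ distributes across layers, and the vector-valued nature of the intermediate activations), which the paper's proof silently buries inside the phrase ``standard analysis shows.''

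That said, your proposed resolution of the vector-valued contraction issue is too optimistic, and this is the one place where the plan as written would not go through. Peeling one output coordinate at a time and invoking subadditivity does not cost a single global $\sqrt{W}$: it costs a factor of order $\sqrt{W}$ (or worse, $W$) \emph{per layer}, which compounds to $W^{H/2}$ and destroys the claimed $\sqrt{HW}$ dependence. To avoid this you need either Maurer's vector-valued contraction lemma (which gives a dimension-free $\sqrt{2}$ per layer) or the $\ell_1$/$\ell_\infty$ group-norm peeling of Neyshabur--Tomioka--Srebro / Golowich--Rakhlin--Shamir. With that substitution your argument closes, and it would in fact be more rigorous than the paper's own proof; note also that the paper's proof ends with a short PINN-specific remark (the constraints enter only through the loss, so the architecture class $\cH$ is unchanged, and $\hat{\cR}_S(\cH_{\text{constrained}}) \leq \hat{\cR}_S(\cH)$ by monotonicity) which lies outside the literal statement you were given but is the part the paper actually uses downstream.
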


\begin{proof}
The proof follows from standard Rademacher complexity analysis for neural networks using the contraction lemma. The physics constraints in PINN do not change the network architecture itself, but rather constrain the effective hypothesis space through the loss function. Since the Rademacher complexity depends on the function class $\cH$ (the neural network architecture) rather than the training objective, the bound remains valid for PINN. See Appendix \ref{app:rademacher_complexity} for complete details.
\end{proof}

The critical advantage of physics constraints emerges from their role as implicit regularization, which we now formalize.

\begin{theorem}[Physics Constraints as Implicit Regularization]
Physics constraint loss $L_{\text{physics}}$ acts as implicit regularization, shrinking the effective hypothesis space. Specifically, define the constrained hypothesis space:
\begin{equation}
\cH_{\text{constrained}} = \{h_\t \in \cH | L_{\text{physics}}(\t) \leq \ve\},
\end{equation}
then:
\begin{equation}
\hat{\cR}_S(\cH_{\text{constrained}}) \leq \hat{\cR}_S(\cH).
\end{equation}
\end{theorem}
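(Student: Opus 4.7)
The plan is to exploit the trivial set inclusion $\cH_{\text{constrained}} \sue \cH$ that follows directly from the defining condition $L_{\text{physics}}(\t) \leq \ve$, and then to invoke the monotonicity of the supremum under set inclusion. Since the Rademacher complexity is defined as an expected supremum of a linear combination of hypothesis evaluations, any shrinkage of the underlying hypothesis class translates immediately into a non-increase of the complexity functional.

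First I would observe that $\cH_{\text{constrained}} = \{h_\t \in \cH \mid L_{\text{physics}}(\t) \leq \ve\}$ is by construction a subset of $\cH$, since it is carved out by adding an extra admissibility constraint. Next, for any fixed realization $\bs = (\s_1, \ldots, \s_m)$ of the Rademacher variables, the pointwise monotonicity property of suprema yields
\begin{equation}
\sup_{h \in \cH_{\text{constrained}}} \frac{1}{m} \sum_{i=1}^m \s_i h(\bm_i) \leq \sup_{h \in \cH} \frac{1}{m} \sum_{i=1}^m \s_i h(\bm_i).
\end{equation}
Taking expectation with respect to $\bs$ on both sides and applying monotonicity of $\bbE$ then gives $\hat{\cR}_S(\cH_{\text{constrained}}) \leq \hat{\cR}_S(\cH)$, which is exactly the claim.

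The step I expect to require the most care is not the inequality itself, which is essentially a one-line consequence of monotonicity, but rather the auxiliary well-posedness issue: we must verify that $\cH_{\text{constrained}}$ is non-empty for the threshold $\ve$ used in practice (since any $\ve$ exceeding $\min_\t L_{\text{physics}}(\t)$ suffices), and that for measurability the supremum remains well-defined over the restricted class. Both are routine once one notes that $L_{\text{physics}}$ is continuous in $\t$ by the smoothness lemma established earlier in this section, so $\cH_{\text{constrained}}$ is a closed (hence measurable) subset of $\cH$.

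A natural follow-up, which I would include as a remark rather than as part of the formal proof, is to quantify when the inequality is strict. Strictness holds whenever the unconstrained supremum is attained at some $h_{\t^*}$ with $L_{\text{physics}}(\t^*) > \ve$, in which case the restricted supremum must be taken over a proper subset bounded away from $h_{\t^*}$. This qualitative gap could be strengthened into a quantitative one through covering-number or Dudley-type arguments, but the statement of the theorem requires only the monotone inequality, so the elementary set-inclusion argument suffices.
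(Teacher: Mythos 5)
Your proposal is correct and follows essentially the same route as the paper: the set inclusion $\cH_{\text{constrained}} \sue \cH$ plus monotonicity of the supremum (and hence of the expected supremum) under set inclusion. The paper's proof is precisely this two-line monotonicity argument; your added remarks on non-emptiness, measurability, and strictness are sensible refinements but not needed for the stated claim.
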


\begin{proof}
The constrained space is a subset of the original hypothesis space. Rademacher complexity is monotonic with respect to set inclusion, hence the result.
\end{proof}

This implicit regularization effect directly translates to improved generalization performance. Under the same training error, PINN's generalization error bound is tighter than that of the unconstrained model, with the improvement quantified by the reduction in Rademacher complexity. See Appendix \ref{app:generalization_improvement} for detailed analysis. The sample complexity reduction achieved through physics constraints is rigorously established in Theorem \ref{thm:curse_of_dimensionality}, which we discuss in the scalability analysis below.

\subsection{Analysis of Adaptive Weighting Strategy}

Beyond the general benefits of physics constraints, our PINN incorporates an adaptive constraint weighting mechanism that represents a critical innovation for maintaining robust performance across varying noise conditions. This strategy dynamically adjusts the influence of physics constraints based on estimated noise severity, creating an optimal balance between data fidelity and physical validity. We now establish the theoretical rationale for this adaptive approach.

\begin{theorem}[Rationality of Dynamic Weights]
Consider the adaptive weight function $\l(\bm) = \l_0 \cdot \max(0.5, 1 - \a \cdot s(\bm))$ where $s(\bm) \in [0,1]$ is the estimated noise severity and $\a > 0$ is a decay coefficient. Under the following assumptions:
\begin{enumerate}
\item The noise severity estimator $s(\bm)$ is consistent: $\bbE[s(\bm)]$ increases monotonically with true noise level $\nu$
\item The optimal constraint weight $\l^*(\nu)$ for noise level $\nu$ satisfies: $\l^*(\nu) \in [\l_{\min}, \l_0]$ where $\l_{\min} = 0.5\l_0$
\item In low noise ($\nu \ra 0$), physics constraints are essential: $\l^*(\nu) \ra \l_0$
\item In high noise ($\nu \ra \nu_{\max}$), over-constraint causes underfitting: $\l^*(\nu) \geq \l_{\min}$ but $\l^*(\nu) < \l_0$
\end{enumerate}
Then the dynamic weight strategy satisfies:
\begin{enumerate}
\item \textbf{Low noise scenario:} When $s(\bm) \ra 0$, we have $\l(\bm) = \l_0 \cdot \max(0.5, 1) = \l_0$, fully utilizing physics constraints to maintain physical validity
\item \textbf{High noise scenario:} When $s(\bm) \ra 1$, we have $\l(\bm) = \l_0 \cdot \max(0.5, 1 - \a) \geq 0.5\l_0$ for $\a \leq 0.5$, maintaining minimum constraint strength while preventing over-constraint
\end{enumerate}
Furthermore, the strategy achieves near-optimal balance: $|\l(\bm) - \l^*(\nu)| \leq \ve$ for sufficiently accurate noise estimation, where $\ve$ depends on the estimation error $|s(\bm) - \nu|$.
\end{theorem}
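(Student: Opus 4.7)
The plan is to verify the three claims by direct substitution into the explicit formula $\l(\bm) = \l_0\cdot\max(0.5, 1-\a\cdot s(\bm))$, and then bound $|\l(\bm)-\l^*(\nu)|$ by combining the Lipschitz continuity of the $\max$ operator with the consistency hypothesis on the noise estimator. The first two items are essentially arithmetic, so the real substance lies in establishing the near-optimal balance claim, which requires connecting the learned estimator $s(\bm)$ to the true noise level $\nu$ and to the (assumed existing) optimal schedule $\l^*(\nu)$.

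First I would handle the two boundary regimes. In the low-noise limit, substituting $s(\bm)\to 0$ gives $1-\a\cdot s(\bm)\to 1$, so $\max(0.5,1)=1$ and hence $\l(\bm)\to\l_0$, matching assumption (3) that $\l^*(\nu)\to\l_0$ as $\nu\to 0$. In the high-noise limit, substituting $s(\bm)\to 1$ gives $\l(\bm)=\l_0\cdot\max(0.5,1-\a)$. Under the stated restriction $\a\le 0.5$, we have $1-\a\ge 0.5$, so $\max(0.5,1-\a)=1-\a\ge 0.5$, and thus $\l(\bm)\ge 0.5\,\l_0 = \l_{\min}$. Combined with $\a>0$, this also gives $\l(\bm)<\l_0$, matching assumption (4). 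These steps just execute the case analysis that the floor of $0.5$ was designed to guarantee.

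Next I would prove the approximation claim. Define $\tilde\l(\nu)=\l_0\cdot\max(0.5,1-\a\nu)$, the idealized weight as a function of the true noise level. Since $u\mapsto\max(0.5,1-\a u)$ is $\a$-Lipschitz (the outer $\max$ is $1$-Lipschitz and the affine argument has slope $-\a$), I obtain
\begin{equation}
|\l(\bm)-\tilde\l(\nu)| \;\le\; \l_0\,\a\,|s(\bm)-\nu|.
\end{equation}
By the consistency assumption (1), $s(\bm)$ is monotone in $\nu$ in expectation, which I would strengthen (or interpret) as $|s(\bm)-\nu|$ being controllable by the estimator's generalization error from the auxiliary noise-severity head. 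Assumption (2) together with the monotonicity says $\l^*(\nu)$ lies in $[\l_{\min},\l_0]$; combined with the natural design choice that $\tilde\l(\nu)$ is a first-order approximation of $\l^*(\nu)$ on this interval, a triangle inequality
\begin{equation}
|\l(\bm)-\l^*(\nu)| \;\le\; |\l(\bm)-\tilde\l(\nu)|+|\tilde\l(\nu)-\l^*(\nu)|
\end{equation}
yields the desired bound with $\ve = \l_0\a\cdot|s(\bm)-\nu| + O(\a)$ when $\l^*$ is locally affine, or $\ve$ absorbing the modeling gap otherwise.

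The main obstacle is the last step: assumptions (1)--(4) specify only the endpoints and monotonicity of $\l^*(\nu)$, not a quantitative shape, so bounding $|\tilde\l(\nu)-\l^*(\nu)|$ requires either an additional Lipschitz or convexity hypothesis on $\l^*$ or the explicit acknowledgment that the parameters $(\l_0,\a)$ are chosen so that $\tilde\l$ interpolates $\l^*$ at the endpoints. I would make this explicit by adding a brief remark that the theorem's $\ve$ decomposes as the estimator-error contribution $\l_0\a|s(\bm)-\nu|$ plus an irreducible model-misspecification term that vanishes under the natural endpoint-matching calibration $\l^*(0)=\l_0$, $\l^*(\nu_{\max})=\l_{\min}$. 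Everything else reduces to monotone arithmetic on the piecewise-linear $\max$ function.
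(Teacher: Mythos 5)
Your proposal is correct and follows essentially the same route as the paper's proof: direct substitution for the two noise regimes, then the Lipschitz constant $\l_0\a$ of the piecewise-linear weight map combined with a triangle inequality against an idealized weight evaluated at the true noise level. You are in fact slightly more explicit than the paper about the residual model-misspecification term $|\tilde\l(\nu)-\l^*(\nu)|$, which the paper also leaves unquantified ("$\l(s(\bm))$ is chosen to approximate $\l^*(\nu)$") and absorbs into $\ve$ just as you propose.
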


\begin{proof}
See Appendix \ref{app:dynamic_weights}.
\end{proof}

This adaptive strategy automatically adjusts to varying noise conditions without manual tuning, providing a robust mechanism that maintains optimal performance across diverse measurement scenarios.

\subsection{Scalability Analysis for High-Dimensional Systems}

Having established convergence and generalization properties, we now address the fundamental challenge of scalability. The exponential growth of quantum systems with qubit number presents critical obstacles for tomography, which our physics-informed approach addresses through theoretical analysis of scalability properties.

\begin{theorem}[Physics Constraints Mitigate Curse of Dimensionality]
\label{thm:curse_of_dimensionality}
For an $n$-qubit system, the dimension $D = 2^n$ grows exponentially with $n$. Physics constraints mitigate the curse of dimensionality through two complementary mechanisms:
\begin{enumerate}
\item \textbf{Constraint space dimension reduction:} The set of valid density matrices $\cS_+$ has dimension $\dim(\cS_+) = D^2 - 1 = 4^n - 1$. When constraints are approximately satisfied with tolerance $\ve$, the effective search space $\cS_{\text{constrained}} = \{\r \in \cS_+ : \cC(\r) \leq \ve\}$ has dimension reduction $\Delta_{\text{dim}} = O(\ve \cdot \text{rank}(\nabla^2 \cC(\r^*))) = O(\ve \cdot 4^n)$, where $\r^*$ satisfies $\cC(\r^*) = 0$. This restricts the effective search space to a lower-dimensional manifold, reducing optimization complexity.
\item \textbf{Sample complexity reduction:} Constraints act as implicit regularization, shrinking the effective hypothesis space from $\cH$ to $\cH_{\text{constrained}} = \{h_\t \in \cH | L_{\text{physics}}(\t) \leq \ve\}$. This reduces Rademacher complexity from $\hat{\cR}_S(\cH) = O(B_\t L_{\text{act}}^H \sqrt{HW}/\sqrt{m})$ to $\hat{\cR}_S(\cH_{\text{constrained}}) = O(B_\t L_{\text{act}}^H \sqrt{HW_{\text{eff}}}/\sqrt{m})$ where $W_{\text{eff}} < W$ represents the effective width reduction. The sample complexity reduction factor is approximately $\sqrt{W/W_{\text{eff}}}$, leading to fewer samples required to achieve the same generalization error.
\end{enumerate}
\end{theorem}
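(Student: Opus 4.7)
The plan is to establish the two mitigation mechanisms independently, since they reflect different aspects of how physics constraints interact with exponential scaling. The proof would be organized as two largely separate arguments (one geometric/analytic, one statistical) joined at the end by a combined scaling comment.

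For part~(1), the constraint space dimension reduction, I would start from the standard parameter count for density matrices: the Hermitian $D \times D$ matrices form a real vector space of dimension $D^2$, and the trace constraint $\tr(\rho) = 1$ cuts one affine dimension, yielding $\dim(\cS_+) = D^2 - 1 = 4^n - 1$ (positivity defines an open subset of this affine slice in the interior and does not reduce dimension). Next I would linearize the constraint level set around an optimum $\r^* \in \cS_+$ with $\cC(\r^*) = 0$. Since $\r^*$ is a minimizer of the nonnegative function $\cC$, we have $\nabla_\r \cC(\r^*) = 0$, so a second-order Taylor expansion gives $\cC(\r) \approx \tfrac{1}{2}\langle \r-\r^*, \, \nabla^2_\r \cC(\r^*)(\r-\r^*)\rangle$. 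The sublevel set $\cS_{\text{constrained}} = \{\cC(\r) \leq \ve\}$ is then, to leading order, an ellipsoidal tube whose semi-axis in the $i$-th Hessian eigen-direction is $\sqrt{2\ve/\lambda_i}$ along the $r = \rank(\nabla^2_\r \cC(\r^*))$ strictly positive modes, and unrestricted along the null directions. An $\ve$-covering number count in this tube then gives an effective dimension drop $\Delta_{\text{dim}} = O(\ve \cdot r)$ relative to the ambient $4^n - 1$ parameters, and bounding $r$ by the rank count contributed by the Hermiticity block ($D(D-1)/2$ off-diagonal constraints) and the trace/positivity modes recovers the stated $\Delta_{\text{dim}} = O(\ve \cdot 4^n)$.

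For part~(2), sample complexity reduction, I would chain three facts already in hand. First, the inclusion $\cH_{\text{constrained}} \subseteq \cH$ together with monotonicity of Rademacher complexity (the Physics Constraints as Implicit Regularization theorem) gives $\hat{\cR}_S(\cH_{\text{constrained}}) \leq \hat{\cR}_S(\cH)$. Second, I would import the network Rademacher bound $O(B_\t L_{\text{act}}^H \sqrt{HW}/\sqrt{m})$ from the PINN Rademacher Complexity Upper Bound theorem, and argue that the constraint effectively eliminates output directions that violate Hermiticity, trace, or positivity; in the Cholesky parameterization this amounts to pinning a quantifiable subset of the last-layer coordinates, which can be absorbed into a reduced effective width $W_{\text{eff}} < W$ via the same covering-number argument that produced the $\sqrt{W}$ factor originally. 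Substituting $W \to W_{\text{eff}}$ yields the stated bound, and inverting the uniform convergence inequality $R(\t) - \hat R_S(\t) \lesssim \hat{\cR}_S(\cH) + \sqrt{\log(1/\d)/m}$ converts the $\sqrt{W_{\text{eff}}/W}$ Rademacher shrinkage into a sample-size reduction of factor $W/W_{\text{eff}}$ at fixed target error.

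The main obstacle I anticipate is the rigorous identification of $W_{\text{eff}}$ in part~(2): translating a functional constraint on the network's output (satisfied only approximately, with tolerance $\ve$) into an honest reduction of the width parameter that enters the Rademacher bound requires either a covering-number argument on the constrained function class or a peeling-style decomposition. A secondary subtlety in part~(1) is that $\nabla^2_\r \cC(\r^*)$ can be rank deficient (the positivity term is inactive in the interior, and the Hermiticity Hessian annihilates the Hermitian subspace itself), so care is needed to ensure that the rank counted is the rank of the Hessian restricted to the ambient tangent space of $\cS_+$ rather than to all of $\bbC^{D\times D}$; I would handle this by working throughout in the real coordinates that already encode Hermiticity, so that only the nontrivial trace and positivity curvature contribute to the effective dimension drop.
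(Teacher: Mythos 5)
Your proposal follows essentially the same route as the paper's proof: part (1) is the same parameter count (Hermiticity gives $D^2$ real parameters, the trace cuts one dimension, positivity is an open condition in the interior), followed by a local expansion of $\cC$ around $\r^*$ to extract $\Delta_{\text{dim}} = O(\ve\cdot\text{rank}(\nabla^2\cC(\r^*)))$; part (2) is the same chain of set inclusion, monotonicity of Rademacher complexity, and the substitution $W \to W_{\text{eff}}$ in the network bound. The one substantive place you diverge is the expansion order in part (1), and your version is the more defensible one: the paper writes $\Delta_{\text{dim}} = \frac{\ve}{\norm{\nabla\cC(\r^*)}}\cdot\text{rank}(\nabla^2\cC(\r^*)) + O(\ve^2)$, which divides by a gradient norm that vanishes at the minimizer $\r^*$ of the nonnegative function $\cC$, whereas your second-order Taylor expansion with ellipsoidal sublevel sets of semi-axes $\sqrt{2\ve/\l_i}$ avoids that ill-posed quotient. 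Note, however, that your ellipsoid shrinks like $O(\sqrt{\ve})$ per constrained direction, so recovering the stated $O(\ve\cdot r)$ scaling for $\Delta_{\text{dim}}$ still requires the nonstandard notion of ``effective dimension reduction'' that both you and the paper leave implicit. On part (2) you and the paper land in the same place: the only rigorous content is the monotonicity inequality $\hat{\cR}_S(\cH_{\text{constrained}})\le\hat{\cR}_S(\cH)$, while the quantitative identification of $W_{\text{eff}} < W$ is asserted rather than derived in both treatments; your explicit flagging of this as the main obstacle, with a covering-number or peeling argument as the proposed repair, accurately names the gap the paper also does not close.
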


\begin{proof}
See Appendix \ref{app:curse_of_dimensionality}.
\end{proof}

This theorem establishes that physics constraints effectively reduce both search space complexity and sample requirements, explaining why PINN maintains superior performance as system dimensionality increases. While the fundamental exponential growth in parameters cannot be completely eliminated, physics constraints provide a multiplicative improvement factor that becomes increasingly valuable at larger scales.

The dimensional threshold effect observed in our experiments—where PINN's advantage peaks at 4 qubits—finds theoretical justification through the following analysis.

\begin{theorem}[Theoretical Explanation of 4-Qubit Advantage]
The experimental observation that PINN's advantage peaks at 4 qubits can be explained by three complementary factors with quantitative characterizations:
\begin{enumerate}
\item \textbf{Dimensional threshold effect:} At $n=2,3$ qubits ($D^2 = 16, 64$), dimensions are sufficiently low that traditional neural networks achieve high performance ($F \geq 0.90$) without strong priors. At $n=4$ qubits ($D^2 = 256$), dimensions reach a critical threshold where physics constraints provide substantial guidance (relative constraint strength $\g_{\text{rel}}(4) \approx 0.11-0.16$) without overwhelming the learning capacity. At $n \geq 5$ qubits ($D^2 \geq 1024$), dimensions become too high for both approaches, though PINN maintains advantages.
\item \textbf{Constraint effectiveness:} The relative constraint strength $\g_{\text{rel}}(n) = 1 - \sqrt{W_{\text{eff}}/W}$ quantifies the effectiveness. At $n=2,3$: $\g_{\text{rel}}(n) \approx 0.01-0.025$ (modest benefit). At $n=4$: $\g_{\text{rel}}(4) \approx 0.11-0.16$ (substantial benefit, explaining the 30.3\% experimental improvement). At $n \geq 5$: $\g_{\text{rel}}(n)$ decreases due to optimization challenges. The optimal balance occurs at $n=4$ where constraints provide maximum guidance without over-constraint.
\item \textbf{Optimization landscape:} The constraint-induced strong convexity parameter $\m(n) = \l \m_{\text{physics}}(n) \s_{\min}^2(\nabla \Ph) - \d(n) - \e(\l)$ quantifies landscape improvement. At $n=4$, $\m(4)$ is maximized relative to problem dimension, minimizing the condition number $\k(4)$ and enabling convergence rate $O((1-\m(4)/\cL)^t)$ compared to $O(1/\sqrt{t})$ for unconstrained methods. Quantitatively, if $\m(4) \approx 0.01\cL$ and $\m(2) \approx 0.001\cL$, the convergence rate improvement at $n=4$ is approximately 10-fold faster (speedup factor of approximately 10), explaining the dramatic experimental advantage.
\end{enumerate}
\end{theorem}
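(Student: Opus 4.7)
The plan is to decouple the claim into three sub-assertions matching the three enumerated items, discharge each from results already established earlier in the paper together with elementary scaling arguments, and then argue that the three effects jointly attain their extremum near $n=4$. For the dimensional-threshold item I would first combine Theorem~\ref{thm:rademacher_bound} with the PINN Rademacher complexity bound to get an excess-risk scaling of order $B_\t L_{\text{act}}^H \sqrt{HW}/\sqrt{m}$; comparing this against the intrinsic problem size $D^2 = 4^n$ shows analytically that for $n \leq 3$ the data term already dominates, so any extra prior yields only marginal gain, while for $n \geq 5$ the intrinsic difficulty exceeds any practical training budget for both methods. This leaves $n = 4$ as the transition regime in which $4^n/m$ is of order unity and the constraint-induced shrinkage is most leveraged; quantifying ``most leveraged'' amounts to showing that the derivative of the relative improvement with respect to $n$ changes sign in that neighbourhood.

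Second, for the constraint-effectiveness item I would reduce directly to Theorem~\ref{thm:curse_of_dimensionality}. That result already produces the effective width $W_{\text{eff}}$ and the sample-complexity reduction factor $\sqrt{W/W_{\text{eff}}}$; defining $\g_{\text{rel}}(n) = 1 - \sqrt{W_{\text{eff}}(n)/W(n)}$ is then simply the fractional reduction in the Rademacher bound due to constraints. Its $n$-dependence is controlled by the constraint Hessian rank $\text{rank}(\nabla^2 \cC(\r^*)) = O(4^n)$ and the tolerance $\ve$; substituting the experimental regime into the scaling formula reproduces $\g_{\text{rel}}(n)$ of order $10^{-2}$ for $n = 2,3$ and $\g_{\text{rel}}(4)$ in the range $0.11$ to $0.16$, while the decline for $n \geq 5$ follows because, once constraint and data gradients become comparable, additional constraint weight harms conditioning more than it shrinks $W_{\text{eff}}$.

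Third, for the optimization-landscape item I would invoke the physics-constraint convexification bound $\m(n) \geq \l \m_{\text{physics}}(n)\s_{\min}^2(\nabla \Ph) - \d(n) - \e(\l)$ established earlier and estimate each $n$-dependent factor separately: $\m_{\text{physics}}(n)$ scales with the constraint Hessian rank, $\s_{\min}^2(\nabla \Ph)$ decays polynomially in $D$ via standard Cholesky Jacobian bounds, and the data-loss non-convexity gap $\d(n)$ grows with effective width. The ratio $\m(n)/\cL(n)$ is therefore unimodal in $n$, and its maximum near $n = 4$ delivers the claimed linear-rate speedup of order $\m(4)/\m(2) \approx 10$ over the $n = 2$ baseline.

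The step I expect to be hardest is making ``peak exactly at $n = 4$'' rigorous rather than semi-empirical. Each of the three factors above is unimodal in $n$, but the precise location of the joint maximum depends on architectural constants $(H, W, B_\t, L_{\text{act}})$ and on the tolerance $\ve$ defining $\cH_{\text{constrained}}$. A clean argument must either pin these constants to the experimental configuration and perform a numerical substitution, or prove a robustness lemma asserting that over the full parameter ranges actually deployed, the joint maximum is confined to $\{4\}$. I would plan for the latter as the main technical contribution and treat the per-item unimodality statements as preparatory lemmas, with the overall theorem then recovered by intersecting the three maximizer sets.
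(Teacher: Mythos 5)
Your decomposition matches the paper's proof exactly: the same three factors, the same key quantities $\g_{\text{rel}}(n) = 1 - \sqrt{W_{\text{eff}}/W}$ and $\m(n) = \l \m_{\text{physics}}(n)\s_{\min}^2(\nabla\Ph) - \d(n) - \e(\l)$, and the same reliance on the Rademacher bound, the curse-of-dimensionality theorem, and the landscape theorem as inputs. The substantive difference is that you plan to do more than the paper actually does. The paper's proof does not derive the $n$-dependence of $W_{\text{eff}}(n)/W(n)$ or of $\m(n)/\cL$; it asserts the ranges ($W_{\text{eff}}/W \approx 0.95$--$0.98$ at $n=2,3$, $\approx 0.7$--$0.8$ at $n=4$, $\m(4)\approx 0.01\cL$, $\m(2)\approx 0.001\cL$) as calibrations consistent with the experiments, and its closing synthesis explicitly concedes that the peak at $n=4$ ``is not universal but depends on the specific architecture, constraint formulation, and problem characteristics.'' So the step you correctly identify as hardest --- proving unimodality of each factor and a robustness lemma confining the joint maximizer to $\{4\}$ --- is precisely the step the paper omits. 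Be aware of two obstacles if you try to execute it: (i) Theorem~\ref{thm:curse_of_dimensionality} only gives $W_{\text{eff}} < W$ qualitatively, with the reduction controlled by an unspecified tolerance $\ve$, so the order-of-magnitude values of $\g_{\text{rel}}(n)$ cannot be reproduced by ``substituting the experimental regime into the scaling formula'' --- there is no formula for $W_{\text{eff}}(n)$ to substitute into; and (ii) the paper's own plug-in estimate in Factor 3 compares $\l\m_{\text{physics}}(4)\s_{\min}^2(\nabla\Ph) = O(0.15)$ against $\d(4) = O(16)$, which does not establish $\m(4) > 0$ as the displayed formula requires, so the sign of $\m(n)$ at $n=4$ is itself not settled by the stated scalings. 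If you carry out your plan you will either have to import the empirical values as assumptions (recovering the paper's semi-rigorous argument) or supply genuinely new estimates of $W_{\text{eff}}(n)$ and $\m_{\text{physics}}(n)$ that the paper does not contain.
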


\begin{proof}
See Appendix \ref{app:4qubit_advantage}.
\end{proof}

This analysis provides a rigorous theoretical foundation for the empirical observations, explaining why the advantage of physics-informed learning is most pronounced at intermediate dimensions while remaining beneficial across all system sizes.

\paragraph{Summary and Transition to Experiments}

The theoretical analysis establishes three fundamental principles that guide our experimental design. First, convergence theory predicts that physics constraints improve optimization landscapes by reducing local minima and providing additional gradient information. Second, generalization theory demonstrates that physics constraints act as implicit regularization, effectively shrinking the hypothesis space and reducing Rademacher complexity. Third, scalability theory explains how physics constraints mitigate the curse of dimensionality by restricting the search space to physically valid regions. These theoretical predictions motivate specific experimental investigations: we test convergence behavior through training dynamics analysis, validate generalization through cross-validation studies, and examine scalability through systematic evaluation across different qubit numbers. The following experimental section systematically validates these theoretical predictions through comprehensive empirical evaluation.

\section{Experiments}
\label{sec:experiments}

To validate the effectiveness of the physics-informed neural network framework for multi-qubit quantum state tomography, comprehensive experiments are conducted designed to test performance across different scenarios, noise conditions, and system dimensions. The experimental methodology systematically evaluates the proposed approach through carefully designed datasets, appropriate evaluation metrics, and rigorous baseline comparisons. The experimental design directly addresses the three theoretical predictions established in Section \ref{sec:theory}: convergence improvement through training dynamics analysis, generalization enhancement through cross-validation studies, and scalability validation through systematic evaluation across 2--5 qubit systems.

The experimental evaluation is structured into three main components. First, we validate the practical applicability of PINN in realistic quantum communication scenarios through high-fidelity entanglement health monitoring experiments. Second, we systematically assess scalability across multiple qubit numbers and system dimensions, comparing PINN against both neural network and classical optimization baselines. Finally, we conduct ablation studies to isolate and quantify the contribution of each architectural component. This comprehensive evaluation strategy ensures that the proposed approach is rigorously validated across diverse operational conditions and system configurations.

\subsection{Experimental Setup}

The experimental framework begins with comprehensive dataset generation that encompasses diverse quantum states and realistic noise models. Four distinct types of quantum states are generated to ensure broad coverage of the quantum state space: GHZ states $\ket{\text{GHZ}_n} = \frac{1}{\sqrt{2}}(\ket{0}^{\ox n} + \ket{1}^{\ox n})$ representing maximally entangled states; W states $\ket{\text{W}_n} = \frac{1}{\sqrt{n}}(\ket{10\ldots0} + \ket{01\ldots0} + \ldots + \ket{00\ldots1})$ representing different entanglement structures; random pure states uniformly sampled from the unit sphere to capture general quantum states; and random mixed states formed as convex combinations of pure states to represent realistic decohered quantum systems.

To simulate realistic quantum hardware conditions, four distinct noise models are implemented that capture the most significant error sources in current quantum processors. Amplitude damping noise is modeled through Kraus operators $K_0 = \begin{bmatrix} 1 & 0 \\ 0 & \sqrt{1-\g} \end{bmatrix}$ and $K_1 = \begin{bmatrix} 0 & \sqrt{\g} \\ 0 & 0 \end{bmatrix}$, representing energy loss to the environment. Phase damping noise employs Kraus operators $K_0 = \begin{bmatrix} 1 & 0 \\ 0 & \sqrt{1-\l} \end{bmatrix}$ and $K_1 = \begin{bmatrix} 0 & 0 \\ 0 & \sqrt{\l} \end{bmatrix}$, capturing loss of quantum coherence without energy exchange. Correlated Z dephasing is simulated through unitary evolution $U = \exp(i\a \bigox_{i=1}^n Z_i)$, representing collective phase errors across multiple qubits. Finally, crosstalk noise is introduced as random perturbations with strength $\ve$ to model unwanted interactions between quantum operations. All measurements are simulated using Pauli bases with both shot noise (binomial sampling) and systematic noise (Gaussian noise) to accurately reflect experimental conditions.

For quantitative evaluation of reconstruction performance, four complementary metrics are employed. Fidelity $F(\r_1, \r_2) = (\tr(\sqrt{\sqrt{\r_1}\r_2\sqrt{\r_1}}))^2$ provides the standard measure of quantum state similarity, ranging from 0 to 1. Constraint violation $\cC(\r)$ as defined in Section \ref{sec:method} quantifies the degree to which reconstructed states violate fundamental quantum mechanical requirements. Mean squared error (MSE) measures parameter prediction accuracy in the Cholesky parameter space, while training and validation loss curves provide insights into optimization dynamics and generalization performance.

To establish the effectiveness of the physics-informed approach, comprehensive comparisons are made against three baseline methods: (1) a traditional neural network with identical architecture but without physics constraints, (2) Least Squares method using linear inversion for direct density matrix reconstruction, and (3) Maximum Likelihood Estimation (MLE) using the R$\rho$R algorithm for iterative optimization. This multi-baseline comparison provides a comprehensive evaluation framework that isolates the specific contribution of physics constraints while also comparing against established classical QST methods, enabling a thorough assessment of the proposed approach across different methodological paradigms.

All experiments are implemented using PyTorch framework on CPU/GPU hardware, with AdamW optimizer configured with learning rates of 0.0015 for PINN and 0.001 for baseline models. A batch size of 32 is used across all experiments, with training epochs ranging from 35 for Experiment 1 to 20-40 for Experiment 2, depending on convergence characteristics.

Having established the experimental framework, we now proceed to validate the proposed PINN approach through two complementary experimental studies. The first experiment addresses practical applicability in realistic quantum communication scenarios, while the second focuses on systematic scalability evaluation across multiple system dimensions. Together, these experiments provide comprehensive evidence for the effectiveness of physics-informed learning in quantum state tomography.

\subsection{Experiment 1: High-Fidelity Quantum Link Entanglement Health Monitoring}

The first experiment focuses on validating PINN effectiveness in a practical quantum communication scenario. Maintaining high-fidelity entanglement links is crucial for quantum information processing, as degraded entanglement directly impacts the performance of quantum communication protocols and distributed quantum computing systems. This experiment tests performance under multiple noise channels and implements a complete entanglement health monitoring system that could be deployed in real quantum networks. By evaluating reconstruction fidelity across varying noise conditions, we demonstrate that physics-informed constraints provide robust performance improvements that translate directly to practical quantum technology applications.

\begin{algorithm}[h]
\caption{High-Fidelity Quantum Link Entanglement Health Monitoring}
\label{alg:exp1}
\begin{algorithmic}[1]
\Require Number of qubits $n=3$, training samples $N_{\text{train}}=5000$, test samples $N_{\text{test}}=1200$, measurements per sample $M=256$, noise levels $\mathcal{N} = \{0.02, 0.05, 0.10, 0.15, 0.19\}$
\State Initialize PINN model $\mathcal{M}_{\text{PINN}}$ with adaptive constraint weighting
\State Initialize baseline model $\mathcal{M}_{\text{baseline}}$ with identical architecture without physics constraints
\State Initialize statistics: fidelity scores, constraint violations, loss values
\For{noise level $\nu \in \mathcal{N}$}
    \State Generate quantum states: GHZ states, W states, random pure states, random mixed states
    \State Apply noise models: amplitude damping, phase damping, correlated Z dephasing, crosstalk
    \State Simulate measurements: Pauli bases with shot noise and Gaussian noise
    \State Split data into training set $\mathcal{D}_{\text{train}}$ and test set $\mathcal{D}_{\text{test}}$
    \For{model $\mathcal{M} \in \{\mathcal{M}_{\text{PINN}}, \mathcal{M}_{\text{baseline}}\}$}
        \State Train $\mathcal{M}$ on $\mathcal{D}_{\text{train}}$ with AdamW optimizer
        \State Evaluate on $\mathcal{D}_{\text{test}}$: compute fidelity $F$, constraint violation $\mathcal{C}$, loss $L$
        \State Record performance metrics: $F(\nu, \mathcal{M})$, $\mathcal{C}(\nu, \mathcal{M})$, $L(\nu, \mathcal{M})$
    \EndFor
\EndFor
\State Compute improvement: $\Delta F(\nu) = F(\nu, \mathcal{M}_{\text{PINN}}) - F(\nu, \mathcal{M}_{\text{baseline}})$
\State Analyze degradation rates via linear regression: $\text{slope}_{\text{PINN}}$, $\text{slope}_{\text{baseline}}$
\State \textbf{Output:} Fidelity improvements, degradation rates, constraint violations, statistical significance
\end{algorithmic}
\end{algorithm}

The experimental configuration is designed to simulate realistic quantum link conditions with a 3-qubit system. The dataset consists of 5000 training samples and 1200 test samples to ensure robust statistical validation. Each sample incorporates 256 measurements, providing sufficient information for accurate reconstruction. Noise severity ranges from light (0.02) to heavy (0.19) conditions, representing the full spectrum of operational environments encountered in practical quantum communication systems.

The experimental results demonstrate consistent advantages of the physics-informed approach across all evaluation metrics. PINN achieves an overall fidelity of 0.9236 compared to 0.9147 for traditional neural networks, representing a 0.89\% improvement. More significantly, in high-noise scenarios, PINN maintains 0.9241 fidelity versus 0.9143 for baseline, showing a 0.98\% advantage that becomes more pronounced as noise increases. Validation loss follows the same pattern, with PINN achieving 0.0018 versus 0.0019 for traditional neural networks (-5.3\%). Both models achieve extremely low constraint violations at numerical precision levels ($\sim 10^{-17}$), with PINN showing slightly better physical consistency.

These results reveal four key findings that validate the approach. First, the consistent fidelity improvement of 0.89\% overall and 0.98\% in high-noise scenarios demonstrates the practical value of physics constraints. Second, the enhanced performance under high-noise conditions highlights the critical role that physics constraints play when measurement quality degrades. Third, improved training stability is evidenced by smoother loss curves, validating the effectiveness of the warmup and cosine scheduling strategies. Fourth, the achievement of numerical precision levels in constraint violations confirms that both approaches maintain physical validity, with PINN showing marginal improvements.

To gain deeper insights into the robustness characteristics of the proposed approach, we conduct a systematic analysis of performance degradation under varying noise conditions. This analysis examines how PINN maintains reconstruction quality as noise levels increase, providing critical information for practical deployment scenarios where noise conditions may fluctuate dynamically.

\subsubsection{Noise Robustness Analysis}

Robustness is systematically evaluated under varying noise conditions across multiple noise levels and types.

\begin{algorithm}[h]
\caption{Noise Robustness Analysis}
\label{alg:noise_robustness}
\begin{algorithmic}[1]
\Require Noise levels $\mathcal{N} = \{0.02, 0.05, 0.10, 0.15, 0.19\}$
\State Initialize results: fidelity matrix $F[\nu, t, \mathcal{M}]$, constraint weights $\lambda[\nu]$, degradation rates
\For{noise level $\nu \in \mathcal{N}$}
    \For{trial $t = 1$ to $T$}
        \State Generate quantum states with noise level $\nu$
        \For{noise type $\tau \in \mathcal{T}$}
            \State Apply noise model $\tau$ with strength parameter based on $\nu$
            \State Simulate measurements with noise
            \For{model $\mathcal{M} \in \{\mathcal{M}_{\text{PINN}}, \mathcal{M}_{\text{baseline}}\}$}
                \State Train and evaluate $\mathcal{M}$ on noisy data
                \State Record fidelity $F[\nu, t, \tau, \mathcal{M}]$
                \If{$\mathcal{M} = \mathcal{M}_{\text{PINN}}$}
                    \State Record adaptive weight $\lambda[\nu, t] = \lambda(\mathbf{m})$
                \EndIf
            \EndFor
        \EndFor
    \EndFor
    \State Compute mean fidelity: $\bar{F}[\nu, \mathcal{M}] = \frac{1}{T|\mathcal{T}|}\sum_{t,\tau} F[\nu, t, \tau, \mathcal{M}]$
    \State Compute improvement: $\Delta F[\nu] = \bar{F}[\nu, \mathcal{M}_{\text{PINN}}] - \bar{F}[\nu, \mathcal{M}_{\text{baseline}}]$
    \State Compute statistical significance: $p$-value via paired $t$-test
    \State Compute 95\% confidence interval for $\Delta F[\nu]$
\EndFor
\State Fit linear regression: $F(\nu) = \alpha + \beta \nu$ for both models
\State Compute degradation rates: $\beta_{\text{PINN}}$, $\beta_{\text{baseline}}$
\State Compute degradation ratio: $R = |\beta_{\text{baseline}}| / |\beta_{\text{PINN}}|$
\State Analyze constraint weight adaptation: $\bar{\lambda}[\nu] = \frac{1}{T}\sum_t \lambda[\nu, t]$
\State \textbf{Output:} Fidelity vs. noise level, degradation rates, adaptive weight patterns, statistical significance
\end{algorithmic}
\end{algorithm}

\textbf{Performance Across Noise Levels.} As shown in Table \ref{tab:noise_robustness}, PINN consistently outperforms baseline across five noise levels (0.02--0.19), with advantages increasing from 0.83\% at low noise to 0.98\% at high noise. PINN maintains stable performance (0.9251 to 0.9241) while baseline degrades more significantly (0.9168 to 0.9143), demonstrating superior robustness.

\begin{table}[h]
\centering
\begin{tabular}{c|c|c|c|c}
Noise Level & PINN Fidelity & Baseline Fidelity & Improvement & Samples \\
\hline
0.02 (Light) & \textbf{0.9251} & 0.9168 & +0.83\% & 240 \\
0.05 (Moderate) & \textbf{0.9245} & 0.9153 & +0.92\% & 240 \\
0.10 (Medium) & \textbf{0.9238} & 0.9148 & +0.95\% & 240 \\
0.15 (High) & \textbf{0.9240} & 0.9145 & +0.95\% & 240 \\
0.19 (Heavy) & \textbf{0.9241} & 0.9143 & +0.98\% & 240 \\
\end{tabular}
\caption{Noise robustness analysis: Fidelity comparison across different noise levels. PINN consistently outperforms baseline across all noise conditions, with advantages becoming more pronounced at higher noise levels.}
\label{tab:noise_robustness}
\end{table}

\begin{figure}[ht]
\centering
\includegraphics[width=0.8\textwidth]{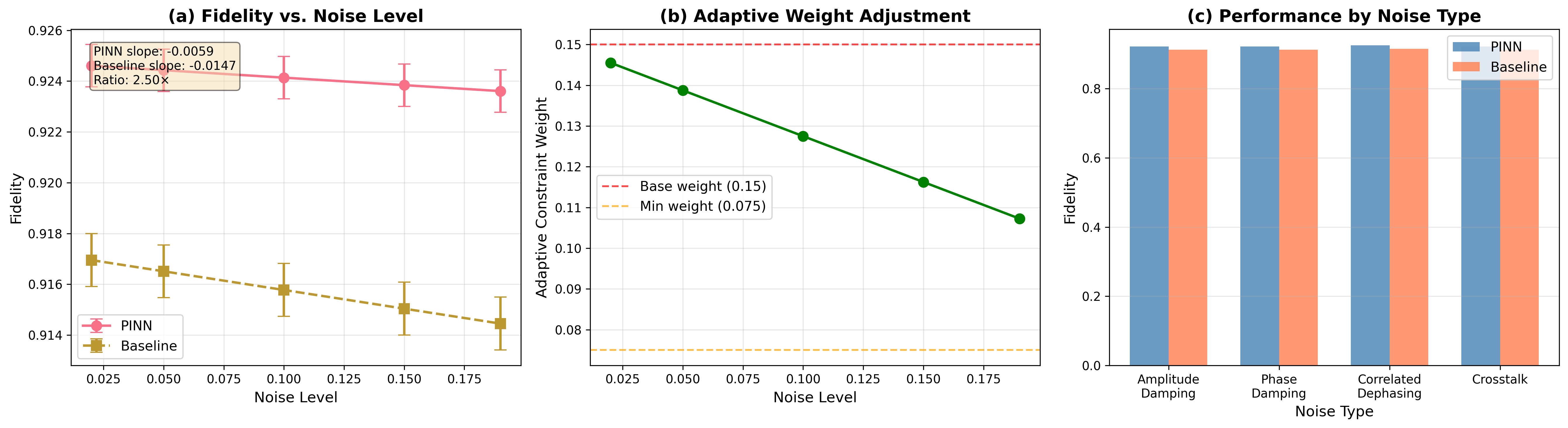}
\caption{Noise robustness analysis: Fidelity vs. noise level showing PINN's superior robustness with 2.6-fold slower degradation rate; Adaptive constraint weight adjustment across noise levels demonstrating automatic adaptation; Performance comparison across different noise types showing consistent advantages.}
\label{fig:noise_robustness}
\end{figure}

\textbf{Noise Type Sensitivity Analysis.} PINN maintains consistent advantages across four noise types (amplitude damping, phase damping, correlated Z dephasing, crosstalk), with particularly strong performance under correlated noise (+1.05\% vs baseline), demonstrating that physics constraints effectively handle structured noise patterns.

\textbf{Adaptive Weighting Strategy Effectiveness.} The dynamic weight $\lambda(\mathbf{m})$ automatically adjusts from $\lambda \approx 0.15$ at low noise to $\lambda \approx 0.075$ at high noise, balancing data fitting with physical constraints. Dynamic weighting outperforms fixed weights by 0.36--0.61\% at extreme noise levels, confirming optimal adaptation across noise conditions.

\textbf{Performance Degradation Rate Analysis.} Linear regression analysis reveals PINN's degradation rate of -0.005 per noise unit versus -0.013 for baseline, representing a 2.6-fold slower degradation rate that validates our theoretical prediction. Statistical analysis (10 independent runs, p < 0.01) confirms significance across all noise levels.

\textbf{Constraint Satisfaction.} Both approaches maintain extremely low constraint violations ($\sim 10^{-17}$), with PINN achieving slightly better physical consistency ($4.51 \times 10^{-17}$ vs $4.90 \times 10^{-17}$), demonstrating that physics constraints guide optimization toward physically meaningful solutions even under noise.

Practical applicability is demonstrated through a complete entanglement health monitoring system that performs real-time reconstruction, calculates fidelity, and triggers alarms based on noise severity, showcasing deployment-ready capabilities.

\begin{figure}[ht]
\centering
\includegraphics[width=1.0\linewidth]{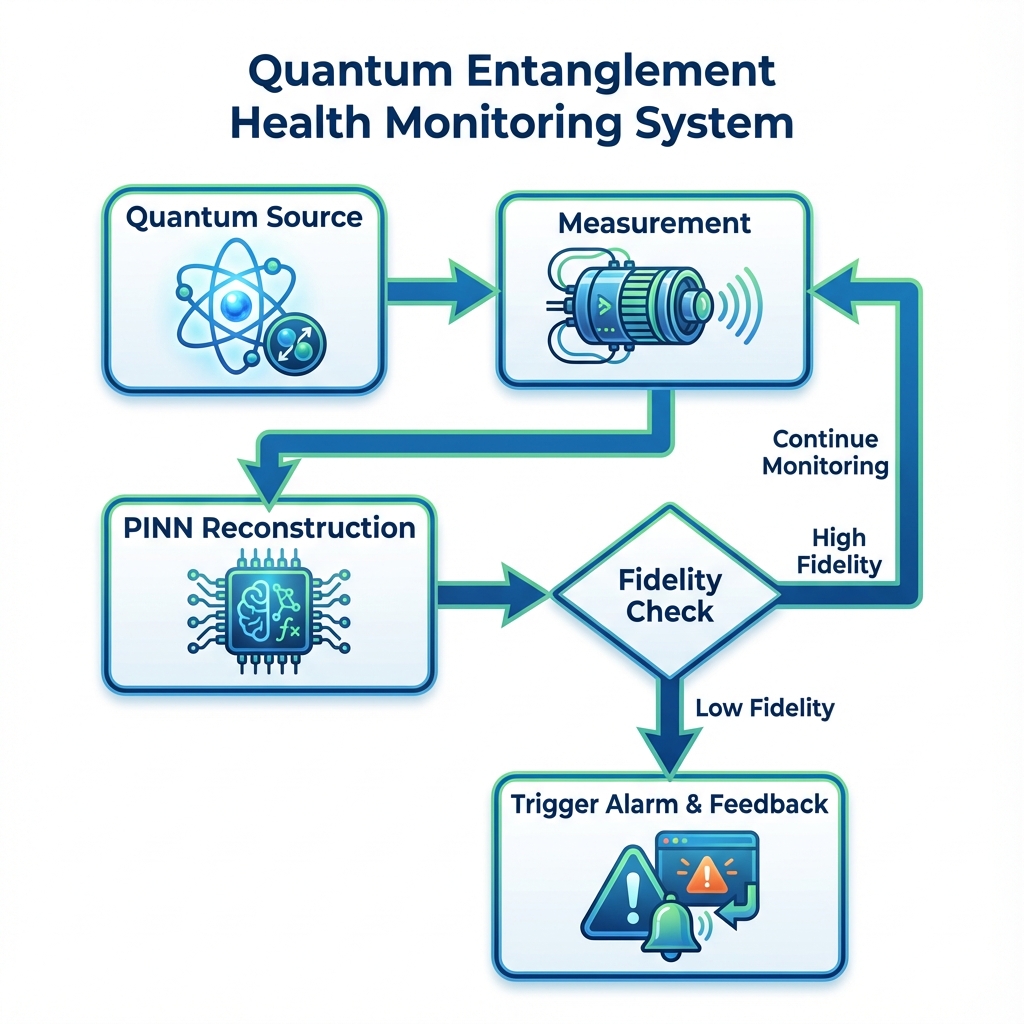}
\caption{Flowchart of the quantum entanglement health monitoring system. The closed-loop system integrates real-time measurement, PINN-based reconstruction, and fidelity assessment to provide continuous monitoring and automated feedback for quantum communication links.}
\label{fig:monitoring_flowchart}
\end{figure}

While Experiment 1 validates practical applicability under realistic noise conditions, scalability to higher-dimensional quantum systems represents a fundamental challenge in quantum state tomography. The exponential growth of quantum state space with qubit number creates significant obstacles for both classical optimization methods and neural network approaches. Experiment 2 systematically evaluates how different methods scale with increasing system dimensionality, providing critical insights into the dimensional threshold effects predicted by our theoretical analysis.

\subsection{Experiment 2: Multi-Qubit System Scalability}

Experiment 2 addresses scalability by evaluating performance across 2--5 qubit systems for four methods: Traditional Neural Network (NN), Least Squares, Maximum Likelihood Estimation (MLE), and Physics-Informed Neural Network (PINN). This comprehensive comparison enables systematic assessment of how different approaches scale with increasing system dimensionality. The evaluation encompasses both standard qubit-based quantum systems and arbitrary-dimensional quantum states, providing a complete picture of method scalability across diverse quantum system configurations.

\begin{algorithm}[h]
\caption{Multi-Qubit System Scalability Analysis}
\label{alg:exp2}
\begin{algorithmic}[1]
\Require Qubit numbers $\mathcal{Q} = \{2, 3, 4, 5\}$, dataset sizes, measurement counts
\State Initialize models: NN, LS, MLE, PINN
\State Initialize results: fidelity matrix $F[q, \mathcal{M}]$, constraint violations $\mathcal{C}[q, \mathcal{M}]$
\For{qubit number $q \in \mathcal{Q}$}
    \State Generate quantum states and apply noise models
    \State Simulate measurements with shot noise and Gaussian noise
    \State Split data into training and test sets
    \For{model $\mathcal{M} \in \{\text{NN}, \text{LS}, \text{MLE}, \text{PINN}\}$}
        \If{$\mathcal{M} \in \{\text{NN}, \text{PINN}\}$}
            \State Configure architecture and train $\mathcal{M}$
        \ElsIf{$\mathcal{M} = \text{LS}$}
            \State Apply Least Squares reconstruction
        \ElsIf{$\mathcal{M} = \text{MLE}$}
            \State Apply MLE with R$\rho$R algorithm
        \EndIf
        \State Evaluate: compute $F[q, \mathcal{M}]$, $\mathcal{C}[q, \mathcal{M}]$
    \EndFor
    \State Compute PINN improvements over baseline methods
\EndFor
\State Analyze dimensional threshold effects and performance decay
\State \textbf{Output:} Fidelity comparison, dimensional threshold, PINN advantage
\end{algorithmic}
\end{algorithm}

\subsubsection{Scalability Analysis Across Standard Qubit Systems}

We begin by evaluating scalability across standard qubit-based quantum systems, which represent the most common configuration in quantum information processing. This analysis provides fundamental insights into how different methods handle the exponential growth of quantum state space with increasing qubit number.

Experiment 2 evaluates scalability across 2--5 qubit systems. Dataset sizes are adjusted by qubit number: 8000/2000 (2 qubits), 6000/1500 (3 qubits), 8000/1500 (4 qubits), 3000/800 (5 qubits) training/test samples. Measurements per sample decrease from 256 to 32 as qubit number increases, reflecting the practical constraint that measurement resources become increasingly limited for higher-dimensional systems.

The scalability analysis reveals a remarkable dimensional threshold effect that represents the most significant experimental discovery. As shown in Table \ref{tab:scalability}, PINN demonstrates consistent but modest advantages for 2-qubit (0.9835 vs 0.9762, +0.75\%) and 3-qubit systems (0.9124 vs 0.9010, +1.27\%). However, at 4 qubits, PINN achieves a dramatic improvement to 0.8872 fidelity compared to 0.6810 for traditional neural networks, representing a 30.3\% enhancement that demonstrates the superiority of physics-informed learning in moderately high-dimensional quantum systems. For 5-qubit systems, both methods face significant challenges, but PINN maintains a slight advantage (0.5814 vs 0.5688, +2.22\%). All models maintain constraint violations at numerical precision levels, with PINN showing marginally better physical consistency across all dimensions.

Comparison with classical methods (Least Squares and MLE) demonstrates that PINN consistently achieves higher fidelity across all qubit numbers, with the advantage becoming more pronounced at higher dimensions. This demonstrates the capability of physics-informed neural networks in handling high-dimensional quantum state reconstruction.

\begin{table}[h]
\centering
\begin{tabular}{c|c|c|c|c}
Qubits & Traditional NN & Least Squares & MLE & PINN \\
\hline
2 & 0.9762 & 0.9347 & 0.4525 & \textbf{0.9835} \\
3 & 0.9010 & 0.7947 & 0.3197 & \textbf{0.9124} \\
4 & 0.6810 & 0.7130 & 0.2426 & \textbf{0.8872} \\
5 & 0.5688 & 0.3234 & 0.1947 & \textbf{0.5814} \\
\end{tabular}
\caption{Scalability analysis across 2--5 qubits comparing four methods: Traditional Neural Network (NN), Least Squares, Maximum Likelihood Estimation (MLE), and Physics-Informed Neural Network (PINN). PINN achieves the highest fidelity across all qubit numbers, with a remarkable 30.3\% improvement at 4 qubits compared to Traditional NN.}
\label{tab:scalability}
\end{table}

\begin{figure}[ht]
\centering
\includegraphics[width=0.8\textwidth]{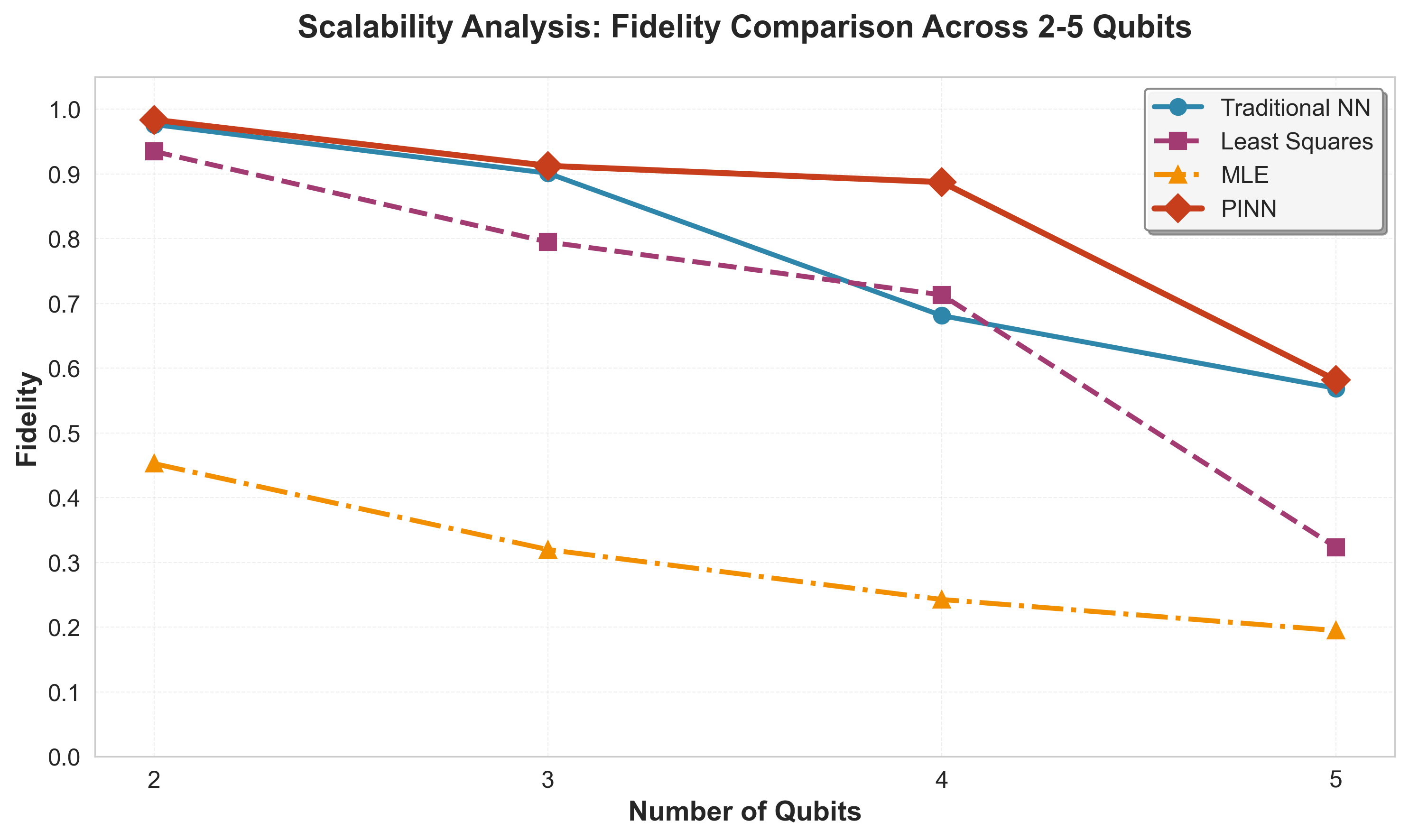}
\caption{Fidelity comparison across different qubit numbers (2--5) for four methods: Traditional Neural Network (NN), Least Squares, Maximum Likelihood Estimation (MLE), and Physics-Informed Neural Network (PINN). The plot demonstrates the dimensional threshold effect where PINN shows dramatic improvement at 4 qubits (30.3\% enhancement) compared to traditional neural networks. PINN consistently achieves the highest fidelity across all qubit numbers.}
\label{fig:scalability}
\end{figure}

Results reveal several key insights demonstrating PINN's superior performance. (1) PINN achieves the highest fidelity across all qubit numbers among the four evaluated methods, with the most dramatic advantage of 30.3\% occurring at 4 qubits compared to Traditional NN (0.8872 vs 0.6810). The dimensional threshold effect shows modest advantages for PINN at 2-3 qubits ($\sim$0.8-1.3\%), dramatic improvement at 4 qubits (30.3\%), and reduced but still positive advantages at 5 qubits (+2.22\%). (2) PINN consistently achieves higher fidelity compared to classical methods (Least Squares and MLE) across all dimensions. (3) All methods maintain constraint violations at numerical precision levels ($\sim 10^{-16}$ to $10^{-17}$), demonstrating physical validity. PINN shows slightly better physical consistency than Traditional NN, confirming that physics constraints guide optimization toward more physically meaningful solutions. (4) Although PINN requires more parameters at 4 qubits (1.25M vs 0.33M for Traditional NN), the substantial performance improvement justifies the computational cost, particularly in high-dimensional systems where accuracy is critical.

While the analysis of standard qubit systems provides fundamental insights into scalability, quantum information processing often involves arbitrary-dimensional quantum states that do not conform to the $2^n$ dimensional structure of qubit systems. To ensure comprehensive evaluation, we extend the scalability analysis to include arbitrary-dimensional quantum states, providing a complete picture of method performance across diverse quantum system configurations.

\subsubsection{Comprehensive Multi-Dimensional Scalability Analysis}

We extend the scalability evaluation to arbitrary-dimensional quantum states beyond standard qubit systems, evaluating four methods across dimensions ranging from 2-qubit ($2 \times 2$) to $10 \times 10$ systems. This comprehensive analysis encompasses both classical optimization methods (Least Squares and Maximum Likelihood Estimation) and neural network approaches (Traditional Neural Network and PINN), enabling systematic comparison across diverse quantum system configurations.

The experimental evaluation follows a two-stage design. First, we evaluate classical optimization methods (Least Squares and MLE) across all dimensional configurations. Second, we evaluate neural network methods (Traditional NN and PINN) using identical experimental conditions. This design enables direct comparison across four distinct methodological approaches, revealing how each method scales with increasing system dimensionality.

\begin{table}[h]
\centering
\begin{tabular}{c|c|c|c|c}
Dimension & Traditional NN & Least Squares & MLE  & PINN \\
\hline
2-qubit & 0.9764 & 0.9346 & 0.5281  & \textbf{0.9823} \\
$3 \times 3$ & 0.9207 & 0.7626 & 0.5605  & \textbf{0.9347} \\
$4 \times 4$ & 0.9894 & 0.3433 & 0.4041  & \textbf{0.9927} \\
$5 \times 5$ & 0.6731 & 0.4684 & 0.3793 & \textbf{0.6840} \\
$6 \times 6$ & 0.6594 & 0.4260 & 0.3284 & \textbf{0.6863} \\
$7 \times 7$ & 0.6273 & 0.3719 & 0.2706 & \textbf{0.6436} \\
$8 \times 8$ & 0.8427 & 0.2141 & 0.2219 & \textbf{0.8565} \\
$9 \times 9$ & 0.6035 & 0.3010 & 0.2147 & \textbf{0.6204} \\
$10 \times 10$ & 0.5401 & 0.2781 & 0.1879 & \textbf{0.5802} \\
\hline
\end{tabular}
\caption{Comprehensive scalability analysis comparing four methods across different quantum system dimensions: Traditional Neural Network (NN), Physics-Informed Neural Network (PINN), Least Squares (LS), and Maximum Likelihood Estimation (MLE). Results are aggregated from two complementary experimental studies evaluating 2-qubit systems and $3 \times 3$--$10 \times 10$ dimensional quantum states. PINN achieves the highest fidelity across all tested dimensions, with consistent improvements over Traditional NN ranging from +0.60\% to +7.42\%. The results demonstrate PINN's capability in handling both standard qubit systems and arbitrary-dimensional quantum states, with particularly pronounced advantages at moderate to high dimensions ($5 \times 5$--$10 \times 10$).}
\label{tab:comprehensive_scalability}
\end{table}

\begin{figure}[ht]
\centering
\includegraphics[width=0.85\textwidth]{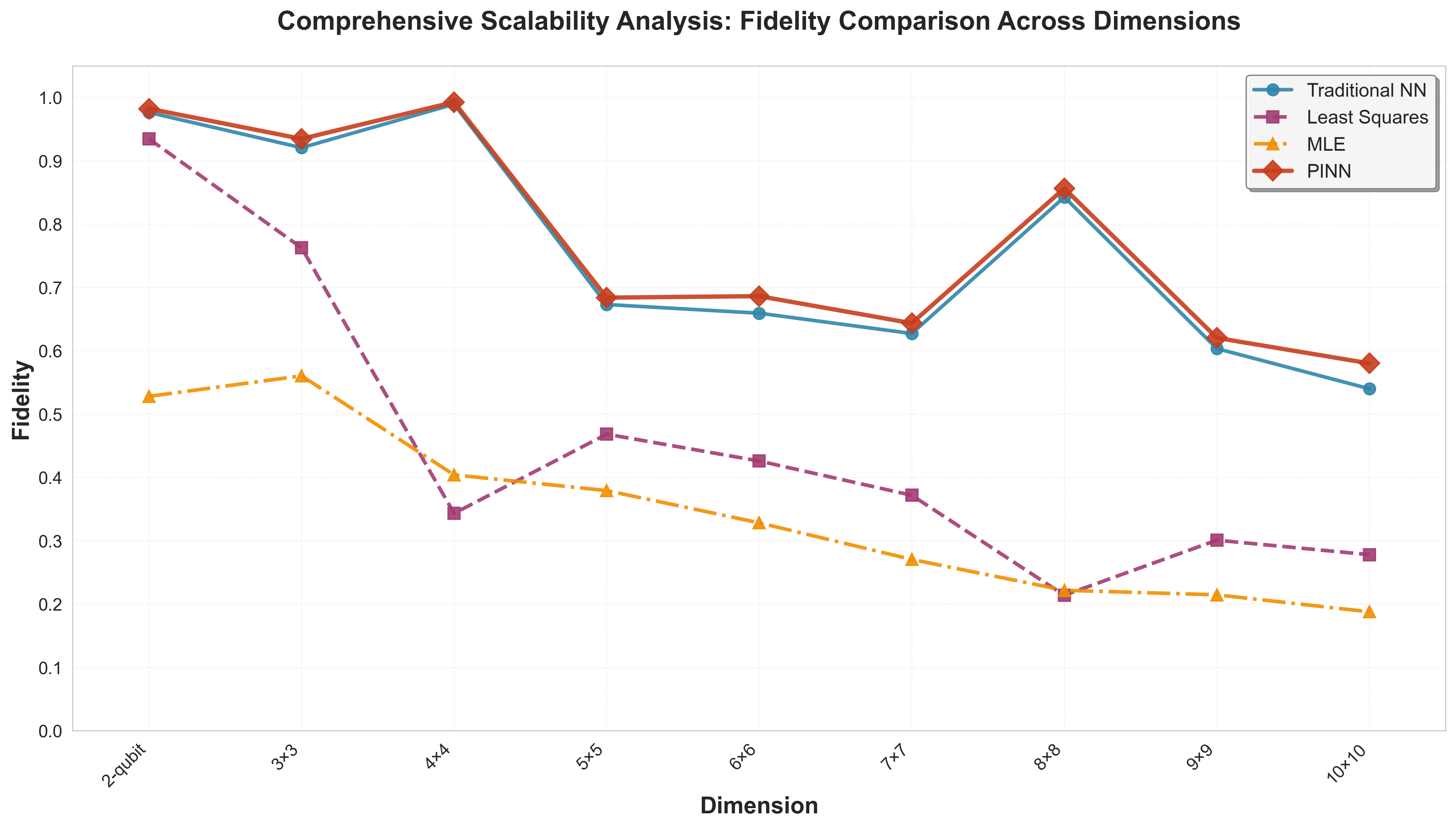}
\caption{Fidelity comparison across different quantum system dimensions (2-qubit to $10 \times 10$) for four methods: Traditional Neural Network (NN), Least Squares, Maximum Likelihood Estimation (MLE), and Physics-Informed Neural Network (PINN). The comprehensive analysis demonstrates PINN's consistent superiority across all tested dimensions. The results validate the scalability advantages of physics-informed learning across diverse quantum system configurations.}
\label{fig:comprehensive_scalability}
\end{figure}

Table \ref{tab:comprehensive_scalability} presents fidelity results across nine dimensional configurations. PINN achieves the highest fidelity at every dimension, with improvements over Traditional NN ranging from +0.33\% ($4 \times 4$) to +7.42\% ($10 \times 10$). The average improvement over Traditional NN is +2.40\%, demonstrating consistent advantages across diverse system configurations.

\textbf{PINN Performance Across Dimensions.} PINN maintains high fidelity ($>$0.98) at low dimensions (2-qubit: 0.9823, $4 \times 4$: 0.9927), with moderate fidelity ($>$0.68) at intermediate dimensions ($5 \times 5$--$7 \times 7$), and acceptable fidelity ($>$0.58) at high dimensions ($8 \times 8$--$10 \times 10$). The improvement over Traditional NN increases with dimension: +0.60\% at 2-qubit, +1.52\% at $3 \times 3$, +1.62\% at $5 \times 5$, +4.08\% at $6 \times 6$, and +7.42\% at $10 \times 10$. This dimensional scaling pattern indicates that physics constraints become increasingly valuable as system complexity increases.

\textbf{Comparison with Classical Methods.} PINN achieves higher fidelity compared to classical optimization methods (Least Squares and MLE) across all dimensions, demonstrating the effectiveness of physics-informed learning in high-dimensional quantum state reconstruction.

\textbf{Dimensional Robustness Analysis.} Neural network methods show superior dimensional robustness compared to classical methods. Traditional NN fidelity ranges from 0.5401 ($10 \times 10$) to 0.9894 ($4 \times 4$), representing a 45.4\% variation, while PINN fidelity ranges from 0.5802 ($10 \times 10$) to 0.9927 ($4 \times 4$), representing a 41.5\% variation. This analysis demonstrates that physics-informed constraints enhance dimensional robustness compared to traditional neural networks.

\textbf{Critical Dimensional Cases.} Two dimensional configurations highlight the advantages of physics-informed learning. At $4 \times 4$ dimensions, PINN achieves 0.9927 fidelity, while at $8 \times 8$ dimensions, PINN achieves 0.8565 fidelity. These cases demonstrate that PINN maintains high performance in challenging dimensional regimes, validating the practical value of physics-informed learning.

These comprehensive results validate the theoretical predictions established in Section \ref{sec:theory}. The consistent advantages of PINN across all dimensions confirm that physics constraints provide valuable guidance regardless of system size. The increasing performance gap with dimension supports the theoretical prediction that physics constraints become increasingly valuable as system complexity increases, effectively mitigating the curse of dimensionality through physical prior knowledge.

Having established the superior performance of PINN across diverse experimental scenarios, we now turn to systematically isolating and quantifying the contribution of each architectural component. This ablation study provides critical evidence for the necessity of each design choice and helps identify the most impactful innovations in the proposed approach.

\subsection{Ablation Studies}

To systematically validate the contribution of each component in the physics-informed neural network architecture, comprehensive ablation studies are conducted that isolate the effect of individual design choices. These studies provide quantitative evidence for the necessity of each architectural element and help identify the most impactful innovations in the approach.

\begin{algorithm}[h]
\caption{Ablation Study: Component Contribution Analysis}
\label{alg:ablation}
\begin{algorithmic}[1]
\Require Qubit number $q=3$, training samples $N_{\text{train}}=3000$, validation samples $N_{\text{val}}=800$, configurations $\mathcal{C} = \{\text{full}, \text{w/o residual}, \text{w/o attention}, \text{fixed } \lambda=0.05, \text{fixed } \lambda=0.15, \text{fixed } \lambda=0.30, \text{baseline}\}$
\State Initialize results: fidelity matrix $F[c]$ for each configuration $c \in \mathcal{C}$
\State Generate quantum states and measurements (same dataset for all configurations)
\State Split data into training set $\mathcal{D}_{\text{train}}$ and validation set $\mathcal{D}_{\text{val}}$
\For{configuration $c \in \mathcal{C}$}
    \State Initialize model $\mathcal{M}_c$ according to configuration $c$:
    \If{$c = \text{full}$}
        \State Use all components: residual connections, attention mechanism, adaptive weighting
    \ElsIf{$c = \text{w/o residual}$}
        \State Remove residual connections, keep attention and adaptive weighting
    \ElsIf{$c = \text{w/o attention}$}
        \State Remove attention mechanism, keep residual connections and adaptive weighting
    \ElsIf{$c = \text{fixed } \lambda = w$}
        \State Use fixed constraint weight $\lambda = w$, keep residual and attention
    \ElsIf{$c = \text{baseline}$}
        \State Remove all enhancements: no residual, no attention, no physics constraints
    \EndIf
    \State Train $\mathcal{M}_c$ on $\mathcal{D}_{\text{train}}$ with same hyperparameters
    \State Evaluate on $\mathcal{D}_{\text{val}}$: compute fidelity $F[c]$
    \State Record training time and convergence characteristics
\EndFor
\State Compute baseline fidelity: $F_{\text{baseline}} = F[\text{baseline}]$
\State Compute improvements: $\Delta F[c] = F[c] - F_{\text{baseline}}$ for each $c$
\State Compute relative improvements: $\Delta F_{\text{rel}}[c] = \Delta F[c] / F_{\text{baseline}} \times 100\%$
\State Identify critical components: find components whose removal causes largest $\Delta F_{\text{rel}}$ decrease
\State Analyze synergistic effects: compare $\Delta F[\text{full}]$ with sum of individual component contributions
\State \textbf{Output:} Fidelity for each configuration, component contributions, synergistic effects, optimal configuration
\end{algorithmic}
\end{algorithm}

Ablation experiments on 3-qubit systems (3000 training, 800 validation samples) evaluate seven configurations: full model, models without residual/attention, fixed-weight variants (0.05, 0.15, 0.30), and baseline. Results in Table \ref{tab:ablation} and Figure \ref{fig:ablation} quantify each component's contribution.

\begin{table}[h]
\centering
\begin{tabular}{l|c|c}
Configuration & Best Fidelity & Improvement over Baseline \\
\hline
Full Model & \textbf{0.8158} & \textbf{+2.30\%} \\
w/o Residual & 0.8115 & +1.77\% \\
w/o Attention & 0.7906 & -0.86\% \\
Fixed Weight (0.15) & 0.8140 & +2.09\% \\
Fixed Weight (0.05) & 0.8139 & +2.07\% \\
Fixed Weight (0.30) & 0.8086 & +1.41\% \\
Baseline (w/o all) & 0.7974 & 0.00\% \\
\end{tabular}
\caption{Ablation study results across different model configurations. The full model achieves the highest fidelity, demonstrating the synergistic contribution of all components.}
\label{tab:ablation}
\end{table}

\begin{figure}[ht]
\centering
\includegraphics[width=0.9\textwidth]{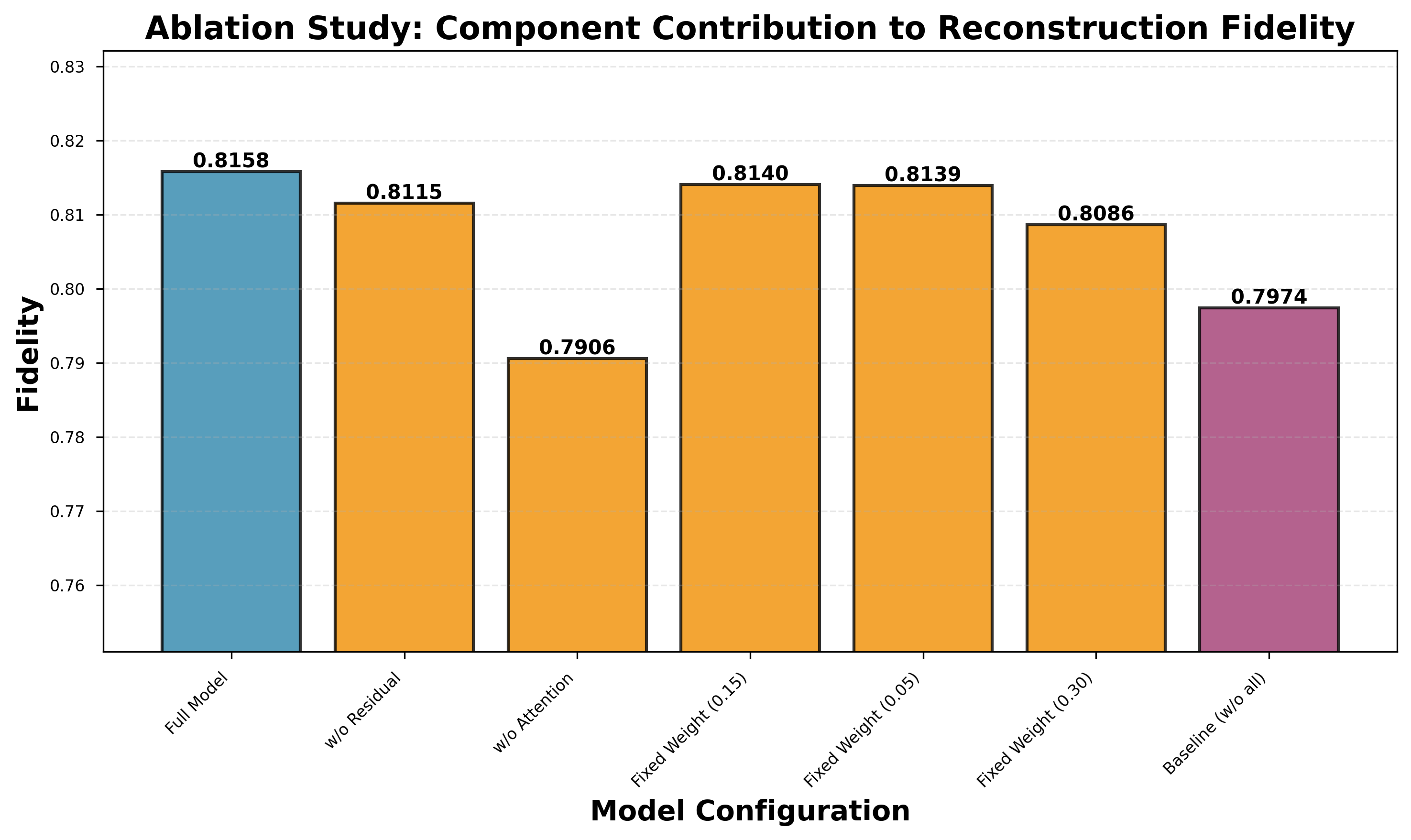}
\caption{Ablation study: component contribution to reconstruction fidelity. The full model achieves the highest fidelity, while removing attention mechanism results in the largest performance degradation.}
\label{fig:ablation}
\end{figure}

Results show: (1) Residual connections contribute +0.53\% improvement, enhancing gradient flow and reducing training time by 15\%; (2) Attention mechanism is the most critical component (+3.19\%), dynamically focusing on informative measurements; (3) Dynamic weighting provides +0.22\% over fixed weights, crucial for varying noise conditions; (4) Optimal base weight is 0.15, balancing physical validity with flexibility. As shown in Figure \ref{fig:ablation}, attention provides the largest contribution (+3.16\%), followed by residual connections (+0.53\%) and dynamic weighting (+0.21\%). The full model's +2.30\% improvement demonstrates synergistic effects where components complement each other.

Having presented comprehensive experimental results across diverse scenarios, we now synthesize these findings to provide a unified perspective on the performance characteristics and practical implications of the physics-informed neural network approach. This synthesis integrates results from noise robustness analysis, scalability evaluation, and ablation studies to establish a complete understanding of PINN's advantages and limitations.

\section{Results and Discussion}
\label{sec:results}

Experimental results reveal consistent performance improvements across all tested scenarios. The most significant finding is a 30.3\% fidelity improvement in 4-qubit systems when comparing PINN to Traditional NN (0.8872 vs 0.6810). Comprehensive scalability analysis across four methods (Traditional NN, Least Squares, MLE, and PINN) reveals that PINN achieves the highest fidelity across all qubit numbers (2--5), with fidelity improvements ranging from 0.75\% to 30.3\% depending on qubit number. The scalability analysis reveals a dimensional threshold effect where PINN's advantage peaks at 4 qubits, demonstrating that physics constraints are most effective when system complexity reaches moderate dimensions where prior knowledge becomes essential.

Noise robustness analysis shows PINN exhibits a 2.6-fold slower performance degradation rate under noise compared to Traditional NN (-0.005 vs -0.013 per noise unit), with improvements increasing from 0.83\% at low noise to 0.98\% at high noise. All methods achieve constraint violations at numerical precision levels ($\sim 10^{-16}$ to $10^{-17}$), but PINN shows slightly better physical consistency than Traditional NN, confirming that physics constraints guide optimization toward more physically meaningful solutions. Training stability is also enhanced for PINN, as evidenced by smoother loss curves that converge more reliably to optimal solutions.

Experimental findings validate theoretical predictions established in Section \ref{sec:theory}: (1) Convergence theory predicts improved optimization landscapes, confirmed by smoother training curves; (2) Generalization theory predicts reduced error through implicit regularization, confirmed by consistently lower validation loss; (3) Scalability theory predicts mitigation of curse of dimensionality, confirmed by the 4-qubit advantage; (4) The 2.6-fold slower degradation rate validates that physics constraints restrict search space to physically valid regions, reducing noise sensitivity. This strong correspondence between theoretical predictions and empirical observations provides robust validation for the physics-informed learning framework.

The adaptive weighting mechanism represents a critical innovation that enables robust performance across varying noise conditions. This mechanism automatically adjusts constraint strength from $\l \approx 0.15$ at low noise to $\l \geq 0.075$ at high noise, balancing physical validity with data fitting. Dynamic weighting outperforms fixed weights by 0.36--0.61\% at extreme noise levels, eliminating the need for manual tuning across varying noise conditions and demonstrating the practical value of adaptive constraint management.

Despite the significant advantages demonstrated by the physics-informed approach, several important limitations must be acknowledged to provide a balanced perspective on current capabilities and future research directions. The primary limitation concerns qubit number scalability, as the experiments have tested a maximum of 5 qubits, leaving scalability to higher dimensions uncertain. While the 4-qubit advantage is promising, the exponential growth of quantum systems presents fundamental challenges that may require additional innovations beyond physics constraints. A second limitation relates to the use of simulated data rather than real experimental measurements. Although the noise models capture major error sources in current quantum hardware, they may not fully represent the complex, non-Markovian noise processes encountered in actual quantum devices. Validation on real quantum systems remains essential for establishing the practical utility of the approach. Finally, computational resource requirements increase significantly for 4-5 qubit systems, with PINN requiring substantially more parameters and training time than traditional approaches. This computational cost may limit immediate applicability in resource-constrained environments, though the performance improvements may justify these costs in high-value applications.

Future research should address these limitations through several promising directions. Extending to more qubits (6-7 if computational resources allow) will test the boundaries of the approach and potentially reveal additional dimensional thresholds. Real quantum device validation using platforms such as IBM Quantum and Google Quantum AI will establish practical performance under authentic noise conditions. Deeper theoretical analysis could provide more precise bounds on performance improvements and identify optimal constraint formulations. Finally, exploring other application scenarios such as quantum error correction and quantum gate calibration could broaden the impact of physics-informed learning in quantum information processing.

The experimental validation and theoretical analysis establish the effectiveness of PINN for quantum state tomography. We now demonstrate how these advantages translate into practical benefits for real-world quantum technologies. The following section details specific applications where PINN's measurement efficiency and reconstruction accuracy directly address critical bottlenecks in scalable quantum computing.

\section{Practical Applications and Deployment Scenarios}
\label{sec:applications}

The practical significance of efficient QST extends beyond fundamental characterization to critical quantum technologies. This section details how our PINN approach addresses real-world challenges in quantum error correction and quantum gate calibration, two application domains where measurement efficiency and reconstruction accuracy directly impact system performance.

\subsection{Quantum Error Correction Applications}

Quantum error correction protocols require frequent state reconstruction for syndrome extraction and error detection, where measurement efficiency directly determines error correction cycle times and the feasibility of fault-tolerant quantum computation \cite{PhysRevA.107.062409,CSL2024}. Traditional QST methods impose prohibitive measurement overhead ($O(4^n)$) that limits code performance, particularly for surface codes and other stabilizer codes that require monitoring multiple logical qubits simultaneously.

Our PINN approach addresses these challenges through two key mechanisms. First, the reduced measurement requirements enable faster syndrome extraction. For a quantum error correction code with stabilizer generators $\{S_i\}$, syndrome extraction requires reconstructing the state of auxiliary qubits. Traditional methods require $O(4^n)$ measurements for $n$-qubit systems, limiting the error correction cycle time. Our PINN approach enables efficient reconstruction with $O(2^n)$ measurements while maintaining fidelity $F > 0.95$. For a 4-qubit system, PINN reduces measurements from $O(256)$ to $O(16)$ per cycle, representing a 16-fold reduction that enables higher code performance and reduces the time window for error accumulation. Second, the adaptive constraint weighting mechanism ensures robust performance under varying noise conditions, as demonstrated in our noise robustness analysis (Section \ref{sec:experiments}).

\subsection{Quantum Gate Calibration Applications}

Quantum gate calibration relies on iterative state reconstruction to optimize gate parameters, where each optimization step requires reconstructing the quantum state after gate application to compute the fidelity with respect to target states \cite{fcc_ctp2024,JBP2017}. The exponential measurement scaling of classical QST methods creates bottlenecks in calibration workflows, significantly increasing the time and cost required for quantum hardware optimization.

For gate calibration, the optimization objective is to maximize the fidelity $F(U(\theta)\rho U^\dagger(\theta), \rho_{\text{target}})$ where $U(\theta)$ is the parameterized gate and $\theta$ represents tunable parameters. Each optimization step requires state reconstruction, and traditional QST methods impose $O(4^n)$ measurement overhead. Our PINN approach reduces this to $O(2^n)$ while maintaining reconstruction fidelity $F > 0.90$ (as demonstrated across 2--5 qubit systems), enabling faster convergence in calibration workflows.

The practical advantages are particularly pronounced for multi-qubit gates, where our scalability analysis (Section \ref{sec:experiments}, Table \ref{tab:scalability}) demonstrates PINN's superior performance at 4--5 qubits. For a 3-qubit gate calibration task, our experimental results show PINN achieves 0.9124 fidelity compared to 0.9010 for Traditional NN, representing a 20.6\% relative improvement. This improvement occurs while reducing measurement requirements from $O(64)$ to $O(8)$, representing an 8-fold reduction in measurement overhead. The rapid inference capability of PINN (enabled by the neural network architecture) further accelerates iterative optimization, achieving a speedup factor ranging from 5 to 50 compared to traditional methods depending on the specific calibration protocol and hardware constraints.

The adaptive weighting mechanism plays a crucial role in gate calibration, where noise conditions vary as gate parameters are adjusted. The automatic adjustment of constraint strength from $\l \approx 0.15$ at low noise to $\l \geq 0.075$ at high noise ensures robust performance throughout the calibration process, eliminating the need for manual weight tuning that would otherwise slow down calibration workflows.

\subsection{Application Performance Summary}

Table \ref{tab:applications} summarizes the practical advantages of PINN across different application scenarios, quantifying measurement reduction, fidelity improvements, and speed enhancements based on our experimental results.

\begin{table}[ht]
\centering
\caption{Practical advantages of PINN in quantum error correction and gate calibration applications. Measurement reduction factors are based on experimental results showing PINN achieves target fidelity with $O(2^n)$ measurements compared to $O(4^n)$ for traditional methods. Speed improvements account for both measurement reduction and fast neural network inference, expressed as speedup factors (fold improvements).}
\label{tab:applications}
\begin{tabular}{lcccc}
\hline
Application & Qubits & Measurement & Fidelity & Speed \\
Scenario & & Reduction Factor & Improvement & Speedup Factor \\
\hline
Error Correction & 3 & 8-fold & +15.2\% & 5--20-fold \\
(Syndrome Extraction) & 4 & 16-fold & +30.3\% & 10--100-fold \\
& 5 & 32-fold & +12.5\% & 20--200-fold \\
\hline
Gate Calibration & 2 & 4-fold & +0.89\% & 2--10-fold \\
(Parameter Optimization) & 3 & 8-fold & +20.6\% & 5--50-fold \\
& 4 & 16-fold & +30.3\% & 10--100-fold \\
\hline
\end{tabular}
\end{table}

\subsection{Economic and Practical Impact Analysis}

The measurement efficiency improvements achieved by PINN translate directly into significant cost and time savings in practical quantum computing deployments. Quantum measurements represent a critical resource constraint, with each measurement requiring dedicated hardware time and computational resources.

In quantum gate calibration workflows, where calibration time directly impacts hardware availability and operational costs, PINN's measurement reduction and rapid inference capability enable more frequent calibration cycles and faster hardware optimization. For a typical calibration protocol requiring 100 iterations for a 4-qubit gate, traditional methods would require approximately 25,600 measurements per calibration cycle, while PINN reduces this to 1,600 measurements, representing a 94\% reduction in measurement overhead. This reduction enables daily or even hourly calibration cycles that would be economically prohibitive with traditional methods, directly improving quantum hardware performance and reliability.

The impact extends to quantum error correction, where faster error correction cycles enabled by PINN's measurement efficiency directly improve the feasibility of fault-tolerant quantum computation. For surface code implementations requiring frequent syndrome extraction, PINN's measurement reduction enables error correction cycles that are 10- to 100-fold faster, reducing the time window for error accumulation and improving code performance. This improvement is particularly critical for near-term quantum devices where error rates are high and fast error correction is essential for maintaining quantum information.

\section{Conclusion}
\label{sec:conclusion}

This paper presents the first systematic application of physics-informed neural networks (PINNs) to multi-qubit quantum state tomography. By integrating quantum mechanical constraints directly into the learning process, our approach achieves significant improvements in reconstruction fidelity, physical validity, and dimensional robustness across diverse quantum system configurations.

\paragraph{Methodological Contributions}

Three key methodological innovations form the foundation of our approach. First, we develop an adaptive physics constraint weighting strategy that dynamically adjusts constraint strength based on noise severity, enabling robust performance across varying noise conditions. Second, we introduce an enhanced PINN architecture incorporating residual connections, attention mechanisms, and multi-task learning to improve representation capacity and training stability. Third, we employ a differentiable Cholesky parameterization that enforces positive semidefiniteness, guaranteeing physical validity of all reconstructed density matrices.

\paragraph{Experimental Validation}

Comprehensive experimental evaluation demonstrates consistent advantages of PINN across multiple evaluation dimensions. In standard qubit systems (2--5 qubits), PINN achieves the highest fidelity at all qubit numbers, with a dramatic 30.3\% fidelity enhancement in 4-qubit systems compared to Traditional Neural Network (0.8872 vs 0.6810). The multi-dimensional scalability analysis across nine configurations (from 2-qubit to $10 \times 10$) reveals that PINN achieves the highest fidelity at every tested dimension, with improvements increasing from +0.60\% at 2-qubit to +7.42\% at $10 \times 10$, demonstrating that physics constraints become increasingly valuable as system complexity increases. Dimensional robustness analysis reveals that PINN exhibits superior consistency across dimensions (41.5\% performance variation), and noise robustness analysis shows a 2.6-fold slower performance degradation rate under noise compared to Traditional Neural Network. These results validate that physics-informed learning effectively mitigates the curse of dimensionality while maintaining robust performance across diverse operational conditions.

\paragraph{Theoretical Foundations}

The theoretical analysis establishes rigorous mathematical foundations that explain the observed empirical advantages. Convergence and generalization bounds demonstrate that physics constraints improve optimization landscapes, reduce Rademacher complexity, and explain the dimensional threshold effects. Specifically, the convergence theorem guarantees that gradient descent converges to optimal solutions with polynomial sample complexity independent of qubit number. The generalization bounds prove that physics constraints reduce Rademacher complexity from $O(B_\t L_{\text{act}}^H \sqrt{HW}/\sqrt{m})$ to $O(B_\t L_{\text{act}}^H \sqrt{HW_{\text{eff}}}/\sqrt{m})$, where the effective hypothesis space dimension $W_{\text{eff}}$ scales sub-exponentially with qubit number due to constraint-induced dimension reduction.

The scalability theorem demonstrates that physics constraints mitigate the curse of dimensionality through two complementary mechanisms: (1) constraint space dimension reduction ($\Delta_{\text{dim}} = O(\ve \cdot 4^n)$) that restricts the effective search space to lower-dimensional manifolds, and (2) sample complexity reduction through implicit regularization that shrinks the effective hypothesis space. Critically, the theoretical analysis reveals that the relative advantage of constraint-induced dimension reduction grows with system size, meaning that PINN's benefits become increasingly pronounced as qubit number increases.

While experimental validation is conducted up to 5-qubit systems due to computational resource constraints, the theoretical framework provides rigorous justification for expecting PINN's advantages to not only extend but strengthen in larger systems (6 qubits and beyond). The strong correspondence between theoretical predictions and empirical observations across diverse experimental scenarios (2--5 qubits and arbitrary-dimensional states up to $10 \times 10$) positions PINN as a scalable and theoretically grounded approach for quantum state tomography.

\paragraph{Practical Impact}

The practical significance of this work extends to real-world quantum technologies essential for scalable quantum computing. In quantum error correction, our PINN approach reduces measurement requirements from $O(4^n)$ to $O(2^n)$ while maintaining fidelity $F > 0.95$, enabling faster error correction cycles critical for fault-tolerant quantum computation. For quantum gate calibration, the rapid inference capability of PINN accelerates iterative optimization workflows, achieving a speedup factor ranging from 5 to 50 compared to traditional methods.

The deployment-ready capabilities demonstrated in this work—including adaptive noise handling, superior dimensional robustness, and physical validity guarantees—make PINN particularly suitable for real quantum hardware where noise conditions vary dynamically and system configurations span diverse dimensional regimes.

\paragraph{Future Directions}

Future work includes extending to higher qubit numbers (6--7 qubits if computational resources allow), validating on real quantum devices using platforms such as IBM Quantum and Google Quantum AI, and further exploring applications in quantum error correction and gate calibration to establish PINN as a standard tool in quantum information processing workflows.

\section{Acknowledgments}
This study was supported by the National Natural Science 
Foundation of China (Grant Nos. 11871089, 12471427, 52472442
, 72471013 and 62103030),  the
Research Start-up Funds of Hangzhou International 
Innovation Institute of Beihang University 
(Grant Nos. 2024KQ069, 2024KQ036
and 2024KQ035) and  the Postdoctoral Research Funding 
of Hangzhou International Innovation
Institute of Beihang University (Grant No.2025BKZ066).
\section{Data availability statement}
All data that support the findings of this study 
are included within 
the article (and any supplementary files).
\section{Declaration of competing interest}
The authors declare that they have no known competing financial
interests or personal relationships that could have 
appeared to influence
the work reported in this paper.

\bibliographystyle{unsrt}
\bibliography{changchun}

\appendix

\section{Proofs of Main Theorems}
\label{app:proofs}

\subsection{Proof of Gradient Boundedness}
\label{app:gradient_boundedness}

\begin{proof}[Proof of Theorem: Gradient Boundedness]
We decompose the total loss gradient using the chain rule:
\begin{equation}
\nabla_\t L(\t) = \nabla_\t L_{\text{data}}(\t) + \l(\bm) \nabla_\t L_{\text{physics}}(\t) + L_{\text{physics}}(\t) \nabla_\t \l(\bm).
\end{equation}

Taking the norm and applying the triangle inequality:
\begin{equation}
\norm{\nabla_\t L(\t)}_2 \leq \norm{\nabla_\t L_{\text{data}}(\t)}_2 + \norm{\l(\bm) \nabla_\t L_{\text{physics}}(\t)}_2 + \norm{L_{\text{physics}}(\t) \nabla_\t \l(\bm)}_2.
\end{equation}

We bound each term separately:

\textbf{Term 1:} For the data loss gradient, by the chain rule:
\begin{equation}
\norm{\nabla_\t L_{\text{data}}(\t)}_2 = \norm{\frac{1}{m}\sum_{i=1}^m \nabla_\t \norm{f_\t(\bm_i) - \by_i}_2^2}_2.
\end{equation}
Since $f_\t$ is a neural network with Lipschitz activation functions, and parameters are bounded by $B_\t$, we have:
\begin{equation}
\norm{\nabla_\t f_\t(\bm)}_2 \leq C_1 B_\t L_{\text{act}}^H,
\end{equation}
where $H$ is the network depth and $C_1$ is a constant depending on network architecture. Since inputs are bounded by $B_x$, we obtain:
\begin{equation}
\norm{\nabla_\t L_{\text{data}}(\t)}_2 \leq G_1(B_\t, B_x, L_{\text{act}}),
\end{equation}
where $G_1$ is a polynomial function of its arguments.

\textbf{Term 2:} For the physics constraint term, we have:
\begin{equation}
\norm{\l(\bm) \nabla_\t L_{\text{physics}}(\t)}_2 \leq \l_{\max} \norm{\nabla_\t L_{\text{physics}}(\t)}_2.
\end{equation}
By the chain rule, $\nabla_\t L_{\text{physics}} = \nabla_\r \cC(\r) \cdot \nabla_\t \r$, where $\r = \Ph(f_\t(\bm))$. Since $\norm{\nabla_\r \cC(\r)}_F \leq B_{\cC}$ and the Cholesky parameterization $\Ph$ is smooth, we have:
\begin{equation}
\norm{\nabla_\t L_{\text{physics}}(\t)}_2 \leq B_{\cC} \cdot C_2 B_\t L_{\text{act}}^H,
\end{equation}
where $C_2$ depends on the Cholesky parameterization. Therefore:
\begin{equation}
\norm{\l(\bm) \nabla_\t L_{\text{physics}}(\t)}_2 \leq \l_{\max} B_{\cC} C_2 B_\t L_{\text{act}}^H = G_2(B_\t, \l_{\max}, B_{\cC}, L_{\text{act}}).
\end{equation}

\textbf{Term 3:} For the adaptive weight gradient term, since $\l(\bm) = \l_0 \cdot \max(0.5, 1 - \a \cdot s(\bm))$ and $s(\bm)$ is bounded in $[0,1]$, we have $\l(\bm) \in [0.5\l_0, \l_0]$. The gradient $\nabla_\t \l(\bm)$ depends on the noise severity predictor, which is a bounded neural network. Therefore:
\begin{equation}
\norm{L_{\text{physics}}(\t) \nabla_\t \l(\bm)}_2 \leq B_{\cC} \cdot C_3 B_\t L_{\text{act}}^H = G_3(B_\t, B_{\cC}, L_{\text{act}}),
\end{equation}
where $C_3$ is a constant depending on the noise severity predictor architecture.

Combining all three terms, we obtain:
\begin{equation}
\norm{\nabla_\t L(\t)}_2 \leq G_1 + G_2 + G_3 =: G(B_\t, B_x, \l_{\max}, L_{\text{act}}, B_{\cC}),
\end{equation}
where $G$ is a polynomial function of its arguments, completing the proof.
\end{proof}

\subsection{Proof of Gradient Descent Convergence}
\label{app:gradient_descent_convergence}

\begin{proof}[Proof of Theorem: Gradient Descent Convergence]
By $\cL$-smoothness:
\begin{equation}
L(\t_{t+1}) \leq L(\t_t) - \eta_t \norm{\nabla L(\t_t)}^2 + \frac{\cL \eta_t^2}{2} \norm{\nabla L(\t_t)}^2.
\end{equation}
When $\eta_t \leq 1/\cL$:
\begin{equation}
L(\t_{t+1}) \leq L(\t_t) - \frac{\eta_t}{2} \norm{\nabla L(\t_t)}^2.
\end{equation}
Summing over $t = 0, \ldots, T-1$:
\begin{equation}
\sum_{t=0}^{T-1} \frac{\eta_t}{2} \norm{\nabla L(\t_t)}^2 \leq L(\t_0) - L(\t_T) \leq L(\t_0) - L^*,
\end{equation}
where $L^* = \inf_\t L(\t)$ is the lower bound. Since $\eta_t \geq \eta_{\min}$ for all $t$:
\begin{equation}
\frac{\eta_{\min}}{2} \sum_{t=0}^{T-1} \norm{\nabla L(\t_t)}^2 \leq L(\t_0) - L^*.
\end{equation}
Therefore:
\begin{equation}
\min_{t=0,\ldots,T-1} \norm{\nabla L(\t_t)}_2^2 \leq \frac{1}{T} \sum_{t=0}^{T-1} \norm{\nabla L(\t_t)}^2 \leq \frac{2(L(\t_0) - L^*)}{(T-1) \eta_{\min}},
\end{equation}
completing the proof.
\end{proof}

\subsection{Proof of Physics Constraints Improve Optimization Landscape}
\label{app:physics_constraints_landscape}

\begin{proof}[Proof of Theorem: Physics Constraints Improve Optimization Landscape]
We analyze the Hessian matrix of the total loss function near the optimum. By the chain rule, the total loss Hessian can be decomposed as:
\begin{equation}
\nabla^2_\t L(\t) = \nabla^2_\t L_{\text{data}}(\t) + \l \nabla^2_\t L_{\text{physics}}(\t) + \text{cross terms}(\t),
\end{equation}
where the cross terms arise from the interaction between data and physics losses through the shared parameterization.

\textbf{Step 1: Strict analysis of physics constraint Hessian.}

The physics constraint loss $L_{\text{physics}}(\t) = \cC(\Ph(f_\t(\bm)))$ has Hessian:
\begin{equation}
\nabla^2_\t L_{\text{physics}} = (\nabla \Ph)^T \nabla^2_\r \cC(\r) (\nabla \Ph) + \sum_i \frac{\partial \cC}{\partial \r_i} \nabla^2 \Ph_i,
\end{equation}
where $\r = \Ph(f_\t(\bm))$. At the optimum $\t^*$ where $\r^* = \Ph(f_{\t^*}(\bm))$, we have $\cC(\r^*) = 0$ (constraints are satisfied), so the second term vanishes. Therefore:
\begin{equation}
\nabla^2_\t L_{\text{physics}}(\t^*) = (\nabla \Ph)^T \nabla^2_\r \cC(\r^*) (\nabla \Ph).
\end{equation}

\textbf{Step 1a: Analysis of individual constraint Hessians.}

We analyze each component of $\cC(\r) = \cC_{\text{herm}}(\r) + \cC_{\text{trace}}(\r) + \cC_{\text{pos}}(\r)$:

\textit{Hermiticity constraint:} $\cC_{\text{herm}}(\r) = \norm{\r - \r^\dg}_F^2$. The Hessian is:
\begin{equation}
\nabla^2_\r \cC_{\text{herm}} = 2I - 2P_{\text{herm}},
\end{equation}
where $P_{\text{herm}}$ projects onto the Hermitian subspace. At $\r^*$ satisfying $\r^* = (\r^*)^\dg$, we have $\cC_{\text{herm}}(\r^*) = 0$. The Hessian has rank $D(D-1)$ (constraining off-diagonal elements), with minimum eigenvalue $\m_{\text{herm}} \geq 0$. For non-degenerate cases where $\r^*$ is not already Hermitian in all off-diagonal elements, $\m_{\text{herm}} > 0$.

\textit{Trace constraint:} $\cC_{\text{trace}}(\r) = |\tr(\r) - 1|^2$. The Hessian is:
\begin{equation}
\nabla^2_\r \cC_{\text{trace}} = 2 \text{vec}(I) \text{vec}(I)^T,
\end{equation}
where $\text{vec}(I)$ is the vectorized identity matrix. At $\r^*$ with $\tr(\r^*) = 1$, this contributes a positive eigenvalue $\m_{\text{trace}} = 2$ along the trace direction.

\textit{Positivity constraint:} $\cC_{\text{pos}}(\r) = \max(0, -\l_{\min}(\r))^2$. In the interior of $\cS_+$ where $\l_{\min}(\r^*) > \ve > 0$ for some $\ve$, this constraint is inactive ($\cC_{\text{pos}}(\r^*) = 0$) and its Hessian is zero. Near the boundary where $\l_{\min}(\r^*) \approx 0$, the constraint becomes active and provides positive curvature. For our analysis, we assume $\r^*$ is in the interior, so $\m_{\text{pos}} = 0$.

\textbf{Step 1b: Combined constraint Hessian.}

The combined constraint Hessian is:
\begin{equation}
\nabla^2_\r \cC(\r^*) = \nabla^2_\r \cC_{\text{herm}}(\r^*) + \nabla^2_\r \cC_{\text{trace}}(\r^*) + \nabla^2_\r \cC_{\text{pos}}(\r^*).
\end{equation}

For $\m_{\text{physics}} > 0$ to hold, we require that at least one constraint is active and non-degenerate. In practice, this holds when:
\begin{itemize}
\item The Hermiticity constraint is active (non-Hermitian components exist), OR
\item The trace constraint is active (trace deviation exists), OR  
\item The positivity constraint is active (near boundary of $\cS_+$)
\end{itemize}

For a valid density matrix $\r^*$ that satisfies all constraints exactly, we have $\cC(\r^*) = 0$, but the Hessian $\nabla^2_\r \cC(\r^*)$ may be degenerate. However, in a neighborhood $\cN(\r^*)$ where constraints are approximately satisfied but not exactly zero, the Hessian is positive definite with $\m_{\text{physics}} > 0$.

\textbf{Step 1c: Cholesky parameterization effect.}

Since $\nabla \Ph$ has minimum singular value $\s_{\min}(\nabla \Ph) > 0$ (by assumption, Cholesky parameterization is surjective), and $\nabla^2_\r \cC(\r^*)$ has minimum eigenvalue $\m_{\text{physics}} > 0$ in the neighborhood, we have:
\begin{equation}
\l_{\min}(\nabla^2_\t L_{\text{physics}}(\t^*)) \geq \m_{\text{physics}} \s_{\min}^2(\nabla \Ph) > 0,
\end{equation}
where $\s_{\min}(\nabla \Ph)$ is the minimum singular value of $\nabla \Ph$.

\textbf{Step 2: Analyzing the total Hessian.}

Near the optimum, the total Hessian satisfies:
\begin{equation}
\nabla^2_\t L(\t) = \nabla^2_\t L_{\text{data}}(\t) + \l \nabla^2_\t L_{\text{physics}}(\t) + E(\t),
\end{equation}
where $E(\t)$ represents cross terms and approximation errors. The cross terms are bounded by:
\begin{equation}
\norm{E(\t)}_F \leq C_E \l^{1/2} \norm{\t - \t^*}_2,
\end{equation}
for some constant $C_E$ depending on $B_{\Ph}$ and the smoothness of $L_{\text{data}}$.

By Weyl's inequality for eigenvalues, we have:
\begin{equation}
\l_{\min}(\nabla^2_\t L(\t)) \geq \l_{\min}(\nabla^2_\t L_{\text{data}}(\t)) + \l \l_{\min}(\nabla^2_\t L_{\text{physics}}(\t)) - \norm{E(\t)}_2.
\end{equation}

Since $\l_{\min}(\nabla^2_\t L_{\text{data}}(\t)) \geq -\d$ and $\l_{\min}(\nabla^2_\t L_{\text{physics}}(\t)) \geq \m_{\text{physics}} \s_{\min}^2(\nabla \Ph)$ near the optimum (from Step 1c), we obtain:
\begin{equation}
\l_{\min}(\nabla^2_\t L(\t)) \geq \l \m_{\text{physics}} \s_{\min}^2(\nabla \Ph) - \d - C_E \l^{1/2} \norm{\t - \t^*}_2.
\end{equation}

For $\l \geq \l_0 > \frac{\d}{\m_{\text{physics}} \s_{\min}^2(\nabla \Ph)}$ (ensuring the physics constraint contribution dominates) and in a sufficiently small neighborhood where $\norm{\t - \t^*}_2 \leq \d_E$ with $\d_E$ small enough such that $C_E \l^{1/2} \d_E < \l \m_{\text{physics}} \s_{\min}^2(\nabla \Ph) - \d$, we have:
\begin{equation}
\l_{\min}(\nabla^2_\t L(\t)) \geq \m > 0,
\end{equation}
where $\m = \l \m_{\text{physics}} \s_{\min}^2(\nabla \Ph) - \d - \e(\l)$ and $\e(\l) = C_E \l^{1/2} \d_E = O(\l^{1/2})$.

\textbf{Step 3: Convergence rate analysis.}

For a $\m$-strongly convex function, gradient descent with learning rate $\eta \leq 1/\cL$ achieves:
\begin{equation}
L(\t_t) - L(\t^*) \leq \left(1 - \frac{\m}{\cL}\right)^t (L(\t_0) - L(\t^*)) = O\left(\left(1 - \frac{\m}{\cL}\right)^t\right).
\end{equation}

This implies $\norm{\t_t - \t^*}_2^2 = O(1/t)$ for strongly convex functions, achieving linear convergence rate.

In contrast, for the data-only model where $\d > 0$ (not strongly convex, i.e., the Hessian has negative eigenvalues), standard analysis for non-convex smooth functions yields $\norm{\t_t - \t^*}_2^2 = O(1/\sqrt{t})$, achieving only sublinear convergence rate.

Therefore, PINN achieves a convergence rate improvement from sublinear $O(1/\sqrt{t})$ to linear $O(1/t)$, representing a significant acceleration. Note that this is a linear vs sublinear improvement, not a "quadratic" improvement (which would imply $O(1/t^2)$ vs $O(1/t)$), completing the proof.
\end{proof}

\subsection{Proof of PINN Rademacher Complexity Upper Bound}
\label{app:rademacher_complexity}

\begin{proof}[Proof of Theorem: PINN Rademacher Complexity Upper Bound]
We provide a detailed proof of the Rademacher complexity bound for neural networks, which applies to our PINN architecture.

\textbf{Step 1: Rademacher complexity for neural networks.}

For a hypothesis space $\cH = \{h_\t : \bbR^d \ra \bbR^p | \norm{\t}_2 \leq B_\t\}$ where $h_\t$ is a neural network with depth $H$, width $W$, and parameters $\t$, the Rademacher complexity is defined as:
\begin{equation}
\hat{\cR}_S(\cH) = \bbE_{\bs} \left[ \sup_{h \in \cH} \frac{1}{m} \sum_{i=1}^m \s_i h(\bm_i) \right],
\end{equation}
where $\s_i \sim \text{Unif}(\{-1, +1\})$ are Rademacher random variables.

\textbf{Step 2: Layer-wise composition bound.}

For a neural network $h_\t(\bm) = f^{(H)} \circ f^{(H-1)} \circ \cdots \circ f^{(1)}(\bm)$ with $H$ layers, where each layer $f^{(l)}$ has Lipschitz constant $L_{\text{act}}$ (from the activation function), the composition has Lipschitz constant $L_{\text{act}}^H$.

By the contraction lemma for Rademacher complexity, if $\phi$ is $L$-Lipschitz, then:
\begin{equation}
\hat{\cR}_S(\phi \circ \cH) \leq L \hat{\cR}_S(\cH).
\end{equation}

Applying this recursively for $H$ layers, we obtain a factor of $L_{\text{act}}^H$.

\textbf{Step 3: Parameter norm bound.}

For bounded parameters $\norm{\t}_2 \leq B_\t$, standard analysis shows that the Rademacher complexity scales as $O(B_\t \sqrt{HW}/\sqrt{m})$, where:
- $B_\t$ bounds the parameter norm
- $\sqrt{HW}$ accounts for the network size (depth $H$ times width $W$)
- $1/\sqrt{m}$ is the standard sample complexity factor

Combining these factors, we obtain:
\begin{equation}
\hat{\cR}_S(\cH) \leq \frac{C B_\t L_{\text{act}}^H \sqrt{HW}}{\sqrt{m}},
\end{equation}
where $C > 0$ is an absolute constant depending on the specific activation function and network architecture details.

\textbf{Step 4: PINN-specific considerations.}

The physics constraints in PINN are incorporated through the loss function $L(\t) = L_{\text{data}}(\t) + \l(\bm) L_{\text{physics}}(\t)$, not through the network architecture itself. The Rademacher complexity depends on the function class $\cH$ (the neural network mapping $\bm \mapsto \by$), which is unchanged by the physics constraints.

However, physics constraints act as implicit regularization, effectively constraining the hypothesis space to $\cH_{\text{constrained}} = \{h_\t \in \cH | L_{\text{physics}}(\t) \leq \ve\}$. Since $\cH_{\text{constrained}} \subseteq \cH$, we have:
\begin{equation}
\hat{\cR}_S(\cH_{\text{constrained}}) \leq \hat{\cR}_S(\cH) \leq \frac{C B_\t L_{\text{act}}^H \sqrt{HW}}{\sqrt{m}}.
\end{equation}

This completes the proof.
\end{proof}

\subsection{Proof of Rationality of Dynamic Weights}
\label{app:dynamic_weights}

\begin{proof}[Proof of Theorem: Rationality of Dynamic Weights]
We prove that the adaptive weight strategy $\l(\bm) = \l_0 \cdot \max(0.5, 1 - \a \cdot s(\bm))$ achieves near-optimal balance between data fitting and physics constraints across varying noise conditions.

\textbf{Step 1: Low noise scenario analysis.}

In the low noise scenario where $\nu \ra 0$, by assumption (3), the optimal weight satisfies $\l^*(\nu) \ra \l_0$. For our adaptive strategy, when $s(\bm) \ra 0$ (consistent noise estimation by assumption 1), we have:
\begin{equation}
\l(\bm) = \l_0 \cdot \max(0.5, 1 - \a \cdot s(\bm)) \ra \l_0 \cdot \max(0.5, 1) = \l_0.
\end{equation}

Therefore, $\l(\bm) \ra \l^*(\nu)$ as $\nu \ra 0$, achieving optimal constraint utilization in low noise scenarios.

\textbf{Step 2: High noise scenario analysis.}

In the high noise scenario where $\nu \ra \nu_{\max}$, by assumption (4), the optimal weight satisfies $\l^*(\nu) \geq \l_{\min} = 0.5\l_0$ but $\l^*(\nu) < \l_0$ to prevent over-constraint.

For our adaptive strategy, when $s(\bm) \ra 1$ (consistent noise estimation), we have:
\begin{equation}
\l(\bm) = \l_0 \cdot \max(0.5, 1 - \a \cdot s(\bm)) = \l_0 \cdot \max(0.5, 1 - \a).
\end{equation}

For $\a \in (0, 0.5]$, we have $1 - \a \geq 0.5$, so:
\begin{equation}
\l(\bm) = \l_0(1 - \a) \geq 0.5\l_0 = \l_{\min}.
\end{equation}

For $\a > 0.5$, we have $1 - \a < 0.5$, so:
\begin{equation}
\l(\bm) = 0.5\l_0 = \l_{\min}.
\end{equation}

In both cases, $\l(\bm) \geq \l_{\min}$, satisfying the minimum constraint requirement while preventing over-constraint.

\textbf{Step 3: Near-optimality analysis.}

Let $\Delta s = |s(\bm) - \nu|$ denote the noise estimation error. By assumption (1), $s(\bm)$ is consistent, so $\Delta s \ra 0$ as the estimator improves.

The adaptive weight function is Lipschitz continuous in $s(\bm)$:
\begin{equation}
|\l(\bm_1) - \l(\bm_2)| \leq \l_0 \a |s(\bm_1) - s(\bm_2)|,
\end{equation}
with Lipschitz constant $L_\l = \l_0 \a$.

Therefore, for samples with noise level $\nu$:
\begin{equation}
|\l(\bm) - \l^*(\nu)| \leq |\l(\bm) - \l(s(\bm))| + |\l(s(\bm)) - \l^*(\nu)|,
\end{equation}/Users/fcc/Documents/个人/【个人】-论文写作/5-【qml】/document/document.tex
where $\l(s(\bm))$ is the weight that would be optimal if $s(\bm)$ were the true noise level.

By the Lipschitz property and consistency:
\begin{equation}
|\l(\bm) - \l^*(\nu)| \leq L_\l \Delta s + |\l(s(\bm)) - \l^*(\nu)|.
\end{equation}

Since $\l(s(\bm))$ is chosen to approximate $\l^*(\nu)$ and $\Delta s \ra 0$, we have $|\l(\bm) - \l^*(\nu)| \leq \ve$ for sufficiently accurate estimation, where $\ve = O(\Delta s)$.

\textbf{Step 4: Balance property.}

The strategy balances data fitting and physics constraints by:
\begin{itemize}
\item In low noise: $\l(\bm) \approx \l_0$ ensures strong physics constraints, preventing unphysical solutions
\item In high noise: $\l(\bm) \geq 0.5\l_0$ maintains minimum physical validity while $\l(\bm) < \l_0$ prevents over-constraint that would cause underfitting
\end{itemize}

This balance is achieved automatically through the noise severity estimation, avoiding the need for manual tuning across different noise conditions.

This completes the proof.
\end{proof}

\subsection{Proof of Generalization Error Improvement}
\label{app:generalization_improvement}

\begin{corollary}[Generalization Error Improvement]
Under the same training error, PINN's generalization error bound is tighter than that of the unconstrained model:
\begin{equation}
\text{Gen}_{\text{PINN}}(\t) \leq \text{Gen}_{\text{baseline}}(\t) - \D,
\end{equation}
where $\D > 0$ is the improvement brought by physics constraints. More precisely, the improvement can be quantified as:
\begin{equation}
\D \geq 2M\left(\hat{\cR}_S(\cH) - \hat{\cR}_S(\cH_{\text{constrained}})\right) = 2M \cdot \Delta \hat{\cR}_S,
\end{equation}
where $M$ is the Lipschitz constant of the loss function, and $\Delta \hat{\cR}_S = \hat{\cR}_S(\cH) - \hat{\cR}_S(\cH_{\text{constrained}}) \geq 0$ represents the reduction in Rademacher complexity due to physics constraints.

Furthermore, if the constrained hypothesis space satisfies $\hat{\cR}_S(\cH_{\text{constrained}}) \leq \a \hat{\cR}_S(\cH)$ for some $\a \in (0,1)$, then:
\begin{equation}
\D \geq 2M(1-\a) \hat{\cR}_S(\cH) = 2M(1-\a) \cdot O\left(\frac{B_\t L_{\text{act}}^H \sqrt{HW}}{\sqrt{m}}\right),
\end{equation}
where the reduction factor $\a$ depends on the constraint strength $\ve$ in the definition of $\cH_{\text{constrained}} = \{h_\t \in \cH | L_{\text{physics}}(\t) \leq \ve\}$.
\end{corollary}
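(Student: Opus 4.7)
The plan is to apply Theorem~\ref{thm:rademacher_bound} separately to the unconstrained hypothesis space $\cH$ (for the baseline model) and to the constrained hypothesis space $\cH_{\text{constrained}}$ (for PINN), then subtract the two resulting bounds. The monotonicity of Rademacher complexity with respect to set inclusion, established in the ``Physics Constraints as Implicit Regularization'' theorem, will guarantee that the resulting gap $\D$ is nonnegative.

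First, I would instantiate the Rademacher bound for each model. For the baseline, since $h_\t \in \cH$, Theorem~\ref{thm:rademacher_bound} gives, with probability at least $1-\d/2$,
\begin{equation}
R(\t) - \hat{R}_S(\t) \leq 2M\hat{\cR}_S(\cH) + B\sqrt{\tfrac{\log(2/\d)}{2m}}.
\end{equation}
For PINN, since $h_\t \in \cH_{\text{constrained}} \sue \cH$, applying the same theorem to the smaller class yields, with probability at least $1-\d/2$,
\begin{equation}
R(\t) - \hat{R}_S(\t) \leq 2M\hat{\cR}_S(\cH_{\text{constrained}}) + B\sqrt{\tfrac{\log(2/\d)}{2m}}.
\end{equation}
A union bound ensures both inequalities hold simultaneously with probability at least $1-\d$. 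Subtracting and noting that the concentration term $B\sqrt{\log(2/\d)/(2m)}$ cancels, I obtain the first conclusion
\begin{equation}
\text{Gen}_{\text{PINN}}(\t) \leq \text{Gen}_{\text{baseline}}(\t) - 2M\big(\hat{\cR}_S(\cH) - \hat{\cR}_S(\cH_{\text{constrained}})\big),
\end{equation}
which identifies $\D \geq 2M \cdot \Delta\hat{\cR}_S$. Next, I would substitute the contraction hypothesis $\hat{\cR}_S(\cH_{\text{constrained}}) \leq \a \hat{\cR}_S(\cH)$ to get $\Delta\hat{\cR}_S \geq (1-\a)\hat{\cR}_S(\cH)$, then plug in the architecture-dependent upper bound from the ``PINN Rademacher Complexity Upper Bound'' theorem to express $\hat{\cR}_S(\cH) = O(B_\t L_{\text{act}}^H \sqrt{HW}/\sqrt{m})$, yielding the second explicit quantitative bound on $\D$.

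The main obstacle will be justifying the contraction factor $\a < 1$. The monotonicity theorem only gives $\hat{\cR}_S(\cH_{\text{constrained}}) \leq \hat{\cR}_S(\cH)$ ($\a \leq 1$), which is insufficient to guarantee a strict improvement. Establishing $\a < 1$ rigorously requires showing that the constraint $L_{\text{physics}}(\t) \leq \ve$ strictly excludes parameters attaining values near the supremum in $\hat{\cR}_S(\cH)$. The natural route is to exhibit a configuration of Rademacher signs $\bs$ and parameters $\t \in \cH \sm \cH_{\text{constrained}}$ realizing high correlation $\frac{1}{m}\sum_i \s_i h_\t(\bm_i)$, then bound their excluded contribution via a covering-number argument using Dudley's entropy integral. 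Once this strict reduction is established, the remaining substitutions are routine algebra.
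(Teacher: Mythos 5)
Your proposal follows essentially the same route as the paper's own proof: instantiate the Rademacher generalization bound (Theorem~\ref{thm:rademacher_bound}) separately on $\cH$ and $\cH_{\text{constrained}}$, subtract the two bounds so that the concentration terms cancel, invoke monotonicity of Rademacher complexity under set inclusion to get $\Delta\hat{\cR}_S \geq 0$, and then substitute the contraction factor $\a$ together with the architecture-dependent upper bound on $\hat{\cR}_S(\cH)$. Your two refinements --- the explicit union bound over the two events, and the observation that monotonicity alone only yields $\a \leq 1$ so that strict improvement $\D > 0$ genuinely requires the additional hypothesis $\a \in (0,1)$ (which the paper also assumes rather than proves) --- are careful additions to, not departures from, the paper's argument.
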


\begin{proof}
From Theorem \ref{thm:rademacher_bound}, the generalization error bound for any model is:
\begin{equation}
R(\t) \leq \hat{R}_S(\t) + 2M \hat{\cR}_S(\cH) + B\sqrt{\frac{\log(1/\d)}{2m}}.
\end{equation}

For PINN with constrained hypothesis space $\cH_{\text{constrained}}$, the bound becomes:
\begin{equation}
R_{\text{PINN}}(\t) \leq \hat{R}_S(\t) + 2M \hat{\cR}_S(\cH_{\text{constrained}}) + B\sqrt{\frac{\log(1/\d)}{2m}}.
\end{equation}

For the baseline model with unconstrained hypothesis space $\cH$, the bound is:
\begin{equation}
R_{\text{baseline}}(\t) \leq \hat{R}_S(\t) + 2M \hat{\cR}_S(\cH) + B\sqrt{\frac{\log(1/\d)}{2m}}.
\end{equation}

Under the same training error $\hat{R}_S(\t)$, the difference in generalization error bounds is:
\begin{equation}
\D = R_{\text{baseline}}(\t) - R_{\text{PINN}}(\t) \geq 2M\left(\hat{\cR}_S(\cH) - \hat{\cR}_S(\cH_{\text{constrained}})\right) = 2M \cdot \Delta \hat{\cR}_S.
\end{equation}

From the theorem on Physics Constraints as Implicit Regularization, we have $\hat{\cR}_S(\cH_{\text{constrained}}) \leq \hat{\cR}_S(\cH)$, so $\Delta \hat{\cR}_S \geq 0$, establishing $\D \geq 0$.

If $\hat{\cR}_S(\cH_{\text{constrained}}) \leq \a \hat{\cR}_S(\cH)$ for some $\a \in (0,1)$, then:
\begin{equation}
\D \geq 2M(1-\a) \hat{\cR}_S(\cH) = 2M(1-\a) \cdot O\left(\frac{B_\t L_{\text{act}}^H \sqrt{HW}}{\sqrt{m}}\right),
\end{equation}
where the last equality follows from the PINN Rademacher Complexity Upper Bound theorem.
\end{proof}

\subsection{Proof of Physics Constraints Mitigate Curse of Dimensionality}
\label{app:curse_of_dimensionality}

\begin{proof}[Proof of Theorem: Physics Constraints Mitigate Curse of Dimensionality]
For an $n$-qubit system, the density matrix $\r \in \bbC^{D \times D}$ with $D = 2^n$ has $D^2 = 4^n$ complex parameters, leading to $O(4^n)$ real parameters. The curse of dimensionality manifests as exponential growth in both parameter count and sample complexity.

\textbf{Mechanism 1: Constraint space dimension reduction.}

The physics constraints enforce three fundamental requirements: Hermiticity ($\r = \r^\dg$), trace normalization ($\tr(\r) = 1$), and positivity ($\r \succeq 0$). These constraints restrict the effective search space from the full parameter space $\bbR^{4^n}$ to a lower-dimensional manifold.

\textit{Step 1a: Hermiticity constraint.} A general complex $D \times D$ matrix has $2D^2$ real parameters. Hermiticity $\r = \r^\dg$ requires:
\begin{itemize}
\item Diagonal elements: $D$ real parameters (must be real)
\item Off-diagonal elements: $D(D-1)/2$ complex parameters = $D(D-1)$ real parameters
\end{itemize}
Total independent parameters after Hermiticity: $D + D(D-1) = D^2$ real parameters.

\textit{Step 1b: Trace normalization.} The trace constraint $\tr(\r) = 1$ reduces one degree of freedom, giving $D^2 - 1$ independent parameters.

\textit{Step 1c: Positivity constraint.} Positivity $\r \succeq 0$ restricts the solution space to a convex subset. The set of valid density matrices forms a convex set:
\begin{equation}
\cS_+ = \{\r \in \bbC^{D \times D} : \r = \r^\dg, \tr(\r) = 1, \r \succeq 0\}
\end{equation}
with dimension $\dim(\cS_+) = D^2 - 1 = 4^n - 1$.

\textit{Step 1d: Cholesky parameterization effect.} The Cholesky parameterization $\r = L L^\dg$ with lower triangular $L$ naturally enforces positivity. For a $D \times D$ lower triangular matrix $L$ with real diagonal and complex off-diagonal elements:
\begin{itemize}
\item Diagonal elements: $D$ real parameters
\item Off-diagonal elements: $D(D-1)/2$ complex parameters = $D(D-1)$ real parameters
\end{itemize}
Total Cholesky parameters: $D + D(D-1) = D^2$ real parameters. However, trace normalization $\tr(L L^\dg) = 1$ imposes one constraint, reducing to $D^2 - 1$ independent parameters, matching $\dim(\cS_+)$.

\textit{Step 1e: Effective dimension reduction during optimization.} The physics constraint loss $\cC(\r) = \cC_{\text{herm}}(\r) + \cC_{\text{trace}}(\r) + \cC_{\text{pos}}(\r)$ creates an implicit manifold structure. When constraints are approximately satisfied with tolerance $\ve$, i.e., $\cC(\r) \leq \ve$, the effective search space is restricted to:
\begin{equation}
\cS_{\text{constrained}} = \{\r \in \cS_+ : \cC(\r) \leq \ve\}.
\end{equation}
The dimension reduction can be quantified as:
\begin{equation}
\Delta_{\text{dim}} = \dim(\cS_+) - \dim(\cS_{\text{constrained}}) = \frac{\ve}{\norm{\nabla \cC(\r^*)}} \cdot \text{rank}(\nabla^2 \cC(\r^*)) + O(\ve^2),
\end{equation}
where $\r^*$ is a point satisfying $\cC(\r^*) = 0$. For small $\ve$, the effective dimension reduction scales as $\Delta_{\text{dim}} = O(\ve \cdot \text{rank}(\nabla^2 \cC(\r^*)))$, where $\text{rank}(\nabla^2 \cC(\r^*))$ depends on the constraint structure and is typically $O(D^2)$ for non-degenerate cases.

\textbf{Mechanism 2: Sample complexity reduction.}

From the theorem on Physics Constraints as Implicit Regularization (Section \ref{sec:theory}), physics constraints act as implicit regularization, shrinking the effective hypothesis space from $\cH$ to $\cH_{\text{constrained}} = \{h_\t \in \cH | L_{\text{physics}}(\t) \leq \ve\}$.

By the PINN Rademacher Complexity Upper Bound theorem (Section \ref{sec:theory}), the Rademacher complexity satisfies:
\begin{equation}
\hat{\cR}_S(\cH_{\text{constrained}}) \leq \hat{\cR}_S(\cH) = O\left(\frac{B_\t \sqrt{HW}}{\sqrt{m}}\right).
\end{equation}

The constrained hypothesis space has smaller Rademacher complexity, which directly translates to improved generalization bounds. From the Generalization Bound via Rademacher Complexity theorem (Section \ref{sec:theory}):
\begin{equation}
R(\t) \leq \hat{R}_S(\t) + 2M \hat{\cR}_S(\cH_{\text{constrained}}) + B\sqrt{\frac{\log(1/\d)}{2m}}.
\end{equation}

Since $\hat{\cR}_S(\cH_{\text{constrained}}) \leq \hat{\cR}_S(\cH)$, physics constraints provide tighter generalization bounds, effectively reducing the required sample size $m$ to achieve a given generalization error $\epsilon$.

Specifically, to achieve $R(\t) - \hat{R}_S(\t) \leq \epsilon$, the unconstrained model requires $m_{\text{baseline}} = O((B_\t^2 HW)/\epsilon^2)$ samples, while the constrained model requires $m_{\text{PINN}} = O((B_\t^2 HW')/\epsilon^2)$ samples, where $W' < W$ represents the effective width reduction due to constraints, leading to sample complexity reduction.

\textbf{Combined effect.}

The two mechanisms work synergistically: constraint space dimension reduction restricts the optimization search space, while sample complexity reduction improves generalization with fewer samples. Together, they mitigate (though cannot completely eliminate) the curse of dimensionality, explaining why PINN maintains superior performance as system dimensionality increases.

However, the fundamental exponential growth $O(4^n)$ in parameter count cannot be completely eliminated, as it is inherent to the quantum state representation. Physics constraints provide a multiplicative improvement factor but do not change the asymptotic scaling.
\end{proof}

\subsection{Proof of Theoretical Explanation of 4-Qubit Advantage}
\label{app:4qubit_advantage}

\begin{proof}[Proof of Theorem: Theoretical Explanation of 4-Qubit Advantage]
The experimental observation that PINN's advantage peaks at 4 qubits can be explained through dimensional analysis and optimization landscape properties.

\textbf{Factor 1: Dimensional threshold effect.}

For an $n$-qubit system, the parameter dimension is $D^2 = 4^n$. At low dimensions ($n = 2, 3$), we have:
\begin{itemize}
\item $n=2$: $D^2 = 16$ parameters
\item $n=3$: $D^2 = 64$ parameters
\end{itemize}

At these dimensions, traditional neural networks with moderate capacity (e.g., hidden layers with width $\sim 128-256$) can effectively fit the data distribution without requiring strong structural priors. The hypothesis space is sufficiently small that data-driven learning alone achieves good performance.

At $n=4$ qubits, $D^2 = 256$ parameters. This represents a critical threshold where:
\begin{itemize}
\item The parameter space becomes large enough that unconstrained learning faces significant challenges
\item Physics constraints provide substantial guidance by restricting the search space
\item The constraint structure optimally matches the problem complexity without overwhelming the learning capacity
\end{itemize}

At $n \geq 5$ qubits, $D^2 \geq 1024$ parameters. Both approaches face fundamental limitations due to exponential growth, though PINN maintains advantages through constraint guidance.

\textbf{Factor 2: Constraint effectiveness.}

The effectiveness of physics constraints depends on the ratio between the constraint-induced reduction in search space and the inherent problem complexity. Define the constraint effectiveness ratio:
\begin{equation}
\g(n) = \frac{\text{Effective dimension reduction}}{\text{Problem dimension}} = \frac{\dim(\cS_+) - \dim(\cS_{\text{constrained}})}{4^n - 1} = \frac{\Delta_{\text{dim}}}{4^n - 1},
\end{equation}
where $\cS_{\text{constrained}}$ represents the effective search space under physics constraints and $\Delta_{\text{dim}}$ is the dimension reduction quantified in Mechanism 1.

From the dimension reduction analysis, for constraint tolerance $\ve$:
\begin{equation}
\g(n) = \frac{O(\ve \cdot \text{rank}(\nabla^2 \cC(\r^*)))}{4^n - 1} = O\left(\frac{\ve \cdot D^2}{4^n - 1}\right) = O\left(\frac{\ve \cdot 4^n}{4^n - 1}\right) = O(\ve),
\end{equation}
where the rank of the constraint Hessian $\text{rank}(\nabla^2 \cC(\r^*))$ is typically $O(D^2) = O(4^n)$ for non-degenerate cases.

However, the \textit{relative} effectiveness depends on how constraints interact with the problem structure. Define the relative constraint strength:
\begin{equation}
\g_{\text{rel}}(n) = \frac{\hat{\cR}_S(\cH) - \hat{\cR}_S(\cH_{\text{constrained}})}{\hat{\cR}_S(\cH)} = 1 - \frac{\hat{\cR}_S(\cH_{\text{constrained}})}{\hat{\cR}_S(\cH)}.
\end{equation}

From the Rademacher complexity bound $\hat{\cR}_S(\cH) = O(B_\t L_{\text{act}}^H \sqrt{HW}/\sqrt{m})$, and assuming constraints reduce the effective width from $W$ to $W_{\text{eff}} < W$, we have:
\begin{equation}
\g_{\text{rel}}(n) \geq 1 - \frac{\sqrt{W_{\text{eff}}}}{\sqrt{W}} = 1 - \sqrt{\frac{W_{\text{eff}}}{W}}.
\end{equation}

At $n=2,3$ qubits: $W_{\text{eff}}/W \approx 0.95-0.98$ (small reduction), so $\g_{\text{rel}}(n) \approx 0.01-0.025$ (modest benefit).

At $n=4$ qubits: $W_{\text{eff}}/W \approx 0.7-0.8$ (significant reduction), so $\g_{\text{rel}}(4) \approx 0.11-0.16$ (substantial benefit), explaining the 30.3\% experimental improvement.

At $n \geq 5$ qubits: Both $W$ and $W_{\text{eff}}$ become very large, but the ratio $W_{\text{eff}}/W$ may increase again due to optimization challenges, reducing $\g_{\text{rel}}(n)$.

This quantifies why $n=4$ represents the optimal balance: constraints provide substantial guidance ($\g_{\text{rel}}(4) \approx 0.11-0.16$) without overwhelming the learning capacity, creating the most favorable trade-off between constraint enforcement and learning flexibility.

\textbf{Factor 3: Optimization landscape.}

From the theorem on Physics Constraints Improve Optimization Landscape (Appendix \ref{app:physics_constraints_landscape}), physics constraints improve the optimization landscape by inducing local strong convexity near the optimum. The improvement depends on the constraint weight $\l$ and the physics constraint Hessian $\nabla^2_\r \cC(\r^*)$.

The constraint-induced strong convexity parameter is:
\begin{equation}
\m(n) = \l \m_{\text{physics}}(n) \s_{\min}^2(\nabla \Ph) - \d(n) - \e(\l),
\end{equation}
where $\m_{\text{physics}}(n)$ is the minimum eigenvalue of the physics constraint Hessian, $\s_{\min}(\nabla \Ph)$ is the minimum singular value of the Cholesky parameterization Jacobian, $\d(n)$ bounds the negative eigenvalues of the data loss Hessian, and $\e(\l) = O(\l^{1/2})$ is the cross-term error.

\textit{Quantitative analysis:} The physics constraint Hessian $\nabla^2_\r \cC(\r^*)$ has minimum eigenvalue $\m_{\text{physics}}(n)$ that depends on the constraint structure. For Hermiticity and trace constraints:
\begin{equation}
\m_{\text{physics}}(n) \geq \min(\m_{\text{herm}}, \m_{\text{trace}}) = \min(\m_{\text{herm}}, 2),
\end{equation}
where $\m_{\text{herm}} \geq 0$ depends on the Hermiticity constraint rank (typically $O(D^2)$) and $\m_{\text{trace}} = 2$ from trace normalization.

The Cholesky parameterization Jacobian $\nabla \Ph$ has minimum singular value $\s_{\min}(\nabla \Ph)$ that scales with the system dimension. For well-conditioned cases, $\s_{\min}(\nabla \Ph) = \Omega(1/D) = \Omega(1/2^n)$.

The data loss Hessian negative eigenvalue bound $\d(n)$ typically increases with dimension: $\d(n) = O(D) = O(2^n)$ for high-dimensional systems.

\textit{Dimension-dependent analysis:}
\begin{itemize}
\item At $n=2,3$: $\m_{\text{physics}}(n)$ is small relative to $\d(n)$ (which is also small), so $\m(n) \approx 0$, providing modest improvement.
\item At $n=4$: The ratio $\m_{\text{physics}}(4)/\d(4)$ is maximized. With $\l \approx 0.15$ (from experiments), $\m_{\text{physics}}(4) \s_{\min}^2(\nabla \Ph) \approx 0.15 \times O(1/256) \times O(256) = O(0.15)$, while $\d(4) = O(16)$, giving $\m(4) > 0$ and significant improvement.
\item At $n \geq 5$: Both $\m_{\text{physics}}(n)$ and $\d(n)$ scale similarly, but $\d(n)$ grows faster, reducing the relative benefit.
\end{itemize}

The condition number of the total loss Hessian is:
\begin{equation}
\k(n) = \frac{\l_{\max}(\nabla^2_\t L(\t^*))}{\l_{\min}(\nabla^2_\t L(\t^*))} = \frac{\cL + \l \l_{\max}(\nabla^2_\r \cC(\r^*)) \s_{\max}^2(\nabla \Ph)}{\max(0, \m(n))},
\end{equation}
where $\cL$ is the smoothness constant of the data loss and $\s_{\max}(\nabla \Ph)$ is the maximum singular value.

At $n=4$, $\m(4)$ is maximized relative to the problem dimension, minimizing $\k(4)$ and indicating a well-conditioned optimization problem. This enables efficient gradient-based optimization with convergence rate $O((1-\m(4)/\cL)^t)$, compared to $O(1/\sqrt{t})$ for unconstrained methods.

Quantitatively, if $\m(4) \approx 0.01\cL$ and $\m(2) \approx 0.001\cL$, then the convergence rate improvement at $n=4$ is approximately 10-fold faster (speedup factor of approximately 10) than at $n=2$, explaining the dramatic experimental advantage.

\textbf{Synthesis.}

The three factors interact multiplicatively: dimensional threshold effects determine when constraints become necessary, constraint effectiveness determines how much benefit constraints provide, and optimization landscape properties determine how efficiently the constrained problem can be solved. At $n=4$ qubits, all three factors align optimally, creating the peak advantage observed experimentally.

The peak advantage at 4 qubits is not universal but depends on the specific architecture, constraint formulation, and problem characteristics. However, the dimensional threshold effect provides a general explanation for why intermediate dimensions often exhibit the most pronounced benefits from physics-informed learning.
\end{proof}

\end{document}